\newif\iflong
\newif\ifshort
\tikzstyle{vertex}=[circle, draw, inner sep=0pt, minimum size=4pt,outer sep = 1pt]
\newcommand{\vertex}{\node[vertex,fill]}
\newcommand{\cc}[1]{{\mbox{\textnormal{\textsf{#1}}}}\xspace} 
\newcommand{\NP}{\cc{NP}}
\newcommand{\FPT}{\cc{FPT}}
\newcommand{\W}[1]{\ensuremath {\cc{W}[#1]}}
\newcommand{\Nat}{\mathbb{N}}
\newcommand{\bigoh}{\mathcal{O}}
\newcommand{\XXX}{\mathcal{X}}
\title{Slim Tree-Cut Width}
\author{Robert Ganian}{Algorithms and Complexity Group, TU Wien, Vienna, Austria}{rganian@ac.tuwien.ac.at}{0000-0002-7762-8045}{Robert Ganian acknowledges support by the Austrian Science Fund (FWF, projects Y1329 and P31336).}
\author{Viktoriia Korchemna}{Algorithms and Complexity Group, TU Wien, Vienna, Austria}{vkorchemna@ac.tuwien.ac.at}{}{Viktoriia Korchemna acknowledges support by the Austrian Science Fund (FWF, project Y1329).}
\authorrunning{R.~Ganian, V.~Korchemna}
\keywords{tree-cut width, structural parameters, graph immersions}
\newcommand{\width}{edge-cut width}
\newcommand{\supwidth}{super edge-cut width}
\newcommand{\stcw}{\ensuremath{\operatorname{stcw}}}
\newcommand{\gtcw}{\ensuremath{\operatorname{tcw}_0}}
\newcommand{\ecw}{\widthshort}
\newcommand{\fen}{\operatorname{fen}}
\newcommand{\widthshort}{\operatorname{ecw}}
\newcommand {\supwidthshort}{\operatorname{sec}}
\newcommand{\PP}{P}
\newcommand{\adh}{\operatorname{adh}}
\newcommand{\tcw}{\ensuremath{\operatorname{tcw}}}
\newcommand{\tor}{\operatorname{tor}}
\newcommand{\tw}{\operatorname{tw}}
\newcommand{\degtw}{\ensuremath{\operatorname{degtw}}}
\newcommand{\loc}{\operatorname{loc}}
\begin{document}
\maketitle
\begin{abstract}
Tree-cut width is a parameter that has been introduced as an attempt to obtain an analogue of treewidth for edge cuts. Unfortunately, in spite of its desirable structural properties, it turned out that tree-cut width falls short as an edge-cut based alternative to treewidth in algorithmic aspects. This has led to the very recent introduction of a simple edge-based parameter called edge-cut width [WG 2022], which has precisely the algorithmic applications one would expect from an analogue of treewidth for edge cuts, but does not have the desired structural properties.

In this paper, we study a variant of tree-cut width obtained by changing the threshold for so-called thin nodes in tree-cut decompositions from $2$ to $1$. We show that this ``slim tree-cut width'' satisfies all the requirements of an edge-cut based analogue of treewidth, both structural and algorithmic, while being less restrictive than edge-cut width. 
Our results also include an alternative characterization of slim tree-cut width via an easy-to-use spanning-tree decomposition akin to the one used for edge-cut width, a characterization of slim tree-cut width in terms of forbidden immersions as well as approximation algorithm for computing the parameter.


\end{abstract}

\section{Introduction}
Understanding which structural properties of inputs allow us to overcome the inherent intractability of problems of interest is a fundamental research area in computer science. In the context of parameterized complexity, one typically approaches this by asking which structural parameters of the input (or its graph representation) give rise to a fixed-parameter algorithm for a targeted problem. Treewidth~\cite{RobertsonSeymour86} is the most prominent example of such a structural parameter, and can be viewed as a guarantee that a graph is iteratively decomposable along small vertex separators. Many problems are known to be fixed-parameter tractable when parameterized by treewidth---and for those that are not, there is a well-studied hierarchy of more restrictive\footnote{We view parameter $\alpha$ as being more restrictive than parameter $\beta$ if every graph class where $\alpha$ is bounded also has bounded $\beta$, but the opposite does not hold.} parameters based on vertex separators or vertex deletion that can sometimes be used instead (see, e.g., Figure~1 in~\cite{BodlaenderJK13}). Examples of such parameters include the vertex cover number~\cite{FellowsLokshtanovMisraRS08,Ganian15}, the feedback vertex number~\cite{JansenB13,BergougnouxEGOR21} and treedepth~\cite{NesetrilOssonademendez12,GutinJW16,GanianO18,NederlofPSW20}. 

However, such vertex based parameters seem ill suited for handling some problems. Consider, for instance, the classical \textsc{Edge Disjoint Paths} problem (\textsc{EDP}): unlike \textsc{Vertex Disjoint Paths}, \textsc{EDP} remains \NP-hard not only on graphs of bounded treewidth, but even on graphs with a vertex cover number of at most $3$~\cite{FleszarMS18}. While this effectively rules out the use of all parameters based on vertex separators, there is an intuitive expectation that \textsc{EDP} should be fixed-parameter tractable w.r.t.\ parameters that can guarantee an iterative decomposition of the graph along small edge cuts. Indeed, \textsc{EDP} is known to be fixed-parameter tractable w.r.t.\ two basic parameterizations which provide such a guarantee: the feedback edge number~\cite{GanianO21} and treewidth plus maximum degree~\cite{GanianOR21}.

An ideal solution for handling such problems on more general inputs would be to use an alternative to treewidth that would be designed around edge cuts rather than vertex separators, one which would provide a unified justification for tractability w.r.t.\ the two basic ``edge-cut restricting'' parameterizations mentioned above. A candidate for such a parameter was proposed by Wollan, who defined \emph{tree-cut width} along with \emph{tree-cut decompositions} and described these as a variation of tree decompositions based on edge cuts instead of vertex separators~\cite{Wollan15}.
But while it is true that ``tree-cut decompositions share many of the natural properties of tree decompositions''~\cite{MarxWollan14}, from the perspective of algorithmic design tree-cut width seems to behave differently than an edge-cut based alternative to treewidth. Indeed, not only does it fall short of yielding a fixed-parameter algorithm for \textsc{EDP}~\cite{GanianO21}, it also fails to provide such algorithms for other problems one we would expect to be fixed-parameter tractable w.r.t.\ an edge-cut based analogue to treewidth. In fact, out of twelve such problems where a tree-cut width parameterization has been pursued so far, only four are fixed-parameter tractable~\cite{Ganian0S15,GanianKO21} while eight turn out to be \W{1}-hard~\cite{Ganian0S15,GozupekOPSS17,BredereckHKN19,GanianO21,GanianKorchemna21} (see the Related Work at the end of the Introduction for details). 

Very recently, Brand, Ceylan, Ganian, Hatschka and Korchemna~\cite{ECW2022} introduced a parameter called \emph{edge-cut width} which aimed at filling this gap in our understanding of edge-cut based graph parameters. On the algorithmic side, edge-cut width has precisely the properties one could hope to see in an edge-based analogue to treewidth: not only does it yield fixed-parameter algorithms for all twelve ``candidate'' problems~\cite{ECW2022}, but it is also based on a very simple type of decomposition that is much easier to use than tree-cut decompositions. That being said, already the authors of that paper noted that the structural properties of edge-cut width are far from ideal---for instance, it is the only algorithmically used parameter we are aware of that is not closed under vertex deletion. Moreover, while edge-cut width is less restrictive than the feedback edge number, unlike tree-cut width it is incomparable to treewidth plus maximum degree (even in an asymptotic sense). Because of this, it cannot act as a common generalization that would capture both of these basic approaches of enforcing decomposability along small edge cuts.

\smallskip

\noindent \textbf{Contribution.}\quad
In this paper, we identify a graph parameter which combines the advantages of tree-cut width and edge-cut width while avoiding all of the shortcomings listed above. However, before we introduce it, it will be useful to establish at least some intuitive understanding of tree-cut width\footnote{Formal definitions are provided in Section~\ref{sec:prelims}.}. 

A graph $G$ has tree-cut width $k$ if it admits a tree-cut decomposition $T$ of width $k$, whereas $T$ is a rooted tree and its nodes act as bags that form a partitioning of $V(G)$. A non-root node $t$ of $T$ defines an edge cut between all vertices in the subtree rooted at $t$, and the rest of the graph. The definition of tree-cut width then restricts, for each node $t$, the number of its children defining an edge cut of size greater than $2$. The constant ``$2$'' here arises from the structural properties Wollan aimed for when defining tree-cut width~\cite{Wollan15}; however, let us now pose the following question: How would the parameter change if we used a different constant $c$ here instead? 

On one hand, it is not difficult to observe that values of $c>2$ would immediately lead to parameters without the properties we are aiming for, since these would be constant for, e.g., all $3$-regular graphs. On the other hand, we show that for $c=0$, one obtains an asymptotically equivalent characterization of one of the previously mentioned basic edge-cut restricting parameterizations: treewidth plus maximum degree. Our parameter of interest is then the outcome of setting $c=1$; since this can be viewed as a variant of tree-cut width where all but a few children of each node need to have ``even slimmer'' edge-cuts, we refer to it as \emph{slim tree-cut width} (\stcw).

On the structural side, we show that \stcw\ inherits the desirable properties of its ``non-slim'' namesake. In particular, unlike edge-cut width~\cite{ECW2022}, \stcw\ is closed under edge sums, vertex and edge deletion, as well as under the graph immersion operation. Similarly as Wollan did for tree-cut width~\cite{Wollan15}, we also provide a set of forbidden immersions asymptotically characterizing \stcw. Furthermore, we show that \stcw\ is a common generalization of edge-cut width (and hence the feedback edge number), and treewidth plus maximum degree (see Figure~\ref{fig: hierarchy_light}).
\begin{figure}[htb]
 \begin{minipage}[c]{0.34\textwidth}
\includegraphics[width=0.9 \textwidth]{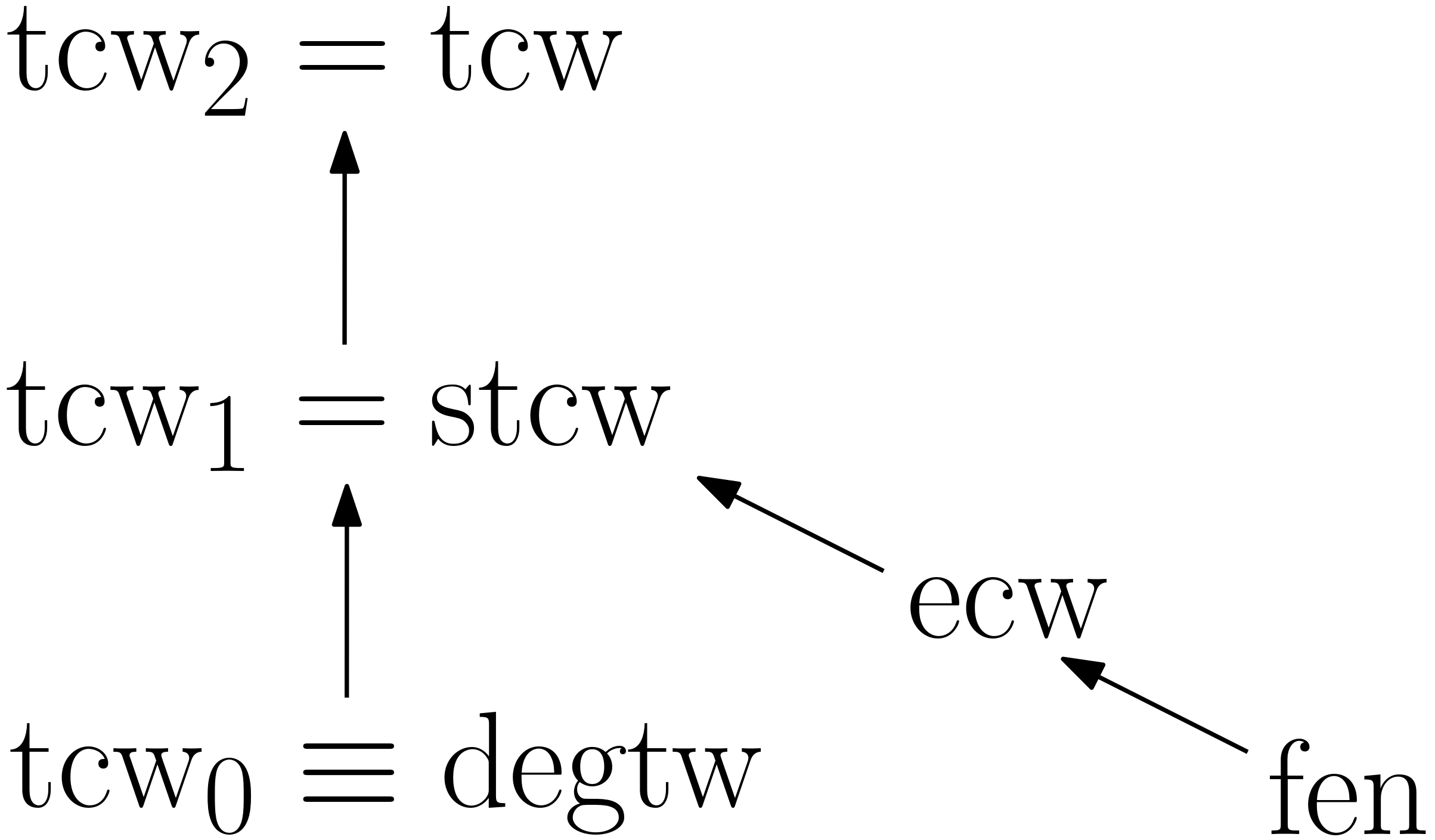}
  \end{minipage}\hfill
 \begin{minipage}[c]{0.65\textwidth}
\caption{Hierarchy of graph parameters based on edge cuts.
Here $\ecw$ denotes edge-cut width and $\degtw$ denotes treewidth plus maximum degree. $\tcw_i$ denotes the parameter obtained from tree-cut width by setting the constant $c$ described above to $i$. An arrow from $p$ to $q$ represents the fact that $p$ is more restrictive than $q$, while asymptotic equivalence is depicted by~$\equiv$.}
\label{fig: hierarchy_light}
  \end{minipage}
\end{figure}

Next, as one of our arguably most surprising results, we show that \stcw\ is asymptotically equivalent to a slight generalization of edge-cut width: instead of measuring the width over the input graph $G$, we ask for the minimum edge-cut width of any supergraph of $G$. The transformation between these parameters is constructive and has interesting algorithmic implications. First of all, when designing algorithms it allows us to avoid the use of often cumbersome tree-cut decompositions, and instead opt for the simpler decompositions used for edge-cut width---which are nothing else than spanning trees (in this case of a supergraph). Second, all of the fixed-parameter algorithms recently designed for edge-cut width~\cite{ECW2022} rely on a dynamic programming traversal of the spanning tree, and can be straightforwardly adapted to work on spanning trees of supergraphs instead. This means that one can essentially reuse the same proofs to establish fixed-parameter tractability of all considered ``candidate'' problems w.r.t.\ \stcw.

Naturally, a crucial prerequisite for algorithmically applying \stcw\ is that we can actually compute it, or more precisely compute a suitable decomposition for graphs of small \stcw.  While the problem of computing an optimal decomposition remains open even for tree-cut width, a fixed-parameter approximation algorithm was obtained by Kim, Oum, Paul, Sau and Thilikos~\cite{KimOPST18} and this suffices for the purposes of establishing fixed-parameter tractability. We obtain a similar outcome here and also provide a fixed-parameter approximation algorithm for \stcw, albeit with a worse approximation factor than for tree-cut width.


\smallskip
\noindent \textbf{Related Work.}\quad
Tree-cut width parameterizations were typically considered for problems which are not fixed-parameter tractable (\FPT) w.r.t.\ treewidth, but are \FPT\ w.r.t.\ feedback edge number and also \FPT\ w.r.t.\ treewidth plus maximum degree. The twelve candidate problems where tree-cut width parameterizations have been considered are shown in Table~\ref{tab:problems}.

\definecolor{Gray}{gray}{0.85}
\newcolumntype{a}{>{\columncolor{Gray}}c}
\newcolumntype{b}{>{\columncolor{white}}c}

\begin{table*}[htbp]
 
  \begin{center}
    \begin{tabular}{@{}l@{\quad}c@{\quad}c@{\quad}c@{\quad}c@{}
    }\toprule

Problem & tree-cut width & edge-cut width & \degtw & \stcw \\
\hline
\textsc{Capacitated Vertex Cover} & \FPT~\cite{Ganian0S15} & \FPT & \FPT& \FPT\\
\textsc{Capacitated Dominating Set} & \FPT~\cite{Ganian0S15} & \FPT & \FPT& \FPT\\
\textsc{Imbalance} & \FPT~\cite{Ganian0S15} & \FPT & \FPT& \FPT\\
\textsc{Bounded Degree Deletion} & \FPT~\cite{GanianKO21} & \FPT & \FPT& \FPT\\
\textsc{Edge Disjoint Paths} & \W{1}-hard~\cite{GanianO21} & \FPT~\cite{ECW2022} & \FPT~\cite{GanianOR21}& \FPT\\
\textsc{List Coloring} & \W{1}-hard~\cite{Ganian0S15} & \FPT~\cite{ECW2022} & \FPT~\cite{Ganian0S15}& \FPT\\
\textsc{Precoloring Extension} & \W{1}-hard~\cite{Ganian0S15} & \FPT~\cite{ECW2022} & \FPT~\cite{Ganian0S15}& \FPT\\
\textsc{Boolean Constraint Satisfaction} & \W{1}-hard~\cite{Ganian0S15} & \FPT~\cite{ECW2022} & \FPT~\cite{SamerSzeider10a}& \FPT\\
\textsc{Bayesian Network Structure Learning} & \W{1}-hard~\cite{GanianKorchemna21} & \FPT~\cite{ECW2022,GanianKorchemna21} & \FPT~\cite{OrdyniakS13}& \FPT\\
\textsc{Polytree Learning} & \W{1}-hard~\cite{GanianKorchemna21} & \FPT~\cite{ECW2022,GanianKorchemna21} & \FPT~\cite{GanianKorchemna21}& \FPT\\
\textsc{Min. Changeover Cost Arborescence} & \W{1}-hard~\cite{GozupekOPSS17} & \FPT~\cite{ECW2022} & \FPT~\cite{GozupekSSZ16}& \FPT\\
\textsc{MSRTIL}\footnotemark
 & \W{1}-hard~\cite{BredereckHKN19} & \FPT~\cite{ECW2022} & \FPT~\cite{BredereckHKN19,AdilGRSZ18}& \FPT\\

%
%
%
     \bottomrule
    \end{tabular}
  \end{center}
  \caption{\small The twelve candidate problems and their complexity w.r.t.\ edge-cut based parameters, where \degtw\ denotes the maximum degree plus treewidth. Slim tree-cut width provides a unified explanation for why these problems are \FPT\ w.r.t. both edge-cut width and \degtw, and lifts these results to more general inputs.}

  \label{tab:problems}
\end{table*}

The structural properties of tree-cut width have also been studied in a number of recent papers~\cite{GiannopoulouKRT19,GiannopoulouKRT21}. Last but not least, we note that a preprint exploring a different parameter that is aimed at providing an edge-based alternative to treewidth was recently authored by Magne, Paul, Sharma and Thilikos~\cite{edgetreewidth}; the parameter is based on different ideas and is incomparable to both tree-cut width and slim tree-cut width.

%



\section{Preliminaries}
\label{sec:prelims}

We use standard terminology for graph theory~\cite{Diestel12} and assume basic familiarity with the parameterized complexity paradigm including, in particular, the notions of \emph{fixed-parameter tractability} and \W{1}-\emph{hardness}~\cite{DowneyFellows13,CyganFKLMPPS15}. Let $\Nat$ denote the set of natural numbers including zero. We use $[i]$ to denote the set $\{0,1,\dots,i\}$. 
\footnotetext{Maximum Stable Roommates with Ties and Incomplete Lists. For completeness, we note that the authors who showed \W{1}-hardness w.r.t.\ tree-cut width also identified two additional restrictions which, when combined with tree-cut width, suffice for fixed-parameter tractability~\cite{BredereckHKN19}.}

The \emph{(open) neighborhood} of a vertex $x \in V(G)$ is the set $\{y\in V(G):xy\in E(G)\}$ and is denoted by $N_G(x)$. For a vertex subset $X$, the neighborhood of $X$ is defined as $\bigcup_{x\in X} N_G(x) \setminus X$ and denoted by $N_G(X)$; we drop the subscript if the graph is clear from the context. If $H$ is a subgraph of $G$, we denote it by $H\subseteq G$. 
\emph{Contracting} an edge $\{a,b\}$ is the operation of replacing vertices $a,b$ by a new vertex whose neighborhood is $(N(a)\cup N(b))\setminus \{a,b\}$.
For a vertex set $A$ (or edge set $B$), we use $G-A$ ($G-B$) to denote the graph obtained from
$G$ by deleting all vertices in $A$ (edges in $B$), and we use $G[A]$ to denote the
\emph{subgraph induced on} $A$, i.e., $G- (V(G)\setminus A)$. 

Let $G$ be a graph and let $x$, $y$ and $z$ be three distinct vertices of $G$ such that $(x,y),(y,z) \in E(G)$. To \emph{lift} the pair of edges $(x,y),(y,z)$ means to delete the edges $(x,y)$ and $(y,z)$ from $G$ and add (if it doesn't exist yet) a new edge $(x,z)$. We say that $G$ contains $H$ as a \emph{weak immersion} (denoted $H\le_I G$) if and only if $H$ can be obtained from $G$ by a sequence of edge deletion, vertex deletion, and lifting operations.

For a natural number $k$, we say that a graph $G$ is a \emph{$k$-edge sum} of vertex-disjoint graphs $G_1$ and $G_2$ if there exist vertices $v_i \in V(G_i)$ of degree $k$ for $i = 1,2$ and a bijection $\pi : N_{G_1}(v_1) \to N_{G_2}(v_2)$ such that $G$ is obtained from $(G_1 \setminus v_1) \cup (G_2 \setminus v_2)$ by adding an edge $(v,\pi(v))$ for every $v\in N_{G_1}(v_1)$. In this case we write $G=G_1\oplus_k G_2$. Observe that the same pair of graphs may produce different $k$-edge sums.

Given two graph parameters $\alpha,\beta: G\mapsto \Nat$, we say that $\alpha$ \emph{dominates} $\beta$ if there exists a function $p$ such that for each graph $G$, $\alpha(G)\leq p(\beta(G))$. 
If $\alpha$ dominates $\beta$ but $\beta$ does not dominate $\alpha$, we often say that $\beta$ is more restrictive than $\alpha$; as an example, treewidth dominates the vertex cover number. Two parameters that dominate each other are called asymptotically equivalent.

\smallskip
\noindent \textbf{Tree-cut Width.}\quad
The notion of tree-cut decompositions was introduced by Wollan~\cite{Wollan15}, see also subsequent work by Marx and Wollan~\cite{MarxWollan14}.
A family of subsets $X_1, \ldots, X_{k}$ of $X$ is a {\em near-partition} of $X$ if they are pairwise disjoint and $\bigcup_{i=1}^{k} X_i=X$, allowing the possibility of $X_i=\emptyset$.  

\begin{definition}
	A {\em tree-cut decomposition} of $G$ is a pair $(T,\mathcal{X})$ which consists of a rooted tree $T$ and a near-partition $\mathcal{X}=\{X_t\subseteq V(G): t\in V(T)\}$ of $V(G)$. A set in the family~$\mathcal{X}$ is called a {\em bag} of the tree-cut decomposition. 
\end{definition}

%

For any node $t$ of $T$ other than the root $r$, let $e(t)=ut$ be the unique edge incident to $t$ on the path to $r$. Let $T_u$ and $T_t$ be the two connected components in $T-e(t)$ which contain $u$ and $t$, respectively. Note that $(\bigcup_{q\in T_u} X_q, \bigcup_{q\in T_t} X_q)$ is a near-partition of $V(G)$, and we use $E_t$ to denote the set of edges with one endpoint in each part. We define the {\em adhesion} of $t$ ($\adh(t)$) as $|E_t|$;
we explicitly set $\adh(r)=0$ and $E(r)=\emptyset$. The adhesion of $(T,\mathcal{X})$ is then $\adh(T, \XXX)=\max_{t\in V(T)}\adh(t)$.

The {\em torso} of a tree-cut decomposition $(T,\mathcal{X})$ at a node $t$, written as $H_t$, is the graph obtained from $G$ as follows. If $T$ consists of a single node $t$, then the torso of $(T,\mathcal{X})$ at $t$ is $G$. Otherwise, let $T_1, \ldots , T_{\ell}$ be the connected components of $T-t$. For each $i=1,\ldots , \ell$, the vertex set $Z_i\subseteq V(G)$ is defined as the set $\bigcup_{b\in V(T_i)}X_b$. The torso $H_t$ at $t$ is obtained from $G$ by {\em consolidating} each vertex set $Z_i$ into a single vertex $z_i$ (this is also called \emph{shrinking} in the literature). Here, the operation of consolidating a vertex set~$Z$ into $z$ is to substitute $Z$ by $z$ in $G$, and for each edge $e$ between $Z$ and $v\in V(G)\setminus Z$, adding an edge $zv$ in the new graph. We note that this may create parallel edges.

The operation of {\em suppressing} (also called \emph{dissolving} in the literature) a vertex $v$ of degree at most $2$ consists of deleting~$v$, and when the degree is two, adding an edge between the neighbors of $v$. Given a connected graph $G$ and  $X\subseteq V(G)$, let the {\em 3-center} of $(G,X)$ be the unique graph obtained from $G$ by exhaustively suppressing vertices in $V(G) \setminus X$ of degree at most two. Finally, for a node $t$ of $T$, we denote by $\tilde{H}_t$ the 3-center of $(H_t,X_t)$, where $H_t$ is the torso of $(T,\mathcal{X})$ at $t$. 
Let the \emph{torso-size} $\tor(t)$ denote $|\tilde{H}_t|$. 

\begin{definition}
	The width of a tree-cut decomposition $(T,\mathcal{X})$ of $G$ is $\max_{t\in V(T)}\{ \adh(t),$ $\tor(t) \}$. The tree-cut width of $G$, or $\tcw(G)$ in short, is the minimum width of $(T,\mathcal{X})$ over all tree-cut decompositions $(T,\mathcal{X})$ of $G$.
\end{definition}

Without loss of generality, we shall assume that $X_r=\emptyset$.
We conclude this subsection with some notation related to tree-cut decompositions. 
Given a tree node $t$, let $T_t$ be the subtree of $T$ rooted at $t$. Let $Y_t=\bigcup_{b\in V(T_t)} X_b$, and let $G_t$  denote the induced subgraph $G[Y_t]$. 
A node $t\neq r$ in a rooted tree-cut decomposition is \emph{thin} if $\adh(t)\leq 2$ and \emph{bold} otherwise.


%

	A tree-cut decomposition $(T,\mathcal{X})$ is \emph{nice} if it satisfies the following condition for every thin node $t\in V(T)$: $N(Y_t)\cap (\bigcup_{b\text{ is a sibling of }t}Y_b)=\emptyset$.
	The intuition behind nice tree-cut decompositions is that we restrict the neighborhood of thin nodes in a way which facilitates dynamic programming. Every tree-cut decomposition can be transformed into a nice tree-cut decomposition of the same width in cubic time~\cite{Ganian0S15}.
	
	For a node $t$, we let $B_t=\{ b\text{ is a child of }t | |N(Y_b)|\leq 2\wedge N(Y_b)\subseteq X_t \}$ denote the set of thin children of $t$ whose neighborhood is a subset of $X_t$, and we let $A_t= \{a\text{ is a child of }t | a\not \in B_t \}$ be the set of all other children of $t$.
	Then $|A_t|\leq 2k+1$ for every node $t$ in a nice tree-cut decomposition~\cite{Ganian0S15}.

%
	
%

We refer to previous work~\cite{MarxWollan14,Wollan15,KimOPST18,Ganian0S15} for a detailed comparison of tree-cut width to other parameters. Here, we mention only that tree-cut width is dominated by treewidth and dominates treewidth plus maximum degree, which we denote $\degtw(G)$. It also dominates the feedback edge number (the size of a minimum feedback edge set), denoted $\fen(G)$.
	\begin{lemma}[\hspace{-0.001cm}\cite{Ganian0S15,MarxWollan14,Wollan15}]
		\label{lem:comparison}
		For every graph $G$, $\tw(G)\leq 2\tcw(G)^2+3\tcw(G)$ and $\tcw(G)\leq \fen(G)+1$ and $\tcw(G)\leq 4\degtw(G)^2$.
	\end{lemma}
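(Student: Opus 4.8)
The statement bundles three comparisons, and I would prove each by an explicit, width‑controlled conversion between the relevant decompositions. Throughout I would reuse the elementary observation that, since the $3$-center only ever suppresses vertices outside $X_t$, every bag of a width-$k$ tree-cut decomposition satisfies $|X_t|\le \tor(t)\le k$. Two of the bounds ($\fen$ and $\degtw$) build a tree-cut decomposition and bound its width; the remaining one ($\tw$) goes in the reverse direction.

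The bound $\tcw(G)\le \fen(G)+1$ is the most direct. I would fix a minimum feedback edge set $F$ with $|F|=\fen(G)$, so that $G-F$ is a spanning forest, and use a rooted version of this forest as the decomposition tree $T$, attaching the roots of its individual trees as children of a fresh root $r$ with $X_r=\emptyset$ and letting every other bag be the singleton $X_t=\{t\}$. For a non-root node $t$, the cut $E_t$ separating $Y_t$ from the rest contains exactly one forest edge (namely $e(t)$) together with a subset of $F$, so $\adh(t)\le 1+|F|$. For the torso, each consolidated vertex is a component of $T-t$ joined to the rest by its single forest edge plus some edges of $F$; hence only components meeting at least two edges of $F$ can have degree $\ge 3$ and survive $3$-centering, and a double-counting of the $\le 2|F|$ endpoints of $F$ shows there are at most $|F|$ such components. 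Thus $\tor(t)\le 1+\fen(G)$ as well, and the width is at most $\fen(G)+1$.

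For $\tcw(G)\le 4\degtw(G)^2$ I would go the opposite way, starting from an optimal tree decomposition of width $w=\tw(G)$ and turning it into a tree-cut decomposition on the same tree by assigning each vertex to the topmost bag containing it, which yields a near-partition with $|X_t|\le w+1$. Writing $\Delta=\Delta(G)$, the vertex boundary separating $Y_t$ from the rest is contained in an adhesion of the tree decomposition and so has at most $w+1$ vertices, each of degree at most $\Delta$; therefore $\adh(t)\le (w+1)\Delta$. Bounding $\tor(t)$ is the crux: every surviving non-terminal vertex of $\tilde{H}_t$ has degree at least $3$, and I would bound their number by a counting argument on the $3$-center, charging branch vertices to the at most $|X_t|\Delta\le (w+1)\Delta$ edges incident to $X_t$ (and controlling the few additional branch vertices forced by cycles among consolidated vertices). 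This gives $\tor(t)=\bigoh((w+1)\Delta)$, so that $\max\{\adh(t),\tor(t)\}\le 4(\tw(G)+\Delta(G))^2=4\degtw(G)^2$ after absorbing constants.

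Finally, for $\tw(G)\le 2\tcw(G)^2+3\tcw(G)$ I would convert a tree-cut decomposition of width $k$ into a tree decomposition on the same tree $T$ by enlarging each bag to $B_t=X_t$ together with the endpoints of the cut $E_t$ to the parent and the endpoints of the cuts $E_c$ of all \emph{bold} children $c$ of $t$. Since there are at most $\tor(t)\le k$ bold children, each contributing at most $\adh(c)\le k$ endpoints, and since $|X_t|\le k$ and $|E_t|\le k$, each bag has $\bigoh(k^2)$ vertices, matching the claimed quadratic bound. The delicate points — and the main obstacle throughout — are precisely the torso-style counting arguments together with verifying the tree-decomposition axioms for this construction: one must route the at most two interface vertices of each thin child so that every edge is covered and every vertex induces a connected subtree, which is where the assumption that the decomposition is \emph{nice} (restricting the neighborhoods of thin nodes) is used. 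I expect controlling the size of the $3$-center, i.e.\ the number of surviving degree-$\ge 3$ consolidated vertices, to be the recurring technical hurdle shared by the first and third inequalities.
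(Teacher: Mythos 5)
First, a note on the comparison baseline: the paper does not prove Lemma~\ref{lem:comparison} at all---it imports all three inequalities from \cite{Ganian0S15,MarxWollan14,Wollan15}---so your attempt can only be judged on its own merits. Your construction for $\tcw(G)\le\fen(G)+1$ is correct and complete, and your route for $\tcw(G)\le 4\degtw(G)^2$ (topmost-bag assignment over a tree decomposition) is the same one behind the proposition the paper quotes in Section~4.1; it is only a sketch (edges \emph{between} consolidated vertices can in fact occur, namely between a child component and the parent-side component when a vertex of the tree-decomposition bag at $t$ has its topmost bag above $t$, so the charging argument needs to be phrased as ``every crossing edge is incident to the bag at $t$''), but the approach goes through.

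The genuine gap is in the treewidth bound: your key counting step, ``there are at most $\tor(t)\le k$ bold children,'' is false. The torso-size $\tor(t)$ counts vertices that survive $3$-centering, and the consolidated vertex of a bold child---which starts at degree $\adh(c)\ge 3$ in the torso---can itself be suppressed, because suppressing a degree-$1$ vertex \emph{decreases} its neighbour's degree. Concretely, let $X_t=\{x\}$ and let each child $c$ of $t$ have $Y_c$ a single vertex, with these vertices inducing a binary tree in $G$ (attach $x$ to its root by one edge): the internal children are bold with $\adh(c)=3$, the leaf children are thin, every adhesion is at most $3$, yet the $3$-center of the torso at $t$ collapses to essentially $\{x\}$, so $\tor(t)$ stays constant while the number of bold children is unbounded. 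On such a decomposition your bags (which collect the endpoints of $E_c$ over all bold children $c$) have size unbounded in $k$, so the construction as justified fails. The repair is exactly the fact the paper cites from \cite{Ganian0S15}: after making the decomposition nice, every bold child lies in $A_t$ and $|A_t|\le 2k+1$, which bounds the number of bold children by $\bigoh(k)$ and restores the $\bigoh(k^2)$ bag size (careful accounting then gives $2k^2+3k$). Note also that niceness is what kills the counterexample above (there, thin leaf children have neighbours inside sibling subtrees, which niceness forbids); you do invoke niceness, but only for verifying the connectivity axiom of the tree decomposition, not for the count of bold children, and a count via $\tor(t)$ cannot be salvaged.
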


\smallskip
\noindent \textbf{Edge-Cut Width.} \quad
The notion of edge-cut width was introduced by Brand at al.~\cite{ECW2022}.
For a graph $G$ and a maximal spanning forest $T$ of $G$, let the \emph{local feedback edge set} at $v\in V$ be\\ 
$E_{\loc}^{G,T}(v)=\{uw\in E(G)\setminus E(T)~|~$ the unique path between $u$ and $w$ in $T$ contains $v\}$.

\begin{definition}
The edge-cut width of the pair $(G,T)$ is $\widthshort(G,T)=1+\max_{v\in V} |E_{\loc}^{G,T}(v)|$, and the edge-cut width of $G$ $($denoted $\ecw(G))$ is the smallest \width\ among all possible maximal spanning forests $T$ of $G$.
\end{definition}

\begin{proposition}[\hspace{-0.001cm}\cite{ECW2022}]
\label{pro:lfescompare}
For every graph $G$, $\tcw(G)\leq \widthshort(G) \leq \fen(G)+1$.
\end{proposition}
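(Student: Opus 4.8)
The plan is to establish the two inequalities separately, as they are of rather different character. The right-hand inequality $\ecw(G)\le \fen(G)+1$ is the easy one: I would take a minimum feedback edge set $F$ of $G$ with $|F|=\fen(G)$, so that $G-F$ is a forest, and extend it to a maximal spanning forest $T$ of $G$ with $E(T)\subseteq E(G)\setminus F$. Since every non-tree edge then lies in $F$, we have $E_{\loc}^{G,T}(v)\subseteq F$ for every vertex $v$, and therefore $\ecw(G)\le \ecw(G,T)=1+\max_v|E_{\loc}^{G,T}(v)|\le 1+|F|=\fen(G)+1$.

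For the left-hand inequality $\tcw(G)\le\ecw(G)$ I would argue constructively. Fix a maximal spanning forest $T$ realizing $k:=\ecw(G)$, and (assuming $G$ connected, treating components separately at the end) use $T$ itself as the decomposition tree, rooted at an arbitrary vertex $r$ and equipped with the singleton bags $X_v=\{v\}$; an empty root can be prepended if one insists on $X_r=\emptyset$. The key linking observation is that a non-tree edge $xy$ crosses the cut $E_t$ at a node $t$ exactly when the $T$-path from $x$ to $y$ uses the parent edge $e(t)=ut$, and any such path necessarily passes through the vertex $t$; hence the non-tree edges contributing to $E_t$ form a subset of $E_{\loc}^{G,T}(t)$. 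Adding the single tree edge $e(t)$ then gives $\adh(t)\le 1+|E_{\loc}^{G,T}(t)|\le k$ for every non-root node, while $\adh(r)=0$ trivially.

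The crux is the torso-size bound $\tor(t)\le k$, which I would obtain by analysing the torso $H_t$ directly. Consolidating the components of $T-t$ replaces each subtree hanging off a $T$-neighbour of $t$ by a vertex $z_i$, which is joined to $t$ by exactly one tree edge, while every other edge incident to $z_i$ arises from a non-tree edge leaving that subtree — and each such edge lies in $E_{\loc}^{G,T}(t)$ since its $T$-path runs through $t$. Consequently $\deg_{H_t}(z_i)=1+d_i$, where $d_i$ counts the local feedback edges of $t$ incident to the $i$-th subtree. Since suppression never raises a degree, a vertex $z_i$ can survive into the $3$-center $\tilde H_t$ only if $\deg_{H_t}(z_i)\ge 3$, that is, only if $d_i\ge 2$. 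As each edge of $E_{\loc}^{G,T}(t)$ supplies at most two subtree-endpoints in total, we get $\sum_i d_i\le 2|E_{\loc}^{G,T}(t)|$, so at most $|E_{\loc}^{G,T}(t)|\le k-1$ of the $z_i$ survive, whence $\tor(t)=1+|\{\text{survivors}\}|\le k$. Combining the two bounds yields width $\max_t\{\adh(t),\tor(t)\}\le k$, as required.

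I expect the torso bound to be the main obstacle: a naive degree-sum estimate only delivers the weaker $\tor(t)\le 2k-1$, and extracting the tight constant hinges on the observation that every surviving consolidated vertex must absorb \emph{two} distinct endpoints of edges in $E_{\loc}^{G,T}(t)$, so that the survivors are charged against $|E_{\loc}^{G,T}(t)|$ at a rate of two endpoints each. The remaining steps are routine bookkeeping: tracking parallel edges by multiplicity throughout, checking that suppression leaves the cyclomatic structure intact, and reducing the disconnected case to the connected one by noting that $\tcw$ and $\ecw$ are governed by the worst component while $\fen$ is additive over components.
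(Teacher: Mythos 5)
Your proof is correct and takes essentially the same approach as the paper: although the paper cites this proposition from~\cite{ECW2022}, its own proof of the generalization $\tcw(G)\leq \supwidthshort(G)$ (Proposition~\ref{pro:tcw_sec}) uses exactly your construction --- the spanning tree itself as the decomposition tree with singleton bags, adhesion bounded by $1+|E_{\loc}^{G,T}(t)|$, and the torso bound obtained by observing that any consolidated vertex surviving into the $3$-center must be incident to at least two edges of $E_{\loc}^{G,T}(t)$. Your argument for the right-hand inequality, taking a maximal spanning forest disjoint from a minimum feedback edge set so that every local feedback edge set is contained in that feedback edge set, is the standard one.
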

 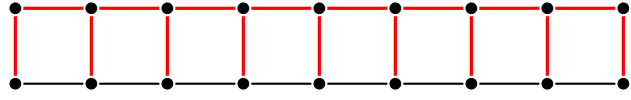
\begin{figure}
 \begin{center}
			\begin{tikzpicture}
				
				\vertex (n1) at (1,1) {};
				\vertex (n2) at (2,1) {};
				\vertex (n3) at (3,1) {};
				\vertex (n4) at (4,1) {};
				\vertex (n5) at (5,1) {};
                                           \vertex (n6) at (6,1) {};
                                           \vertex (n7) at (7,1) {};
                                           \vertex (n8) at (8,1) {};
                                           \vertex (n9) at (9,1) {}; 

                                           \vertex (m1) at (1,2) {};
				\vertex (m2) at (2,2) {};
				\vertex (m3) at (3,2) {};
				\vertex (m4) at (4,2) {};
				\vertex (m5) at (5,2) {};
                                           \vertex (m6) at (6,2) {};
                                           \vertex (m7) at (7,2) {};
                                           \vertex (m8) at (8,2) {};
                                           \vertex (m9) at (9,2) {}; 

                                           \draw [thick](n1)--(n2);
                                           \draw [thick](n2)--(n3);
                                           \draw [thick](n3)--(n4);
                                           \draw [thick](n4)--(n5);
                                           \draw [thick](n5)--(n6);
                                           \draw [thick](n6)--(n7);
                                           \draw [thick](n7)--(n8);
                                           \draw [thick](n8)--(n9);
 
                                           \draw[red, very thick] (m1)--(m2);
                                           \draw[red, very thick] (m2)--(m3);
                                           \draw[red, very thick] (m3)--(m4);
                                           \draw[red, very thick] (m4)--(m5);
                                           \draw[red, very thick] (m5)--(m6);
                                           \draw[red, very thick] (m6)--(m7);
                                           \draw[red, very thick] (m7)--(m8);
                                           \draw[red, very thick] (m8)--(m9);
                                    
				                           \draw[red, very thick] (n1)--(m1);
				\draw[red, very thick] (n2)--(m2);
                                           \draw[red, very thick] (n3)--(m3);
				\draw[red, very thick] (n4)--(m4);
                                           \draw[red, very thick] (n5)--(m5);
				\draw[red, very thick] (n6)--(m6);
                                           \draw[red, very thick] (n7)--(m7);
				\draw[red, very thick] (n8)--(m8);
                                           \draw[red, very thick] (n9)--(m9);
			\end{tikzpicture}
			\vspace{-0.5cm}
			\end{center}
\caption {Example of a graph $G$ with a spanning tree $T$ (marked in red) such that $\widthshort(G)=\widthshort(G,T)=3$. The feedback edge number of $G$ can be made arbitrarily large in this fashion.\label{fig:lfen}}
\end{figure}

Edge-cut width is not closed under vertex or edge deletions and is incomparable to $\degtw$~\cite{ECW2022}. However, the fact that its decomposition is simply a spanning tree makes it easier to work with in dynamic programming applications than, e.g., tree-cut decompositions~\cite{ECW2022}.

\section{Refined Measures for Tree-Cut Decompositions}	
\subsection{Definitions and Comparison} 
Let us now define our parameter of interest, obtained by altering the threshold for when a vertex is suppressed (dissolved) in the definition of tree-cut width. 
Formally, let $(T,\mathcal{X})$ be some tree-cut decomposition of $G$. Given a connected graph $Q$ and  $X\subseteq V(Q)$, let the {\em 2-center} of $(Q,X)$ be the unique graph obtained from $Q$ by exhaustively deleting vertices in $V(Q) \setminus X$ of degree at most one. For a node $t$ of $T$, we denote by $\bar{H}_t^2$ the 2-center of $(H_t,X_t)$, where $H_t$ is the torso of $(T,\mathcal{X})$ at $t$. Let us denote $|\bar{H}_t^2|$ by $\tor_2(t)$.

\begin{definition}
The slim width of a tree-cut decomposition $(T,\mathcal{X})$ of a graph $G$ is $\stcw(T, \XXX)=\max_{t\in V(T)}\{ \adh(t), \tor_2(t) \}$. The slim tree-cut width of $G$, or $\stcw(G)$ in short, is the minimum slim width of $(T,\mathcal{X})$ over all tree-cut decompositions $(T,\mathcal{X})$ of $G$.
\end{definition}
Observe that the difference in definitions of $\tcw(G)$ and $\stcw(G)$ is whether we dissolve the vertices of degree at most two or at most one in the torso in each node. At this point, it would be reasonable to ask what happens if we dissolve only isolated vertices (i.e., vertices of degree $0$) from the torso. Naturally extending the notions of $2$- and $3$-center for a connected graph $Q$ and  $X\subseteq V(Q)$, we define the {\em 1-center} of $(Q,X)$ as the graph obtained from $Q$ by deleting isolated vertices in $V(Q) \setminus X$. For a node $t$ of $T$, we denote by $\bar{H}_t^1$ the 1-center of $(H_t,X_t)$, where $H_t$ is the torso of $(T,\mathcal{X})$ at $t$. Let us denote $|\bar{H}_t^1|$ by $\tor_1(t)$.

\begin{definition}
The $0$-width of a tree-cut decomposition $(T,\mathcal{X})$ of $G$ is $\max_{t\in V(T)}\{ \adh(t),$ $\tor_1(t) \}$. The $0$-tree-cut width of $G$, or $\gtcw(G)$ in short, is the minimum $0$-width of $(T,\mathcal{X})$ over all tree-cut decompositions $(T,\mathcal{X})$ of $G$.
\end{definition}
It follows from the definitions that for any tree-cut decomposition $(T,\mathcal{X})$ of $G$, for each node $t$ of $T$,  $\tor(t)\le \tor_2(t) \le \tor_1(t)$. In particular, the width of $(T,\mathcal{X})$ is upper-bounded by its slim width, while the latter does not exceed the $0$-width of $(T,\mathcal{X})$. 
\begin{corollary}
\label{cor:tcw012}
For any graph $G$, $\tcw(G)\le \stcw(G) \le \gtcw(G).$
\end{corollary}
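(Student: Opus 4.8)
The plan is to prove Corollary~\ref{cor:tcw012} as an immediate consequence of the pointwise inequality $\tor(t)\le \tor_2(t)\le \tor_1(t)$ asserted just before the statement, which in turn follows directly from the definitions of the three centers. First I would fix an arbitrary tree-cut decomposition $(T,\mathcal{X})$ of $G$ and an arbitrary node $t\in V(T)$, and consider the common torso $H_t$ together with the distinguished set $X_t\subseteq V(H_t)$. All three quantities $\tor(t),\tor_2(t),\tor_1(t)$ are obtained from the \emph{same} pair $(H_t,X_t)$ by applying, respectively, the $3$-center, $2$-center, and $1$-center operations; the only difference is the degree threshold ($\le 2$, $\le 1$, $\le 0$) below which a vertex of $V(H_t)\setminus X_t$ is removed (by suppression for the $3$-center, by deletion otherwise).

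The key step is to argue the containment of the resulting vertex sets. For the inequality $\tor_2(t)\le\tor_1(t)$, note that any vertex deleted when forming $\bar H_t^1$ (i.e.\ an isolated vertex of $V(H_t)\setminus X_t$) is also deleted when forming $\bar H_t^2$, since degree $0$ is in particular degree $\le 1$; moreover deletions only lower degrees, so the exhaustive $2$-center process removes a superset of the vertices removed by the $1$-center process. Hence $V(\bar H_t^2)\subseteq V(\bar H_t^1)$ and $\tor_2(t)=|\bar H_t^2|\le|\bar H_t^1|=\tor_1(t)$. For $\tor(t)\le\tor_2(t)$, one compares suppression (used in the $3$-center, for degree $\le 2$) against deletion (used in the $2$-center, for degree $\le 1$): every vertex the $2$-center removes has degree $\le 1$ and is therefore also eligible for suppression in the $3$-center, and suppressing never increases the number of remaining vertices, so again $V(\tilde H_t)$ is no larger than $V(\bar H_t^2)$, giving $\tor(t)\le\tor_2(t)$. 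A minor care point is that the $3$-center, $2$-center, and $1$-center are each defined as the \emph{unique} graph obtained by exhaustive application, so the comparison is between well-defined objects and does not depend on the order in which vertices are processed.

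From the pointwise chain $\tor(t)\le\tor_2(t)\le\tor_1(t)$, valid for every $t$, I would take the maximum over $t\in V(T)$ after adding the common $\adh(t)$ term. Concretely, $\max_t\{\adh(t),\tor(t)\}\le\max_t\{\adh(t),\tor_2(t)\}\le\max_t\{\adh(t),\tor_1(t)\}$, which is exactly the ordering of the width, slim width, and $0$-width of $(T,\mathcal{X})$. Since this holds for \emph{every} tree-cut decomposition, it holds in particular for the decompositions minimizing each of the three parameters; taking the minimum over all $(T,\mathcal{X})$ then yields $\tcw(G)\le\stcw(G)\le\gtcw(G)$, because a decomposition optimal for a finer measure is still a valid decomposition providing an upper bound for a coarser one.

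The main obstacle, such as it is, is not an analytic difficulty but a bookkeeping subtlety: one must be careful that suppression and deletion are compared correctly, since suppression can \emph{add} an edge between the two neighbors of a degree-$2$ vertex and could in principle alter degrees of surviving vertices. The clean way to handle this is to phrase the argument at the level of surviving vertex sets rather than tracking edges: in all three processes the only vertices ever eliminated come from $V(H_t)\setminus X_t$, while $X_t$ is always retained, so it suffices to show that a larger degree threshold eliminates a (weakly) larger set of vertices from $V(H_t)\setminus X_t$, which is immediate because any vertex removable under a smaller threshold is removable under a larger one and the operations are monotone under the exhaustive application. I expect the whole proof to be short, essentially the one-line corollary already indicated in the text, with the bulk of the reasoning being the justification of the inequality $\tor(t)\le\tor_2(t)\le\tor_1(t)$.
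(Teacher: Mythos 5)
Your proposal is correct and follows exactly the paper's route: the pointwise inequality $\tor(t)\le\tor_2(t)\le\tor_1(t)$ for every node $t$ of any tree-cut decomposition (which the paper simply asserts ``follows from the definitions''), then taking the maximum with $\adh(t)$ over nodes and minimizing over all decompositions. Your extra justification of the pointwise inequality via containment of the surviving vertex sets is a sound elaboration of what the paper leaves implicit.
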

The gaps in these inequalites can be arbitrarily large---and, more strongly, $\gtcw$ is a more restrictive parameter than $\stcw$, which is in turn more restrictive than $\tcw$. Indeed, for the comparison of $\gtcw$ and $\stcw$ consider the class of stars which have slim tree-cut width $1$. Let $S_r$ denote the star with $r$ leaves (i.e., the complete bipartite graph $K_{1,r}$). 
\begin{lemma}
\label{lem: stars}
For every positive integer $r\ge 1$, $\gtcw(S_{r^2})\ge r$.
\end{lemma}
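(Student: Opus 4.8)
The plan is to argue by contradiction. Suppose $S_{r^2}$ admitted a tree-cut decomposition $(T,\mathcal{X})$ of $0$-width at most $r-1$, so that $\adh(t)\le r-1$ and $\tor_1(t)\le r-1$ for every node $t$. I would write $c$ for the center of the star and let $t_0$ be the unique node with $c\in X_{t_0}$ (unique since $\mathcal{X}$ is a near-partition). I would then bookkeep the $r^2$ leaves according to their position relative to $t_0$: say $L_0$ of them lie in $X_{t_0}$ itself, while for each connected component $T_i$ of $T-t_0$ I let $L_i$ be the number of leaves contained in $Z_i=\bigcup_{b\in V(T_i)}X_b$, and I let $k$ be the number of components with $Z_i\neq\emptyset$. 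The aim is to show that the two width bounds cannot jointly accommodate all $r^2$ leaves.

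The first step is to read off what $\tor_1(t_0)$ measures. Every leaf is adjacent only to $c\in X_{t_0}$, and $S_{r^2}$ is connected, so when the torso $H_{t_0}$ is formed, each nonempty $Z_i$ consolidates into a vertex $z_i$ that retains an edge to $c$ and hence is not isolated; an empty $Z_i$ produces an isolated $z_i$, which is exactly what the $1$-center removes. Thus the $1$-center deletes precisely the empty groups, and $\tor_1(t_0)=|X_{t_0}|+k=(1+L_0)+k$. The bound $\tor_1(t_0)\le r-1$ then gives $L_0+k\le r-2$.

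The second step is to cap each leaf-group using the adhesion bound. For a component $T_i$ sitting in the subtree below $t_0$, its topmost node $t_i$ is a child of $t_0$ and satisfies $Y_{t_i}=Z_i$, so $\adh(t_i)$ counts exactly the star-edges leaving $Z_i$, namely the $L_i$ edges from those leaves to $c$; hence $L_i\le r-1$. The single component on the root side of $t_0$ (present only when $t_0\neq r$) is handled symmetrically through $\adh(t_0)$ itself, whose cut consists precisely of the edges from $c$ to that component's leaves. Consequently every component satisfies $L_i\le r-1$, and combining this with the first step yields the contradiction $r^2=L_0+\sum_i L_i\le L_0+k(r-1)\le (L_0+k)(r-1)\le (r-2)(r-1)<r^2$.

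I expect the difficulty to be careful bookkeeping rather than a hard idea. The two points requiring attention are verifying that the $1$-center performs no removal at $t_0$ beyond the empty groups (so that $\tor_1(t_0)$ counts components exactly), and checking that the adhesion argument covers the lone root-side component and not just the subtree-side ones. Finally, the case $r=1$ is immediate since any nonempty graph has $\gtcw\ge 1$, and the step $L_0+k(r-1)\le (L_0+k)(r-1)$ above is valid for all $r\ge 2$, so the estimate closes uniformly.
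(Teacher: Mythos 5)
Your proof is correct and follows essentially the same route as the paper's: both analyze the node whose bag contains the center, use $\tor_1$ to bound the number of nonempty components hanging off that node and adhesion to bound the leaves in each component (including the root-side one), and close with the same counting arithmetic. The only differences are cosmetic---you phrase it as a contradiction and track empty components explicitly, whereas the paper argues directly after assuming leaf bags are nonempty.
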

\begin{proof}
Let $(T,\XXX)$ be a tree-cut decomposition of $S_{r^2}$ of $0$-width $k$ where the bags of leaves are non-empty. Let $t$ be the node of $T$ such that $X_t$ contains the vertex of degree $r^2$. Observe that $t$ has at most $\tor_1(t)-|X_t|\le k-|X_t|$ children. For every child $t'$ of $t$, $Y_{t'}$ contains at most  $\adh(t')\le k$ vertices of $S_{r^2}$. In total, $Y_t$ contains at most $|X_t| +k \cdot ( k-|X_t|)\le k^2$ vertices of $S_{r^2}$. Together with at most $\adh(t)\le k$ verices outside of $Y_t$, $S_{r^2}$ has at most $k\cdot(k+1)$ vertices and hence $k\ge r$. 
\end{proof} 
To show the gap between $\stcw$ and $\tcw$, let us denote by $W_r$ the graph on $2r+1$ vertices consisting of $r$ triangles sharing one vertex; here we call such graphs windmills, and refer to Figure~\ref{fig:forbid_immersions} later for an illustration. The class of windmills has tree-cut width $2$ but, as the following lemma shows, unbounded slim tree-cut width.
\begin{lemma}
\label{lem: triangles}
For every positive integer $r\ge 2$, $\stcw(W_{r^2})\ge r$.
\end{lemma}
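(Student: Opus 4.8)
The plan is to fix an arbitrary tree-cut decomposition $(T,\mathcal{X})$ of $W_{r^2}$ of slim width $k$ and to prove $k\ge r$ by a counting argument: the graph has $2r^2+1$ vertices, so it suffices to show that if $k$ is small then not all of them fit. Write $c$ for the central (hub) vertex; every other vertex is an \emph{outer} vertex of degree two, and each outer vertex is adjacent to $c$. Let $t$ be the unique node with $c\in X_t$. First I would partition $V(W_{r^2})$ into three groups relative to $t$ --- the bag $X_t$, the vertices lying outside $Y_t$, and the vertices lying in the child subtrees $Y_{t'}$ for children $t'$ of $t$ --- and bound each group in terms of $k$.

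Two of these bounds are immediate from adhesion. Since $X_t\subseteq \bar{H}_t^2$, we get $|X_t|\le \tor_2(t)\le k$. Next, every vertex outside $Y_t$ is an outer vertex adjacent to $c\in Y_t$, so its unique edge to $c$ belongs to $E_t$; distinct such vertices give distinct edges, hence there are at most $\adh(t)\le k$ of them. By the same reasoning every vertex of a child subtree $Y_{t'}$ is an outer vertex adjacent to $c\notin Y_{t'}$, giving an injection into $E_{t'}$ and thus $|Y_{t'}|\le \adh(t')\le k$.

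The crux, and the step I expect to be the main obstacle to state cleanly, is bounding the number $m$ of children $t'$ with $Y_{t'}\neq\emptyset$. Here I would prove the key claim that for \emph{every} such child the consolidation vertex $z_{t'}$ survives into the $2$-center $\bar{H}_t^2$ --- this is exactly the phenomenon that separates $\stcw$ from $\tcw$, where such low-degree vertices would instead be suppressed. If $|Y_{t'}|\ge 2$, then $z_{t'}$ has at least two (parallel) edges to $c\in X_t$, which are never deleted, so its degree stays $\ge 2$. If $Y_{t'}=\{a\}$ is a singleton belonging to a triangle $\{c,a,b\}$, then $z_{t'}$ carries the edge $ac$ to $c$ and the edge $ab$ to the consolidation vertex $z_W$ of the component $W$ of $T-t$ containing $b$; since $b$ is also adjacent to $c$ and $c\notin W$, the vertex $z_W$ has an edge to $c$ as well, so $c,z_{t'},z_W$ form a triangle in $H_t$. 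Because $c$ is never deleted, neither $z_{t'}$ nor $z_W$ can be the first of the two to drop to degree $\le 1$, so both survive (the case $b\in X_t$ is immediate, as then $z_{t'}$ has two edges into $X_t$). Consequently all $m$ of these vertices are distinct elements of $\bar{H}_t^2\setminus X_t$, giving $|X_t|+m\le \tor_2(t)\le k$, i.e.\ $m\le k-|X_t|$.

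Finally I would combine the bounds. Using $|Y_{t'}|\le k$ for each of the $m$ nonempty children and $|X_t|\ge 1$ (as $c\in X_t$),
\[
2r^2+1 \;=\; |X_t| + |\{\text{outside }Y_t\}| + \sum_{t'} |Y_{t'}| \;\le\; |X_t| + k + m\,k \;\le\; k^2 + 1,
\]
where the last inequality substitutes $m\le k-|X_t|$ and simplifies $k^2+k+|X_t|(1-k)$ at $|X_t|=1$ for $k\ge 2$ (the case $k=1$ is ruled out directly, its right-hand side being too small for $r\ge 2$). Hence $2r^2\le k^2$, so $k\ge \sqrt{2}\,r\ge r$, which proves $\stcw(W_{r^2})\ge r$.
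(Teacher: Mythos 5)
Your proof is correct and takes essentially the same route as the paper's: both zoom in on the node $t$ whose bag contains the hub, bound $|X_t|$, the vertices outside $Y_t$, each child subtree's size via adhesion, and the number of (non-empty) children via $\tor_2(t)$, then count vertices of $W_{r^2}$. If anything, your treatment of the key step is more explicit than the paper's---where the paper simply asserts that adhesion at least $2$ caps the number of children by $\tor_2(t)$, you actually prove that each non-empty child's consolidation vertex survives the exhaustive degree-$1$ deletions, including the triangle argument needed for singleton children.
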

\begin{proof}
Assume, to the contrary, that there exists a tree-cut decomposition $(T,\XXX)$ of $W_{r^2}$ of slim width at most $r-1$. Let $t$ be the node of $T$ such that $X_t$ contains the vertex of degree $2r^2$. Without loss of generality, we assume that all the leaves of $T$ have non-empty bags. Then the adhesion of any child $t'$ of $t$ is at least two, as $Y_{t'}$ contains some vertex $v$ of $W_{r^2}$ and the two edge-disjoint paths from $v$ to the high-degree vertex in $t$ each contribute to $\adh(t')$.
Hence, $t$ has at most $\tor_2(t)\le r-1$ children. Moreover, for every child $t'$ of $t$, $Y_{t'}$ intersects at most $\frac{r-1}{2}$ distinct triangles of $W_{r^2}$, since each such triangle contributes $2$ to $\adh(t')$. Hence, for every child $t'$ of $t$, $Y_{t'}$ contains at most  $r-1$ vertices of $W_{r^2}$. In total, $Y_t\setminus X_t$ contains at most $(r-1)^2$ vertices of $W_{r^2}$. Since both $\adh(t)$ and $|X_t|$ are upper-bounded by $r-1$ and the former bounds the number of vertices outside of $Y_t$ by $r-1$, this would mean that $W_{r^2}$ has at most $(r-1)^2+2r-2$ vertices, a contradiction with the definition of $W_{r^2}$.
\end{proof} 
Given a graph $G$ and its nice tree-cut decomposition $(T,\XXX)$ of width at most $k$, let us denote by $B_t^{(2)}$ the set of children of $t$ from $B_t$ with adhesion precisely two; notice that $B_t^{(2)}$ does not necessarily contain all children of $t$ with adhesion precisely two, since some may lie in $A_t$. Observe that for every fixed vertex $t$ of $T$, if $x$ is an element of 2-center of the torso at $t$ and $x \not \in X_t$, then $x$ corresponds either to the parent of $t$ in $T$ or to some child of $t$ from $A_t\cup B_t^{(2)}$. Hence $\tor_2(t) \le 1+|X_t|+|A_t|+|B_t^{(2)}|\le 3k+2 + |B_t^{(2)}|$.
\begin{corollary}
\label{cor: thin2lowerbound}
Let $G$ be a graph with tree-cut decomposition $(T,\mathcal{X})$ of width at most $k$. Then for each node $t$ of $T$ it holds that
$|B_t^{(2)}|\ge \tor_2(t)-3k-2.$
\end{corollary}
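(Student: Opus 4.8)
The plan is to directly unpack the structure of the $2$-center $\bar{H}_t^2$ and count its vertices, following the observation stated just before the corollary. First I would recall that the torso $H_t$ is obtained from $G$ by consolidating each connected component of $T-t$ into a single vertex. Hence $V(H_t)\setminus X_t$ consists of exactly one consolidated vertex $z_b$ for each child $b$ of $t$, together with (when $t\neq r$) one further vertex $z_p$ for the component containing the parent of $t$. Since $\bar{H}_t^2$ is obtained by exhaustively deleting vertices of $V(H_t)\setminus X_t$ of degree at most one while retaining every vertex of $X_t$, the set $V(\bar{H}_t^2)\setminus X_t$ is a subset of these consolidated vertices, and to bound $\tor_2(t)=|\bar{H}_t^2|$ it suffices to identify which of them can survive.

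The crux is that a consolidated child-vertex $z_b$ has degree exactly $\adh(b)$ in $H_t$. I would argue that every child $b\in B_t\setminus B_t^{(2)}$ is removed in the $2$-center: such a $b$ is thin, so $\adh(b)\le 2$, and since $b\notin B_t^{(2)}$ its adhesion is not equal to $2$, hence $\adh(b)\le 1$; thus $z_b$ has degree at most one in $H_t$ and is deleted immediately. Consequently every vertex of $V(\bar{H}_t^2)\setminus X_t$ is either the parent vertex $z_p$ or a child-vertex $z_b$ with $b\in A_t\cup B_t^{(2)}$. This yields the counting inequality
\[
\tor_2(t)\;\le\;|X_t|+|A_t|+|B_t^{(2)}|+1,
\]
where the final summand accounts for $z_p$ (and is simply slack at the root).

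To finish, I would substitute the two standard bounds available for a nice tree-cut decomposition of width at most $k$. First, $|X_t|\le\tor(t)\le k$, because $X_t$ is contained in the $3$-center $\tilde{H}_t$ (the $3$-center only suppresses vertices outside $X_t$) and the width bounds $|\tilde{H}_t|=\tor(t)$. Second, $|A_t|\le 2k+1$, which holds for nice decompositions by the result cited earlier. Plugging these in gives $\tor_2(t)\le k+(2k+1)+|B_t^{(2)}|+1=3k+2+|B_t^{(2)}|$, and rearranging yields the claimed bound $|B_t^{(2)}|\ge\tor_2(t)-3k-2$.

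I expect the only genuinely delicate step to be the classification of the surviving vertices, namely verifying that every child in $B_t\setminus B_t^{(2)}$ is eliminated (which is where thinness of $B_t$-children is used to deduce $\adh(b)\le 1$) while being careful that overcounting by the entire set $A_t$ is harmless for an upper bound. I would also flag that the argument tacitly relies on the decomposition being \emph{nice}, since this is precisely what licenses the bound $|A_t|\le 2k+1$; everything else is routine bookkeeping.
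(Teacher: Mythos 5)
Your proposal is correct and follows essentially the same route as the paper, whose ``proof'' is precisely the observation stated in the paragraph preceding the corollary: every vertex of the $2$-center outside $X_t$ corresponds to the parent of $t$ or to a child in $A_t\cup B_t^{(2)}$, giving $\tor_2(t)\le 1+|X_t|+|A_t|+|B_t^{(2)}|\le 3k+2+|B_t^{(2)}|$. Your write-up merely fills in the details the paper leaves implicit (the degree of a consolidated child-vertex equals its adhesion, so children in $B_t\setminus B_t^{(2)}$ have degree at most one and vanish), and your flag about niceness being needed for $|A_t|\le 2k+1$ is apt, since the paper also assumes a nice decomposition in that paragraph.
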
  

\subsection{Weak Immersions}
\label{sub: forbid_immers}
Naturally extending the result of Wollan for tree-cut width \cite{Wollan15}, we show that both slim and $0$-tree-cut width are closed under weak immersions.
\begin{theorem}
If $G$ and $H$ are graphs such that $H\le_I G$ then $\stcw(H)\le \stcw(G)$ and $\gtcw(H)\le \gtcw(G)$. 
\label{thm: monotone}
\end{theorem}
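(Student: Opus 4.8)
The plan is to prove the statement by induction on the length of the sequence of operations witnessing $H \le_I G$. Since a weak immersion is obtained by a sequence of edge deletions, vertex deletions, and edge liftings, it suffices to show that each of these three elementary operations does not increase either $\stcw$ or $\gtcw$. Formally, I would argue that if $H$ is obtained from $G$ by a single such operation, then $\stcw(H) \le \stcw(G)$ and $\gtcw(H) \le \gtcw(G)$; the theorem then follows by transitivity. For each operation I would fix an optimal tree-cut decomposition $(T,\mathcal{X})$ of $G$ and construct from it a tree-cut decomposition $(T',\mathcal{X}')$ of $H$ whose slim width (resp.\ $0$-width) is no larger.

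The first two operations are the easy cases. For \emph{vertex deletion} of a vertex $v$, I would keep the same tree $T$ and simply remove $v$ from whichever bag $X_t$ contains it. Deleting a vertex can only remove edges crossing any cut $E_t$, so every adhesion $\adh(t)$ can only decrease; and in each torso $H_t$, removing $v$ (or its image after consolidation) cannot increase the number of vertices surviving in the $2$-center or $1$-center, so $\tor_2(t)$ and $\tor_1(t)$ do not increase. The argument for \emph{edge deletion} is analogous and even simpler: the bags and tree are unchanged, deleting an edge can only lower adhesions, and in each torso it can only decrease degrees, hence can only help vertices get suppressed/deleted in the relevant center, so the torso sizes do not grow. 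The only mild subtlety is verifying that the near-partition structure and the definition of the $2$-center (delete degree $\le 1$ vertices outside $X_t$) behave monotonically under edge removal, which is routine.

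The main obstacle will be the \emph{lifting} operation, where we delete edges $(x,y),(y,z)$ and add $(x,z)$. Here I would again reuse the tree $T$ and the bags $\mathcal{X}$, and check that replacing the two edges by the single lifted edge does not increase adhesion or torso size at any node. For adhesion: for any node $t$, the path from $x$ through $y$ to $z$ relative to the cut $E_t$ behaves like a flow, so whatever contribution the pair $(x,y),(y,z)$ made to $|E_t|$ is not exceeded by the contribution of $(x,z)$. Concretely one checks a small case analysis based on how the three vertices $x,y,z$ are distributed between the two sides of the cut defined by $t$: if $x,z$ lie on the same side the new edge crosses no cut it did not already, and if they lie on opposite sides then at least one of the two original edges already crossed, so $\adh(t)$ cannot strictly increase. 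The analogous statement for torsos is that lifting inside $G$ induces a corresponding lifting (or an even milder change) inside each torso $H_t$ after consolidation, from which one argues the $2$-center and $1$-center do not gain vertices.

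I expect this case analysis for lifting to be the crux, and the cleanest way to organize it is to observe that lifting is exactly the operation under which tree-cut width was already shown to be monotone by Wollan~\cite{Wollan15}; the key point is that his adhesion bookkeeping is insensitive to the suppression threshold, so the same cut-by-cut accounting that shows $\adh$ does not increase carries over verbatim, and only the torso-size bound must be re-examined with ``suppress degree $\le 1$'' (for $\stcw$) and ``delete degree $0$'' (for $\gtcw$) in place of ``suppress degree $\le 2$''. Since both of these thresholds are strictly smaller than the original one and the center operations are monotone under edge removal and edge lifting in the torso, the required inequalities $\tor_2(H_t') \le \tor_2(H_t)$ and $\tor_1(H_t') \le \tor_1(H_t)$ follow by the same monotonicity reasoning used in the deletion cases. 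Assembling the three cases and inducting on the length of the operation sequence completes the proof.
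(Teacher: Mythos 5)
Your overall strategy is the same as the paper's: reduce to a single elementary operation, keep the decomposition $(T,\mathcal{X})$ of $G$ unchanged, verify the adhesions by a case analysis on how $x,y,z$ are distributed across each cut, and observe that a lifting in $G$ induces, inside each torso, either an edge-subset relation or again a lifting of a pair of edges. Your deletion cases and your adhesion bookkeeping for lifting are correct and match the paper.

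The gap is in the torso part of the lifting case, which you yourself flag as the crux but then dispatch by asserting that ``the center operations are monotone under edge removal and edge lifting in the torso'' and that the inequalities ``follow by the same monotonicity reasoning used in the deletion cases.'' That reasoning is subgraph/degree monotonicity, and it does not apply to lifting: after a lift the torso is \emph{not} a subgraph of the old torso (the edge $(x,z)$ has been added), and pointwise non-increase of degrees alone does not prevent the $2$-center from growing. Concretely, take $X=\emptyset$, let $Q_G$ be a triangle $abc$ with a pendant path $c$--$d$--$e$, and let $Q_H$ be the $4$-cycle $abcd$ plus the isolated vertex $e$: every degree in $Q_H$ is at most the corresponding degree in $Q_G$, yet the $2$-center grows from $\{a,b,c\}$ (size $3$) to $\{a,b,c,d\}$ (size $4$). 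So the conclusion cannot follow from degree accounting; one must use structure specific to lifting. The paper supplies exactly this missing step: a vertex $v\in V(Q_G)\setminus X_t$ is outside the $2$-center of $(Q_G,X_t)$ if and only if $v$ lies in an induced subtree of $Q_G$ attached to the rest of $Q_G$ by at most one edge, and lifting a pair of edges preserves this property; for the $1$-center it suffices to note that isolated vertices remain isolated, since the new edge joins two vertices that were not isolated. Your appeal to Wollan does not substitute for this either: his argument concerns the $3$-center with the degree-at-most-two suppression rule, and the fact that your suppression thresholds are smaller does not let his torso analysis ``carry over verbatim''---each threshold needs its own (short) argument of the above kind.
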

\begin{proof}
It is sufficient to proof the statement when $H$ is obtained from $G$ by precisely one edge deletion, isolated vertex deletion or lifting a pair of edges. Let $(T,\XXX)$ be a tree-cut decomposition of $G$ of minimum slim (or $0$-) width. Then $(T,\XXX)$ is also a tree-cut decomposition of $G\setminus e$ for any edge $e$ of $G$ with the same or smaller slim ($0$-) width. Similarly for the isolated vertex deletion: we just need to delete the vertex from the corresponding bag. It remains to consider the case $H=G\setminus\{(x,y),(y,z)\} \cup (x,z)$ for some $(x,y),(y,z) \in E(G)$. 

Notice that the lifting operation doesn't increase adhesion of any node $t$ of $T$: if the edge $(x,z)$ has endpoints in different connected components of $T \setminus e(t)$ then so does at least one of the edges $(x,y)$ or $(y,z)$. To see that $\tor_2(t)$ and $\tor_1(t)$ do not increase either, denote by $Q_G$ and $Q_H$ the torsos at $t$ in $(T, \XXX)$ for graphs $G$ and $H$ correspondingly. Every vertex of $Q_G$ corresponds to a non-empty subset of the vertices of
$G$. Depending on how the vertices $x$, $y$ and $z$ are split among these subsets, it holds that either $E(Q_H)\subseteq E(Q_G)$ (which yields the same or smaller 1-center and 2-center) or $Q_H$ is obtained from $Q_G$ by splitting a pair of edges. For the latter, observe that $v\in V(Q_G)\setminus X_t$ is not in the 2-center of $(Q_G, X_t)$ if and only if $v$ belongs to some induced subtree of $Q_G$ connected to the rest of $Q_G$ by at most one edge. It is not hard to see that lifting the pair of edges preserves the property. For the 1-center the situation is even simplier: isolated vertices of $Q_G$ remain isolated.
\end{proof}

Recall that the weak immersion relation $\le_I$ is a transitive, reflexive and antisymmetric relation on the set of finite graphs, i.e., a partial order. The previous theorem showed that $\stcw$ is monotone with respect to $\le_I$. Our next goal is to find graphs of simple structure but large slim (or $0$-) tree-cut width, such that forbidding them as weak immersions bounds the corresponding width of a graph. Wollan in \cite{Wollan15} characterized such graphs for tree-cut width. Namely, he established the following dichotomy:

\begin{theorem}
\label{thm: tcw2dichotomy}
(a) If $G$ is a graph such that $H_{2r^2}\le_I G$ for some $r\ge 3$, then $\tcw(G)\ge r$.
(b) There exists a function $f:N\to N$ such that if $\tcw(G)\ge f(r)$, then $H_r\le_I G$, $r\in N$.
\end{theorem}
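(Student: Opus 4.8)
The plan is to prove the two parts by entirely different routes: part~(a) reduces to a short application of immersion-monotonicity combined with the comparison inequalities of Lemma~\ref{lem:comparison}, whereas part~(b) is the substantial direction and plays the role of an excluded-grid theorem for tree-cut width.

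For (a) I would first use that $\tcw$ is closed under weak immersions; this is Wollan's original result, of which Theorem~\ref{thm: monotone} is the $\stcw$/$\gtcw$ analogue, and it gives $\tcw(G)\ge\tcw(H_{2r^2})$ whenever $H_{2r^2}\le_I G$. It then remains to bound the tree-cut width of the single fixed graph $H_{2r^2}$ from below by $r$. The obstructions $H_r$ are subcubic but have treewidth growing with the index---this is in fact forced, since by part~(b) each $H_r$ must embed into a large wall and hence cannot contain a high-degree vertex, which is also why a star or windmill cannot serve as $H_r$. Consequently I would simply feed the (large) treewidth of $H_{2r^2}$ into the inequality $\tw\le 2\tcw^2+3\tcw$ of Lemma~\ref{lem:comparison} and solve for $\tcw$ to obtain $\tcw(H_{2r^2})\ge r$, with the hypothesis $r\ge 3$ serving only to keep the rounding clean; alternatively one can argue directly on a hypothetical decomposition of width at most $r-1$ in the style of Lemmas~\ref{lem: stars} and~\ref{lem: triangles}.

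Part~(b) is where essentially all of the difficulty lies, and I would prove its contrapositive: if $G$ excludes $H_r$ as a weak immersion, then $\tcw(G)$ is bounded by a function of $r$. The guiding principle is that large tree-cut width can arise in only two ways. If the treewidth of $G$ is large, the Robertson--Seymour excluded-grid theorem yields a large wall minor, which---walls being subcubic---is already a weak immersion and contains $H_r$. Otherwise the treewidth is bounded, and since $\tcw\le 4\degtw^2$ by Lemma~\ref{lem:comparison}, some vertex must have large degree together with large local edge-connectivity to the rest of the graph, that is, a ``bold'' region in the sense of the torso; here I would invoke Menger's theorem to extract many edge-disjoint paths leaving that region and route a properly edge-coloured copy of $H_r$ through them, exactly as a large wall immerses into $K_{3,n}$. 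The main obstacle is to perform this second case uniformly and rigorously---to isolate, from the bare hypothesis of large $\tcw$, a single highly connected substructure and then route the edge-disjoint paths witnessing $H_r$ without collisions. This is precisely the content of Wollan's structure theorem for graphs excluding a fixed immersion~\cite{Wollan15}, which I would invoke rather than reprove.
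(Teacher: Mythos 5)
First, a point of reference: the paper does not prove Theorem~\ref{thm: tcw2dichotomy} at all---it is stated as Wollan's dichotomy and imported wholesale from~\cite{Wollan15}, so the only ``approach'' the paper takes is a citation. Your decision to invoke Wollan for part~(b) therefore coincides with what the paper does; but judged as a proof attempt, that step is circular. Part~(b) \emph{is} precisely the cited wall theorem, and the ``structure theorem for graphs excluding a fixed immersion'' you appeal to is either that very statement or the machinery from which it is derived, so ``invoking rather than reproving'' it leaves the substantial direction unproved. What you establish on your own for (b) is correct and worth keeping: since walls are subcubic, excluding $H_r$ as a weak immersion excludes it as a minor (for graphs of maximum degree three, minor, topological-minor and immersion containment coincide), so the excluded-grid theorem bounds $\tw(G)$; and your picture of the remaining case---bounded treewidth forces a few high-degree hubs pairwise joined by many edge-disjoint paths, through which a wall is routed via a proper $3$-edge-colouring exactly as $H_r$ immerses into $K_{3,n}$---is the right intuition. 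But note that high degree alone gives nothing (stars and windmills have huge degree and tiny $\tcw$); extracting the hub structure rigorously is the entire difficulty of (b), and your sketch does not do it.

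Part~(a) contains a genuine quantitative gap. Reducing to $\tcw(H_{2r^2})\ge r$ via immersion-monotonicity of $\tcw$ is fine, but the route through Lemma~\ref{lem:comparison} cannot reach the stated constant, because the treewidth of the $s$-wall is roughly $s/2$ rather than $s$: the wall contains only an $s\times\lfloor s/2\rfloor$ grid minor (contract one horizontal edge in each brick), a row-by-row sweep gives pathwidth $s/2+O(1)$, and indeed $\tw(H_2)=1$ and $\tw(H_3)=2$. Plugging $\tw(H_{2r^2})\approx r^2$ into $\tw\le 2\tcw^2+3\tcw$ yields only $\tcw(H_{2r^2})\gtrsim r/\sqrt{2}$, a constant-factor shortfall that the hypothesis $r\ge 3$ does not repair---this is not a rounding issue. (The weaker bound would still suffice for every use of the theorem in this paper, where $f$, $g$ and $h$ are unspecified anyway, but it does not prove the statement as written.) Your fallback of arguing ``directly in the style of Lemmas~\ref{lem: stars} and~\ref{lem: triangles}'' also does not transfer: those are counting arguments hinged on a single apex vertex incident to essentially the whole graph, which walls lack; a direct lower bound for walls must instead exploit their linkedness, which is a different and harder argument.
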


Here $H_r$ denotes the $r$-wall, the graph which can be obtained from the $r \times r$ grid by deleting every second vertical edge in each row, see \cite{Wollan15} for the definition and Figure \ref{fig:forbid_immersions} for an illustration.  
We are going to complete the family of excluded immersions to obtain similar characterizations for $0$-tree-cut width and slim tree-cut width. 

First, we establish some forbidden weak immersions that will be useful later.
\begin{lemma}
\label{lem:forbidden} 
If $\stcw(G) < r$ then $G$ does not admit any of the following weak immersions:
\begin{itemize}
\item $r^2$ cycles intersecting at one vertex which are otherwise pairwise vertex disjoint,
\item $2r^2$ paths with the same endpoints which are otherwise pairwise vertex disjoint.
\end{itemize}
\end{lemma}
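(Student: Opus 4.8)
The plan is to reduce everything to the single fact, established in Lemma~\ref{lem: triangles}, that the windmill $W_{r^2}$ satisfies $\stcw(W_{r^2})\ge r$, combined with the monotonicity of $\stcw$ under weak immersions from Theorem~\ref{thm: monotone}. Concretely, I would show that $W_{r^2}$ is itself a weak immersion of each of the two forbidden structures. Since $\le_I$ is a transitive relation, if $G$ admitted either structure $F$ as a weak immersion (so $F\le_I G$) and we knew $W_{r^2}\le_I F$, then $W_{r^2}\le_I G$, whence $\stcw(G)\ge \stcw(W_{r^2})\ge r$ by Theorem~\ref{thm: monotone} and Lemma~\ref{lem: triangles}; this contradicts $\stcw(G)<r$. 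Thus the whole lemma boils down to two immersion constructions (we may assume $r\ge 2$, the case $r=1$ being vacuous since $\stcw(G)<1$ forces $G$ to be edgeless).

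For the first item, let $F$ be the graph consisting of $r^2$ cycles sharing a single vertex $v$ and otherwise pairwise vertex disjoint. Every non-central vertex has degree exactly $2$, so each cycle $v-a_1-\dots-a_k-v$ (with $k\ge 2$) can be shortened independently: repeatedly lifting a consecutive pair of edges at an interior vertex $a_i$ and then deleting the resulting isolated vertex preserves the cycle while reducing its length by one, and the newly inserted chord never pre-exists because the cycles are internally disjoint. Carrying this out on every cycle until each has length $3$ leaves exactly $r^2$ triangles meeting at the untouched vertex $v$, i.e.\ the windmill $W_{r^2}$. Hence $W_{r^2}\le_I F$.

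For the second item I expect the main work. Let $F$ consist of $2r^2$ internally disjoint paths between two endpoints $u$ and $w$. Using the same interior-lifting move as above, I would first shorten every path to length at most $2$; since parallel edges are not allowed, at most one path can become the single edge $(u,w)$, while every other path becomes $u-c_i-w$ for a private interior vertex $c_i$. This yields either $K_{2,2r^2}$, or $K_{2,2r^2-1}$ together with the edge $(u,w)$. The key step is then to \emph{detach} the triangles from $w$: for a pair $c_i,c_j$, lifting the pair $(c_i,w),(w,c_j)$ removes two edges at $w$ and inserts the (new) edge $(c_i,c_j)$, turning $u,c_i,c_j$ into a triangle based at $u$. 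Pairing up the $c_i$'s and applying this lift to each pair produces $r^2$ triangles all sharing the vertex $u$: in the case where the edge $(u,w)$ survives, the one leftover $c_i$ forms the final triangle $u-c_i-w$ via that base edge (and $w$ ends with degree $2$); otherwise $w$ becomes isolated and is deleted. Either way the result is exactly $W_{r^2}$, so $W_{r^2}\le_I F$.

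The only delicate points are bookkeeping: verifying that each lift genuinely creates a \emph{new} edge (guaranteed by the internal disjointness of the cycles/paths) and that the final multiset of triangles has the windmill's incidence structure (all triangles sharing one vertex, every other vertex of degree $2$). Granting these, the two constructions combine with Theorem~\ref{thm: monotone}, Lemma~\ref{lem: triangles}, and transitivity of $\le_I$ to give the claimed exclusion. I anticipate the paths case to be the principal obstacle, precisely because the \emph{two} shared endpoints must be collapsed into a \emph{single} shared vertex, which is exactly what the lifting-at-$w$ trick accomplishes.
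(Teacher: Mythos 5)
Your proposal is correct and takes essentially the same route as the paper: both arguments reduce each forbidden configuration to a weak immersion of the windmill $W_{r^2}$ by lifting operations, and then conclude via Lemma~\ref{lem: triangles} and the monotonicity of $\stcw$ under $\le_I$ from Theorem~\ref{thm: monotone}. The only difference is cosmetic---the paper merges pairs of paths into cycles by lifting at one endpoint and then shortens cycles into triangles, whereas you shorten the paths first and then merge at $w$; your version is, if anything, slightly more careful about ensuring each lift creates a genuinely new edge.
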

\begin{proof}
We will show that whenever $G$ admits any of the listed immersions, it also admits an immersion of $W_{r^2}$. The statement will then follow from Lemma \ref{lem: triangles} and Theorem \ref{thm: monotone}. The case of $2r^2$ paths can be reduced to the case of $r^2$ cycles as follows: let $x$ and $y$ be the endpoints, we arbitrarily form $r^2$ pairs of paths and in each pair lift the edges adjacent to $y$. The resulting graph consists of $r^2$ cycles intersecting at $x$. Further, we consequently lift the pairs of edges of each cycle to make it a triangle, which results in $W_{r^2}$. 
\end{proof}

\begin{figure}[htb]
\includegraphics[width=\textwidth]{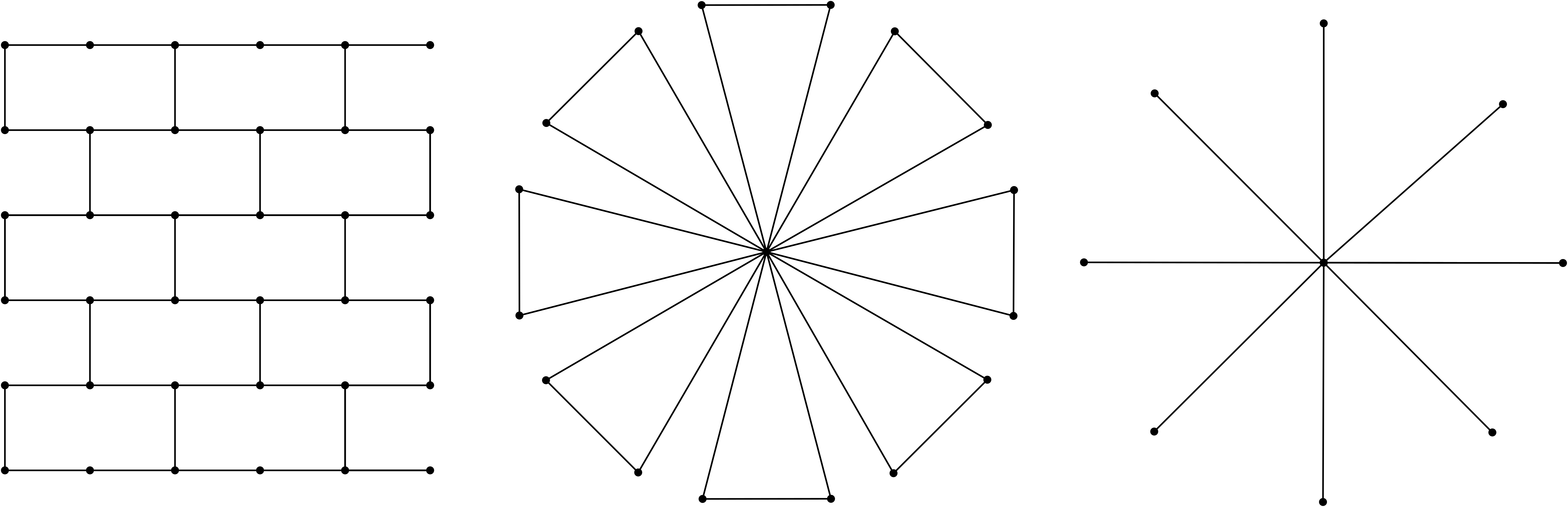}
\caption{Illustrations of forbidden weak immersions for the graphs with bounded standard, slim or $0$-tree-cut width. Left: 6-wall $H_6$, Middle: windmill $W_8$, Right: star $S_8$.} 
\label{fig:forbid_immersions}
\end{figure}

We already know that the families of stars $S_r$ and windmils $W_r$ have unbounded $0$- and slim tree-cut width, respectively. As we will show in the remainder of this subsection, excluding $W_r$ ($S_r$) as weak immersions along with $H_r$ guarantees bounded slim tree-cut width ($0$-tree-cut width).  

\begin{theorem}
Let $G$ be a graph and $r \ge 1$ a positive integer. If $H_{2r^2}\le_I G$ or $S_{r^2}\le_I G$ for some $r\ge 3$, then $\gtcw(G)\ge r$.
Moreover, there exists a function $h:N\to N$ such that if $\gtcw(G)\ge h(r)$, then $H_r\le_I G$ or $S_{r}\le_I G$.
\end{theorem}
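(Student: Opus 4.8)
The first part should follow with essentially no new work from the tools already in place, so the plan is to dispose of it immediately. For the wall obstruction, Theorem~\ref{thm: tcw2dichotomy}(a) yields $\tcw(G)\ge r$ whenever $H_{2r^2}\le_I G$, and since $\tcw(G)\le\gtcw(G)$ by Corollary~\ref{cor:tcw012} we get $\gtcw(G)\ge r$. For the star obstruction, $\gtcw$ is monotone under weak immersions by Theorem~\ref{thm: monotone}, so $S_{r^2}\le_I G$ together with Lemma~\ref{lem: stars} gives $\gtcw(G)\ge\gtcw(S_{r^2})\ge r$.

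The substance lies in the second part, which I would prove by contraposition: assuming $H_r\not\le_I G$ and $S_r\not\le_I G$, I will bound $\gtcw(G)$ by a function of $r$. Wall-freeness gives, via Theorem~\ref{thm: tcw2dichotomy}(b), a bound $\tcw(G)<f(r)=:k$, so $G$ admits a tree-cut decomposition of width less than $k$, which may be assumed nice since the niceness transformation preserves the width. The goal then reduces to bounding $\tor_1(t)$ at every node $t$. Since the $1$-center retains $X_t$ together with exactly the consolidated components that are non-isolated in the torso, and the consolidated vertex of a child $b$ has degree $\adh(b)\ge 1$ (for connected $G$), every child contributes one vertex and $\tor_1(t)\le |X_t|+1+(\#\text{children of }t)$. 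As the $3$-center preserves $X_t$, we have $|X_t|\le\tor(t)\le k$, so everything comes down to bounding the number of children of $t$.

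The number of children splits as $|A_t|+|B_t|$; for a nice decomposition $|A_t|\le 2k+1$ is already known, so the crux, and the main obstacle, is bounding $|B_t|$. This is precisely where the absence of a large star must be exploited. Every child $b\in B_t$ satisfies $N(Y_b)\subseteq X_t$, hence is joined to some $x\in X_t$ by an edge $xv_b$ with $v_b\in Y_b$, and since the sets $Y_b$ are pairwise disjoint the chosen endpoints $v_b$ are distinct. Therefore, if $r$ or more children of $t$ attached to the same $x$, deleting all vertices and edges except $x$ and the edges $xv_b$ would exhibit $S_r\le_I G$, a contradiction. Thus at most $r-1$ children attach to each vertex of $X_t$, and summing over $X_t$ gives $|B_t|\le |X_t|(r-1)\le k(r-1)$. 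Combining the estimates yields $\tor_1(t)\le k+1+(2k+1)+k(r-1)$, which is bounded by a function of $r$; taking the maximum over $t$ and recalling $\adh(t)<k$ produces the required $h$.

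I expect the remaining points to be routine bookkeeping rather than genuine difficulties. One should check that the star-extraction argument applies uniformly to all of $B_t$, including adhesion-$2$ children, and in particular the degenerate case where such a child sends both its edges to a single vertex of $X_t$, which still supplies at least one usable edge $xv_b$. One should also note that disconnected components or adhesion-$0$ children are harmless, since their consolidated vertices are isolated in the torso and hence deleted in the $1$-center. Once the star extraction is set up carefully, neither point obstructs the argument.
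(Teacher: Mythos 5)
Your proposal is correct and takes essentially the same approach as the paper's proof: the first part via Theorem~\ref{thm: tcw2dichotomy}, Corollary~\ref{cor:tcw012}, Theorem~\ref{thm: monotone} and Lemma~\ref{lem: stars}, and the second part via a nice tree-cut decomposition of width at most $f(r)$, the bound $|A_t|\le 2f(r)+1$, and a pigeonhole argument over $X_t$ that extracts an $S_r$ immersion when $B_t$ is large. The only difference is presentational---you argue contrapositively (no $S_r$ immersion forces $|B_t|\le f(r)(r-1)$, hence $\tor_1(t)$ is bounded), whereas the paper argues directly (a node with $\tor_1(t)\ge h(r)$ forces $|B_t|\ge r\cdot f(r)$ and hence a vertex of $X_t$ of degree at least $r$), and your explicit handling of adhesion-zero children is a small point the paper glosses over.
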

\begin{proof}
If $H_{2r^2}\le_I G$ for some $r\ge 3$, we have that $\tcw(G)\ge r$ by Theorem \ref{thm: tcw2dichotomy} and hence $\gtcw(G)\ge r$. In case $S_{r^2}\le_I G$, the lower bound follows from Theorem \ref{thm: monotone} and Lemma \ref{lem: stars}.

Let $f$ be the function given by Theorem \ref{thm: tcw2dichotomy}. We define $h$ by setting $h(r)=r\cdot f(r)+3\cdot f(r)+2$. Assume that $G$ is a graph such that $\gtcw(G) \ge h(r)$. If $\tcw(G) \ge f(r)$, we immediatedly conclude that $H_r\le_I G$ by Theorem \ref{thm: tcw2dichotomy}. Otherwise, let $(T,\XXX)$ be a nice tree-cut decomposition of $G$ of width at most $f(r)$ with leaves having non-empty bags. There exists a node $t$ of $T$ such that $\tor_1(t)\ge h(r)$, in particular, $B_t\ge r \cdot f(r)$. As the size of $X_t$ is at most $f(r)$, some vertex of $X_t$ has degree of at least $r$ and hence $S_{r}\le_I G$. 
\end{proof}

Before providing similar characterization for slim tree-cut width, we introduce a simple technical modification of tree-cut decompositions, which will also be used later for 
establishing the connection between slim tree-cut width and edge-cut width. The aim is, roughly speaking, to avoid the situation where a thin child has adhesion $2$, even though it consists of two completely independent components each of which could be a thin child of adhesion $1$.
Formally, let $(T, \XXX)$ be a nice tree-cut decomposition of $G$. We say that a node $t$ with parent $t'$ in $T$ is \emph{decomposable} if the following conditions hold: 
\begin{itemize}
 \item $t\in B_{t'}$ and there exist two edges $e_1$ and $e_2$ between $G_t$ and $G\setminus G_t$ in $G$;
 \item the endpoints of $e_1$ and $e_2$ in $G_t$ belong to different connected components of $G_t$.
\end{itemize}
\begin{lemma}
\label{lem: no_decomposable}Any nice tree-cut decomposition of $G$ can be transformed into a nice tree-cut decomposition of the same tree-cut width with no decomposable nodes. 
\end{lemma}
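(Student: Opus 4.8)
The statement to prove is Lemma~\ref{lem: no_decomposable}: any nice tree-cut decomposition can be converted, without increasing tree-cut width, into one with no decomposable nodes. My plan is to give a local surgery on the decomposition that eliminates one decomposable node, argue it strictly decreases a suitable potential, and then iterate. Recall a node $t$ (with parent $t'$) is decomposable when $t\in B_{t'}$ (so $t$ is a thin child with $|N(Y_t)|\le 2$ and $N(Y_t)\subseteq X_{t'}$) and the two edges $e_1,e_2$ crossing the cut $E_t$ have their $G_t$-endpoints in different connected components of $G_t$. The key observation is that since $N(Y_t)\subseteq X_{t'}$ and $G_t=G[Y_t]$ may be disconnected, each connected component of $G_t$ is attached to the rest of the graph by at most the edges of $E_t$ lying in it; by the decomposability hypothesis $e_1$ sits in one component and $e_2$ in another, so every connected component of $G_t$ is joined to $G\setminus G_t$ by at most one edge.

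\medskip

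\textbf{The surgery.} First I would split the bag $X_t$ and the whole subtree $T_t$ along the connected components of $G_t$. Concretely, let $C_1,\dots,C_m$ be the connected components of $G_t$. I replace the single child $t$ of $t'$ by several children $t_1,\dots,t_\ell$, where each $t_i$ gathers the bags of $T_t$ restricted to a union of components of $G_t$, arranged so that each new subtree contains at most one edge of $E_t$ in its cut. The cleanest version is to make the split so that each new child $t_i$ satisfies $\adh(t_i)\le 1$: group the components containing $e_1$ into one new child and those containing $e_2$ into another (and components attached by no crossing edge can go with either, since they contribute $0$ to the cut). Because $N(Y_t)\subseteq X_{t'}$, all the new children remain thin children in $B_{t'}$, and the reattachment does not create crossing edges to any sibling's subtree, so niceness is preserved.

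\medskip

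\textbf{Why the width does not increase.} I must check both width quantities at every affected node. At each new child $t_i$ we have $\adh(t_i)\le \adh(t)\le 2$, and since the torso at $t_i$ is an induced piece of the torso at $t$, both $\tor(t_i)$ and the $3$-center size are bounded by the corresponding values at $t$; the adhesions and torsos of all \emph{other} nodes of $T$ are untouched because the union of the new subtrees covers exactly $Y_t$ with the same global cut structure. The one place needing care is the parent $t'$: splitting one child into several increases the number of children of $t'$, and I must verify this does not blow up $\tor(t')$. Here the point is that all new children lie in $B_{t'}$ (thin children with neighborhood inside $X_{t'}$), and such children are suppressed when forming the $3$-center of the torso at $t'$ — they contribute to $\tor$ only through the edges they induce in $X_{t'}$, which are unchanged. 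Hence $\tor(t')$ is unaffected, and the tree-cut width of the whole decomposition is preserved.

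\medskip

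\textbf{Iteration and the main obstacle.} Finally I would argue termination: each application of the surgery strictly increases the number of connected, cut-$1$ thin children while keeping width fixed, so using a potential such as the number of thin children of adhesion exactly $2$ (which strictly drops, since a decomposable adhesion-$2$ child is replaced by children of adhesion at most $1$) gives a terminating process, and an exhaustive application yields a decomposition with no decomposable nodes. The main obstacle I anticipate is \emph{not} the width bookkeeping at $t'$ but rather confirming that the transformation genuinely preserves niceness and does not secretly create a \emph{new} decomposable node elsewhere, i.e.\ that the chosen potential really is monotone under the surgery; I would therefore choose the potential carefully (e.g.\ summing $\adh(b)$ over thin children, or counting adhesion-$2$ thin children weighted by subtree size) so that the local move is provably strictly decreasing and the iteration is guaranteed to halt at a decomposable-free decomposition of the same width.
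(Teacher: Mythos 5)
Your surgery is essentially identical to the paper's: it also replaces the decomposable node $t$ by two copies of the whole subtree $T_t$, one with bags restricted to the connected component $G_1$ of $G_t$ containing the endpoint of $e_1$, the other with bags restricted to $Y_t\setminus V(G_1)$, both attached to the parent of $t$; both copies then have adhesion $1$, and niceness and width are preserved just as you argue. The genuine gap lies exactly where you anticipated it: termination. None of the potentials you propose is monotone under the surgery. The reason is that restricting the bags splits the cut of every descendant $s$ of $t$ between its two copies: since $G_1$ is a connected component of $G_t$, no edge joins $Y_s\cap V(G_1)$ to $Y_s\setminus V(G_1)$, so the adhesions of the two copies of $s$ sum to $\adh(s)$. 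In particular, a descendant of adhesion $3$ can produce a copy that is a thin child of adhesion exactly $2$ (and, if its two cut edges end in different components, a brand-new decomposable node). Thus ``the number of adhesion-$2$ thin children'' can increase: the surgery removes one such child, namely $t$, but may create many deeper ones. The same example breaks ``$\sum \adh(b)$ over thin children'' (a non-thin node of adhesion $3$ contributes $0$ before and $2+1=3$ after the split), and the subtree-size-weighted count fails too, because the newly created adhesion-$2$ nodes can be nested inside one another, so their total weight need not be smaller than the weight of $t$.

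The paper resolves this not with a potential but with a processing order: at each step it decomposes a decomposable node $t$ of \emph{minimum distance to the root}. Nodes outside $T_t$ are unaffected, so no decomposable node appears at depth at most that of $t$; the two replacement nodes have adhesion $1$ and hence can never become decomposable; and all newly created decomposable nodes lie strictly inside the subtrees rooted at the two copies. Consequently the frontier of decomposable nodes moves strictly downward (formally, the vector counting decomposable nodes per depth level decreases lexicographically, while the depth of the tree never increases), so the process halts, and every node is decomposed at most once --- which is also what yields the quartic running time claimed in Corollary~\ref{cor: verynice}. If you replace your potential-function argument by this choice of $t$ and the accompanying observation, your proof goes through and coincides with the paper's.
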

\begin{proof}
Let $(T', \XXX')$ be a nice tree-cut decomposition of $G$ with at least one decomposable node. Let $t$ be a decomposable node of $T'$ with minimum distance to the root, and let $e_1$ and $e_2$ be the edges between $G_t$ and $G\setminus G_t$ in $G$. We create a copy $T'_{t^*}$ of the rooted subtree $T'_t$ where the copy of $s\in T'_t$ is $s^*\in T'_{t^*}$. We then connect $t^*$ to the parent of $t$. Let $G_1$ be the connected component of $G_t$ containing an endpoint of $e_1$. For every $s \in V(T'_t)$ we set $X_s=X'_s \cap V(G_1)$ and $X_{s^*}=X'_s\setminus X_s$. For the rest of nodes $s$ of $T'$ we set $X_s=X'_s$. Finally, we exhaustively remove empty bags which are leaves and denote the obtained tree by $T$. Observe that the resulting decomposition $(T,\XXX)$ is nice and its width is not greater than the width of $(T', \XXX')$. Moreover, our transformation doesn't create any decomposable nodes outside of subtrees rooted in $t$ and $t^*$; both $t$ and $t^*$ have an adhesion of one and hence are not decomposable. Therefore, after a finite number of such steps we obtain some nice tree-cut decomposition of $G$ of the same width but with no decomposable nodes.
\end{proof}
Further, as a technical term, we will refer to nice decompositions with no decomposable nodes as \emph{very nice} decompositions. 
\begin{corollary}
\label{cor: verynice}
Every tree-cut decomposition can be transformed into a very nice tree-cut decomposition in quartic time, without increasing the width. 
\end{corollary}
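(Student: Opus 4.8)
The plan is to derive Corollary~\ref{cor: verynice} by composing two transformations. First I would call the cubic-time procedure of~\cite{Ganian0S15} to replace the input tree-cut decomposition by a \emph{nice} one of the same width. Then I would make it very nice by repeatedly applying the single elimination step from the proof of Lemma~\ref{lem: no_decomposable}, which takes a topmost decomposable node and splits it into two adhesion-$1$ nodes. Correctness---preservation of niceness, of the width, and the fact that no decomposable node is created closer to the root---is already supplied by that lemma, so for the corollary only the running time has to be established: I must (i) bound the number of elimination steps polynomially and (ii) bound the cost of a single step together with the search for the node to process.

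For (i) I would track the potential $\Phi(T,\XXX)=\sum_{t\in V(T)}\max(\operatorname{comp}(G_t)-1,\,0)$, where $\operatorname{comp}(G_t)$ is the number of connected components of $G_t$. Fix one step applied to a topmost decomposable node $t$, and recall that it replaces the subtree $T_t$ by two subtrees carrying $G_1$ and $G_t\setminus G_1$, where $G_1$ is the component of $G_t$ hit by $e_1$. Every component of $G_s$, for a descendant $s$ of $t$, lies inside a single component of $G_t$, hence entirely inside $G_1$ or entirely inside $G_t\setminus G_1$; consequently distributing $s$ between its two copies never increases the contribution of that level to $\Phi$. At $t$ itself the disconnected graph $G_t$ (with $\operatorname{comp}(G_t)\ge 2$) is replaced by the connected $G_1$ and by $G_t\setminus G_1$, which strictly lowers the contribution by one. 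Thus each step decreases $\Phi$ by at least one. For the upper bound on $\Phi$, I would use that in a connected graph every component of $G_t$ must send an edge across the cut $E_t$, so $\operatorname{comp}(G_t)\le\adh(t)\le k$; a disconnected $G$ adds only the number of whole components of $G$ contained in $Y_t$, which keeps $\Phi$ polynomial. Hence $\Phi=\bigoh(k\cdot|V(T)|)$ initially.

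The step I expect to be the genuine obstacle is controlling $|V(T)|$: the elimination duplicates an entire subtree and could, a priori, make the tree grow uncontrollably. I would neutralise this by keeping the decomposition \emph{reduced} after every step---in addition to the pruning of empty leaves already present in Lemma~\ref{lem: no_decomposable}, I would also suppress every empty internal node of tree-degree two. Such a suppression leaves each surviving $Y_s$ (and therefore all adhesions, torsos, niceness and decomposability) unchanged and can only decrease $\Phi$. In a reduced decomposition the non-empty bags are pairwise disjoint non-empty subsets of $V(G)$, so there are at most $n=|V(G)|$ of them, every empty node has tree-degree at least three, and a tree has fewer branching nodes than leaves; together these give $|V(T)|=\bigoh(n)$ throughout. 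Combined with the previous paragraph this bounds $\Phi$, and hence the number of elimination steps, by a polynomial in $n$.

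Finally, a single step is cheap: locating a topmost decomposable node reduces to computing, for each node, the components of $G_t$ and the edges leaving $Y_t$, and performing the split, the redistribution of bags and the reduction touches only the affected subtree; all of this is polynomial in the size of $G$. Multiplying the polynomial number of steps by the per-step cost, and adding the cubic-time passage to a nice decomposition, gives a polynomial procedure whose careful accounting yields the claimed quartic bound; since neither transformation increases the width, the output is a very nice decomposition of the same width, as required.
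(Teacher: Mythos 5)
Your construction is the same as the paper's: pass to a nice decomposition via the cubic-time procedure of~\cite{Ganian0S15}, then exhaustively apply the elimination step of Lemma~\ref{lem: no_decomposable}. Where you genuinely diverge is in the part that is the actual content of the corollary, namely the running-time analysis. The paper simply asserts that ``every node of $T'$ is decomposed at most once'' and multiplies a linear number of steps by a cubic per-step cost. Your potential $\Phi(T,\mathcal{X})=\sum_{t}\max(\operatorname{comp}(G_t)-1,0)$, together with the ``reduced decomposition'' invariant that caps $|V(T)|$ at $\bigoh(n)$, is a more careful treatment: it correctly handles the fact that an elimination step duplicates a whole subtree, so that \emph{both} copies of a former descendant can later become decomposable (one can build small examples where this happens, e.g.\ a child whose bag meets two components of $G_t$ on each side of the split). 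In that sense your termination argument is more robust than the paper's one-line claim, and your verification that each step decreases $\Phi$ while descendants' contributions cannot increase is correct.

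The one loose end is the exponent. Your step bound is $\Phi_0=\bigoh(k\cdot|V(T)|)$ (plus a term for the components of a disconnected $G$), not $\bigoh(|V(T)|)$, so multiplying by a cubic per-step cost gives an extra factor of $k$ (hence up to $n$) beyond quartic; the closing sentence that ``careful accounting yields the claimed quartic bound'' is asserted rather than derived. This is a blemish rather than a fatal flaw: the transformation, its correctness, the preservation of width, and a polynomial bound are all established, and every downstream use in the paper (in particular the approximation algorithm, whose running time already carries a $2^{\bigoh(\omega^2\log\omega)}$ factor with $k\le 2\omega$) tolerates the extra factor of $k$. But if you want to match the stated quartic bound exactly, you would need either a sharper step count or a cheaper (e.g.\ amortized, subtree-local) per-step implementation.
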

\begin{proof}
Let $(T'',\mathcal{X}'')$ be a tree-cut decomposition of $G$ of width $k$. We transform $(T'',\mathcal{X}'')$ into a nice tree-cut decomposition $(T',\mathcal{X}')$ of width at most $k$ (this can be done in cubic time, see \cite{Ganian0S15} for details). Further, we apply Lemma \ref{lem: no_decomposable} on $(T',\mathcal{X}')$. This requires at most quartic time, since every node of $T'$ is decomposed at most once and every such decomposition can be performed in cubic time. Then the resulting decomposition $(T,\mathcal{X})$ is very nice and has width of at most $k$.
\end{proof}
With this transformation in hand, we are now ready to fully characterize forbidden weak immersions for graphs of bounded slim tree-cut width.
\begin{theorem}
Let $G$ be a graph and $r \ge 1$ a positive integer. If $H_{2r^2}\le_I G$ or $W_{r^2}\le_I G$ for some $r\ge 3$, then $\stcw(G)\ge r$.
Moreover, there exists a function $g:N\to N$ such that if $\stcw(G)\ge g(r)$, then $H_r\le_I G$ or $W_{r}\le_I G$.
\end{theorem}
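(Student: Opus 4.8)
The theorem has two directions, mirroring the structure of the preceding $\gtcw$ dichotomy. The plan is to handle the lower-bound (easy) direction first and then the upper-bound direction, which is where the real work lies.

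For the lower bound, suppose $H_{2r^2}\le_I G$ or $W_{r^2}\le_I G$ for some $r\ge 3$. In the wall case, Theorem~\ref{thm: tcw2dichotomy} gives $\tcw(G)\ge r$, and since $\tcw(G)\le\stcw(G)$ by Corollary~\ref{cor:tcw012}, we get $\stcw(G)\ge r$. In the windmill case, Lemma~\ref{lem: triangles} gives $\stcw(W_{r^2})\ge r$, and since $\stcw$ is monotone under weak immersions by Theorem~\ref{thm: monotone}, $W_{r^2}\le_I G$ forces $\stcw(G)\ge\stcw(W_{r^2})\ge r$. This is a direct transcription of the argument already used for $\gtcw$ and stars.

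For the upper bound, I would define $g$ analogously to $h$ in the $\gtcw$ theorem, roughly $g(r)=r\cdot f(r)+3f(r)+2$ (possibly with slightly larger constants to absorb the windmill structure), where $f$ is the function from Theorem~\ref{thm: tcw2dichotomy}. Assume $\stcw(G)\ge g(r)$. If $\tcw(G)\ge f(r)$, then $H_r\le_I G$ by Theorem~\ref{thm: tcw2dichotomy} and we are done. Otherwise, take a \emph{very nice} tree-cut decomposition $(T,\XXX)$ of width at most $f(r)$ (available via Corollary~\ref{cor: verynice}), with leaves having non-empty bags. Since the slim width exceeds $g(r)$ while the ordinary width is at most $f(r)$, some node $t$ has $\tor_2(t)\ge g(r)$, so by Corollary~\ref{cor: thin2lowerbound} the set $B_t^{(2)}$ of thin children with adhesion exactly two is large---at least $r\cdot f(r)$ after subtracting the $3f(r)+2$ overhead. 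The key point is that each child in $B_t^{(2)}$ has adhesion precisely two with both edges landing in $X_t$, and because the decomposition is \emph{very nice}, such a child is \emph{not} decomposable, meaning the two edges connect a single connected component of $G_b$ to $X_t$. This lets me route, inside $G_b\cup X_t$, a path between the two (possibly equal) attachment vertices in $X_t$, and by lifting I can reduce each such child to a single edge or a pair of parallel edges between two vertices of $X_t$. Since $|X_t|\le f(r)$, a counting/pigeonhole argument over the $\binom{|X_t|}{2}+|X_t|$ possible endpoint-pairs shows that at least $r^2$ of these children attach to the \emph{same} pair (or single vertex) of $X_t$, yielding $r^2$ internally-disjoint paths/cycles sharing common endpoints---which lift to a windmill $W_{r^2}$, hence $W_r\le_I G$.

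The main obstacle I anticipate is the careful use of the very-nice property to guarantee that each adhesion-two thin child genuinely contributes a single connected path between two points of $X_t$, rather than two independent fragments that cannot be merged into one path or cycle. This is precisely why the technical detour through Lemma~\ref{lem: no_decomposable} and Corollary~\ref{cor: verynice} was set up, and the crux of the argument is verifying that non-decomposability forces the two adhesion edges to share a connected component in $G_b$, so that lifting produces a bona fide cycle (or path with common endpoints) rather than a degenerate structure. The subsequent pigeonhole to collapse $B_t^{(2)}$ down to $r^2$ children with identical endpoints in $X_t$, and the final lifting into a windmill via Lemma~\ref{lem:forbidden}, should then be routine.
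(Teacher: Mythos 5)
Your proposal follows essentially the same route as the paper's proof: the lower bound via the wall dichotomy together with windmill monotonicity, and the upper bound via a very nice decomposition, Corollary~\ref{cor: thin2lowerbound}, non-decomposability of the adhesion-two thin children (so each genuinely yields a path between two, possibly equal, vertices of $X_t$), and a pigeonhole over endpoint-pairs in $X_t$ followed by lifting into a windmill. The only correction needed is quantitative: since the pigeonhole ranges over roughly $f(r)^2$ endpoint-pairs rather than $f(r)$ single vertices, the function $g$ must carry an extra factor of $f(r)$ (the paper takes $g(r)=2r\cdot f^2(r)+3f(r)+2$), so your tentative $g(r)=r\cdot f(r)+3f(r)+2$ is too small to support the counting you describe---a slack you yourself flagged when hedging about the constants.
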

\begin{proof}
If $H_{2r^2}\le_I G$ for some $r\ge 3$, we have that $\tcw_2(G)\ge r$ by Theorem \ref{thm: tcw2dichotomy} and hence $\stcw(G)\ge r$. In case $W_{r^2}\le_I G$, the lower bound follows from Lemma~\ref{lem:forbidden}.

Let $f$ be the function given by Theorem \ref{thm: tcw2dichotomy}. We define $g$ by setting $g(r)=2r\cdot f^2(r)+3\cdot f(r)+2$. Assume that $G$ is a graph such that $\stcw(G) \ge g(r)$. If $\tcw(G) \ge f(r)$, we immediatedly conclude that $H_r\le_I G$ by Theorem \ref{thm: tcw2dichotomy}. Otherwise, by Corollary \ref{cor: verynice} there exists a very nice tree-cut decomposition $(T,\XXX)$ of $G$ of width at most $f(r)$. Let us pick a node $t$ of $T$ such that $\tor_2(t)\ge g(r)$. By Corollary \ref {cor: thin2lowerbound} we have that $|B_t^{(2)}| \ge g(r)-3 \cdot f(r)-2 = 2r\cdot f^2(r)$. Since $(T,\XXX)$ is very nice, all the children of $t$ in $B_t^{(2)}$ are non-decomposable. Recall that for every $t'\in B_t^{(2)}$, the neighbourhood of $Y_{t'}$ in $G$ is a one- or two-element subset of $X_t$, and hence $Y_{t'}$ provides a path between some (possibly equal) vertices of $X_t$. As the size of $X_t$ is at most $f(r)$, $G$ contains either $2r$ cycles intersecting in one vertex of $X_t$ or $2r$ paths between two vertices of $X_t$. Analogously to the proof of Lemma~\ref{lem:forbidden}, in both cases $W_{r}\le_I G$.
\end{proof}
\subsection{$k$-Edge Sums} 
Another natural property Wollan~\cite{Wollan15} established for tree-cut width is that the parameter is closed under the operation of taking $k$-edge sum for small $k$. Specifically, he proved the following:
\begin{lemma}[\hspace{-0.0001cm}\cite{Wollan15}]
Let $G$, $G_1$, and $G_2$ be graphs such that $G = G_1\oplus_k G_2$. If $G_j$ has a tree-cut
decomposition $(T_j, \XXX_j)$ for $j = 1,2$, then $G$ has a tree-cut decomposition $(T,\XXX)$ such that $\adh(T,\XXX)=\max\{k,\adh(T_1,\XXX_1), \adh(T_2, \XXX_2)\}$. Moreover, for every $t\in V(T)$, the torso $H_t$ of $t$ in $(T,\XXX)$ is isomorphic to the torso of some vertex of $(T_1,\XXX_1)$ or $(T_2,\XXX_2)$. 
\end{lemma}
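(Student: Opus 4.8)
The plan is to construct the decomposition $(T,\XXX)$ of $G$ by gluing $(T_1,\XXX_1)$ and $(T_2,\XXX_2)$ at the nodes hosting the apex vertices $v_1$ and $v_2$. Write $s_i$ for the node of $T_i$ whose bag contains $v_i$. As a preprocessing step I would first isolate each $v_i$ into its own bag: attach to $s_i$ a fresh leaf $\ell_i$, set $X_{\ell_i}=\{v_i\}$, and delete $v_i$ from $X_{s_i}$. Since $v_i$ has degree exactly $k$, the only new adhesion this creates is $\adh(\ell_i)=k$, and $\ell_i$ is discarded in the next step, so this does not affect the final width. The tree $T$ is then obtained by removing the two pendant leaves $\ell_1,\ell_2$ (together with $v_1,v_2$) and adding a single interface edge between $s_1$ and $s_2$; all remaining bags are inherited unchanged. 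This is consistent because, after removing $v_1$ and $v_2$, the bags of $(T_1,\XXX_1)$ and $(T_2,\XXX_2)$ already form a near-partition of $V(G)=(V(G_1)\setminus v_1)\cup(V(G_2)\setminus v_2)$.

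For the adhesion formula I would use a \emph{rerouting} observation. Every edge internal to $G_1\setminus v_1$ (resp. $G_2\setminus v_2$) crosses exactly the same tree edges of $T$ as it did in $T_1$ (resp. $T_2$). For a sum-edge $(u,\pi(u))$, arising from the pendant pair $(u,v_1)$ and $(\pi(u),v_2)$, the unique $T$-path between its endpoints consists of the old $T_1$-path from $u$ to $s_1$, the new interface edge $s_1s_2$, and the old $T_2$-path from $s_2$ to $\pi(u)$. Because $G_2$ is attached precisely at $s_1$, the endpoint $\pi(u)$ lies on the same side of every $T_1$-node $t$ as $v_1$ did before; hence on each side the sum-edge crosses exactly the tree edges that the corresponding pendant edge crossed. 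Thus every $\adh(t)$ with $t$ in the $T_i$-part is preserved, while $s_1s_2$ is crossed by exactly the $k$ sum-edges, giving $\adh(T,\XXX)=\max\{k,\adh(T_1,\XXX_1),\adh(T_2,\XXX_2)\}$.

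The torso claim rests on the fact that consolidation collapses the entire opposite side of the interface into a single vertex. Fix a node $t$ in the $T_1$-part. In the torso $H_t$, the whole $G_2$-part lies inside one connected component of $T-t$ (the one meeting $s_2$), so it is consolidated into a single vertex $z$, and the $k$ sum-edges leaving that component attach $z$ to exactly the vertices that were the $G_1$-neighbours of $v_1$. This is precisely the incidence pattern of the single consolidated vertex representing $v_1$ in $(T_1,\XXX_1)$. Consequently $H_t$ is isomorphic to the torso of $t$ in $(T_1,\XXX_1)$, the isomorphism sending $z$ to the $v_1$-vertex; the symmetric statement holds in the $T_2$-part, and the discarded leaves $\ell_1,\ell_2$ contribute no node to $T$.

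The main obstacle I anticipate is making the interface bookkeeping fully rigorous at the two nodes $s_1,s_2$. Two points need care. First, one must check that isolating $v_i$ into a leaf does not silently change the torso at $s_i$ in a way that breaks the match to an \emph{original} torso: the subtlety is that $v_i$ migrates from being a bag vertex of $X_{s_i}$ to being represented by a consolidated (non-bag) vertex, and one has to argue this leaves the torso unchanged as a (multi)graph, and hence its center unaffected in size. Second, since consolidation may create parallel edges, all claimed isomorphisms must be verified at the level of multigraphs, tracing each sum-edge back to the pendant pair it came from and using that $\pi$ is a bijection, so that no incidence of $v_i$ is lost or duplicated. Everything else reduces to routine verification against the definitions of adhesion and torso.
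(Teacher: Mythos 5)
The paper does not prove this lemma at all---it is quoted verbatim from Wollan~\cite{Wollan15}---so there is no internal proof to compare against. Your construction (deleting $v_1,v_2$ from their bags, joining the two trees by the interface edge $s_1s_2$, the rerouting observation that each sum-edge $(u,\pi(u))$ crosses exactly the tree edges crossed by the old pendant edges $(u,v_1)$ and $(\pi(u),v_2)$ plus the interface edge, and the torso matching via the consolidated vertex that absorbs the entire opposite side) is correct and is essentially the same gluing argument as Wollan's original proof; the pendant-leaf preprocessing with $\ell_1,\ell_2$ is sound but redundant scaffolding, since deleting the leaves immediately recovers the direct construction.
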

Based on this result for optimal decompositions $(T_1,\XXX_1)$ and $(T_2,\XXX_2)$, we immediatedly obtain the upper bound on 0- and slim tree-cut width for $k$-edge sums:
\begin{corollary}
Let $G$, $G_1$ and $G_2$ be graphs such that $G = G_1\oplus_k G_2$. Then it holds that $\stcw(G) \le \max\{k, \stcw(G_1), \stcw(G_2)\}$ and $\gtcw(G) \le \max\{k,\gtcw(G_1), \gtcw(G_2)\}$.
\end{corollary}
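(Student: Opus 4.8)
The plan is to invoke the preceding lemma of Wollan on \emph{optimal} decompositions of $G_1$ and $G_2$, and then verify that the relevant torso-based measure is inherited by the combined decomposition. For the slim bound I would first fix tree-cut decompositions $(T_1,\XXX_1)$ and $(T_2,\XXX_2)$ of $G_1$ and $G_2$ achieving $\stcw(G_1)$ and $\stcw(G_2)$, respectively; by definition their adhesions satisfy $\adh(T_j,\XXX_j)\le\stcw(G_j)$, and every node $s$ of $T_j$ has $\tor_2(s)\le\stcw(G_j)$. Applying the lemma yields a tree-cut decomposition $(T,\XXX)$ of $G$ whose adhesion is exactly $\max\{k,\adh(T_1,\XXX_1),\adh(T_2,\XXX_2)\}\le\max\{k,\stcw(G_1),\stcw(G_2)\}$, which already controls the adhesion term in the slim width.

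The second step is to bound $\tor_2(t)$ for each node $t$ of $T$. Here I would use the second assertion of the lemma: every torso $H_t$ is isomorphic to the torso $H_s$ of some node $s$ of $(T_1,\XXX_1)$ or $(T_2,\XXX_2)$. The crucial point is that this isomorphism respects the distinguished bag, i.e.\ it maps $X_t$ onto $X_s$. Since $\tor_2(t)=|\bar{H}_t^2|$ is the size of the $2$-center of $(H_t,X_t)$, and the $2$-center is produced by an operation (exhaustively deleting vertices of $V(H_t)\setminus X_t$ of degree at most one) that commutes with any isomorphism preserving the set $X_t$, we obtain $\tor_2(t)=\tor_2(s)\le\max\{\stcw(G_1),\stcw(G_2)\}$. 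Combining the two steps gives that the slim width of $(T,\XXX)$ is $\max_{t\in V(T)}\{\adh(t),\tor_2(t)\}\le\max\{k,\stcw(G_1),\stcw(G_2)\}$, and since $\stcw(G)$ is the minimum slim width over all decompositions, the first inequality follows.

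The argument for $\gtcw$ is verbatim the same, replacing the $2$-center by the $1$-center and $\tor_2$ by $\tor_1$ throughout; the only property required is again that deleting the isolated vertices of $V(H_t)\setminus X_t$ commutes with an $X_t$-preserving isomorphism, which is immediate. Thus from optimal decompositions of $G_1,G_2$ for $\gtcw$ one obtains $\gtcw(G)\le\max\{k,\gtcw(G_1),\gtcw(G_2)\}$ by the identical chain of estimates.

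I expect the only genuine subtlety — and the step I would check most carefully — to be the claim that the torso isomorphism furnished by the lemma preserves the bag $X_t$ rather than merely the abstract graph $H_t$. Both $\tor_2$ and $\tor_1$ (like $\tor$ itself) depend on the \emph{pair} $(H_t,X_t)$ and not on $H_t$ alone, so without this refinement the bound would not go through; note that the analogous already-asserted statement for ordinary $\tcw$ rests on exactly the same refinement. I would confirm it by inspecting the construction underlying the lemma, in which each node of $T$ is realized as a copy of a node of $T_1$ or $T_2$ carrying the same bag (after suppressing the degree-$k$ apex vertices $v_1,v_2$, which are not part of the bags of the torsos in question), making the identification of $X_t$ with $X_s$ transparent.
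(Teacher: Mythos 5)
Your proposal is correct and follows essentially the same route as the paper, which likewise obtains the corollary by applying Wollan's edge-sum lemma to decompositions of $G_1$ and $G_2$ that are optimal for the slim (resp.\ $0$-) width and reading off the adhesion and torso bounds. The one point you single out for scrutiny---that the torso isomorphism must identify $X_t$ with $X_s$, since $\tor_2$ and $\tor_1$ depend on the pair $(H_t,X_t)$---is exactly the detail the paper leaves implicit in calling the bound ``immediate,'' and your resolution via Wollan's construction is the right one.
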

In particular, if both $G_1$ and $G_2$ have $0$-, slim or standard width of at most $\omega$ and $k \le \omega$, we may conclude that the corresponding width of $G$ is at most $\omega$. 

\section{Alternative Characterizations}
In this section, we study alternative characterizations of slim tree-cut width and $0$-tree-cut width. In particular, we observe that the latter is asymptotically equivalent to maximum degree plus treewidth. This provides an interesting connection between tree decompositions and tree-cut decompositions, but essentially rules out its study as a means of establishing novel tractability results. For slim tree-cut width, however, we obtain a characterization that ties it to the previously studied edge-cut width and has algorithmic implications.

\subsection{Characterization of 0-Tree-Cut Width}
Wollan~\cite{Wollan15} showed that a bound on the treewidth and maximum degree implies a bound on the tree-cut width of a graph:
\begin{proposition}
Let $G$ be a graph with maximal degree $d$ and
treewidth $w$. Then there exists a tree-cut decomposition of adhesion at most
$(2w+2)d$ such that every torso has at most $(d+1)(w+1)$ vertices. 
\end{proposition}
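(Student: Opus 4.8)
The plan is to transform a width-$w$ tree decomposition of $G$ into a tree-cut decomposition on the same underlying tree, and then to read off both the adhesion and the torso bounds directly from the tree-decomposition structure together with the degree bound. Concretely, I would start from a rooted tree decomposition $(T,\{B_t\})$ of $G$ with $|B_t|\leq w+1$ for every node $t$; I may assume $G$ is connected (otherwise each connected component is handled separately) and, after pruning, that no subtree of $T$ fails to introduce a vertex. For each node $t$ with parent $p(t)$ I set $X_t=B_t\setminus B_{p(t)}$, and $X_r=B_r$ at the root. Since every vertex of $G$ occupies a connected subtree of $T$, it is introduced at exactly one node, so $\mathcal{X}=\{X_t\}$ is a near-partition of $V(G)$ and $(T,\mathcal{X})$ is a tree-cut decomposition. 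Writing $S_t=B_t\cap B_{p(t)}$ for the adhesion set of the tree decomposition and $Y_t=\bigcup_{q\in T_t}X_q$, I recall the standard fact that $S_t$ separates $Y_t$ from $V(G)\setminus\bigcup_{q\in T_t}B_q$.

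For the adhesion I would observe that every edge counted by $\adh(t)=|E_t|$ has one endpoint in $Y_t$ and one outside $Y_t$; by the separation property its outside endpoint must lie in $S_t$. Hence $E_t$ is contained in the set of edges incident to $S_t$, and since $|S_t|\leq w+1$ and $G$ has maximum degree $d$, I obtain $\adh(t)\leq d(w+1)$. This is already within the claimed bound $(2w+2)d$ (indeed a factor of two better, which accounts for the slack in the stated estimate).

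For the torso size, the key identity is that $H_t$ has exactly $|X_t|$ vertices from the bag plus one consolidated vertex for each connected component of $T-t$, so $|V(H_t)|=|X_t|+\deg_T(t)$, and everything reduces to bounding the number of children of $t$. Here I would use an edge-counting argument: after pruning, every child $t'$ has $Y_{t'}\neq\emptyset$, and by connectivity together with the separation property there is at least one edge of $G$ between $Y_{t'}$ and $B_t$, its outside endpoint lying in $S_{t'}\subseteq B_t$. Because the sets $Y_{t'}$ over distinct children are pairwise disjoint, these witnessing edges are distinct, so the number of children is at most the number of edges incident to $B_t$, which is at most $d\,|B_t|\leq d(w+1)$. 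Combining, $|V(H_t)|\leq |X_t|+d(w+1)+1$, and a short case distinction on whether $t$ is the root (using $|X_t|\leq w$ whenever $S_t\neq\emptyset$, i.e.\ for $t\neq r$) makes this land on exactly $(d+1)(w+1)$ in both cases.

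I expect the main obstacle to be the torso bound rather than the adhesion bound: a width-$w$ tree decomposition may have nodes of arbitrarily high degree, so the real content is precisely that $\deg_T(t)$ can be controlled by the edge-counting step above, and that this control hinges on first normalizing the decomposition—rooting it, restricting to connected $G$, and pruning subtrees that introduce no vertices so that every child genuinely attaches to $B_t$ through an edge. The adhesion bound, by contrast, is immediate from the separation property and comfortably within the stated estimate. The only remaining care is bookkeeping around parallel edges created by consolidation, which are counted with multiplicity in $\adh(t)$ but do not affect the vertex count $|V(H_t)|$, and the off-by-one from the parent-side component; both are routine.
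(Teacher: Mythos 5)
This proposition is never proved in the paper you were given: it is imported verbatim from Wollan~\cite{Wollan15}, so there is no in-paper argument to compare against. That said, your construction is essentially the standard one (and the one Wollan uses): run the tree-cut decomposition on the same tree as a width-$w$ tree decomposition with bags $X_t=B_t\setminus B_{p(t)}$, bound the adhesion by the edges incident to the separator $S_t=B_t\cap B_{p(t)}$, and bound the torso size by charging the children of $t$ to edges incident to $B_t$. Your adhesion bound $d(w+1)$ is in fact sharper than the stated $(2w+2)d$, which is harmless since only an upper bound is claimed, and your torso accounting $|V(H_t)|=|X_t|+\deg_T(t)$ is the right identity.

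The one genuine soft spot is the normalization, and it is exactly where your two auxiliary claims live. The pruning you describe (discarding subtrees of $T$ that introduce no vertex) removes empty material \emph{below} a node but not \emph{above} it, and in its presence both claims can fail. Concretely, suppose all bags outside $T_{t'}$ are empty (e.g.\ a path of empty bags above a bag containing all of $V(G)$): then (i) the child $t'$ has $Y_{t'}=V(G)$ and there is no witnessing edge from $Y_{t'}$ into $B_t$, and (ii) a non-root node $t$ can have $S_t=\emptyset$, so the parenthetical ``$|X_t|\le w$ whenever $t\neq r$'' is false as written, and your estimate then only gives $(d+1)(w+1)+1$ for the torso. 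Both failures are confined to this degenerate situation and vanish under the standard stronger normalization for connected $G$: contract away empty bags (an empty bag is a subset of any neighbouring bag). Afterwards every bag is nonempty and adjacent bags intersect---if $B_t\cap B_{p(t)}=\emptyset$, the cut at $e(t)$ would be edgeless with both sides nonempty, contradicting connectivity---so $S_t\neq\emptyset$ for every non-root $t$, and every child $t'$ satisfies $\emptyset\neq Y_{t'}\subsetneq V(G)$, so connectivity really does produce the witnessing edge. Finally, ``handling components separately'' needs one more sentence on reassembly, since a tree-cut decomposition is a single rooted tree: attaching all component decompositions under a common empty root would create a root torso with one consolidated vertex per component, so instead hang each component's decomposition below a leaf of the previous one (a leaf has torso size at most $|X_t|+1$, so it absorbs one extra child within the bound for $d\ge 1$). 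With these repairs your argument is complete and yields exactly the stated bounds.
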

In particular, as  $\tor_1(t)\le|H_t|\le (d+1)(w+1) \le (2w+2)d$ for every node $t$ of $T$, we have $\gtcw(G)\le (2w + 2)d$. In the following proposition, we show that the converse is true as well: bounded $\gtcw$ implies bounded treewidth and maximum degree of a graph.

\begin{proposition}
Let $G$ be a graph with $\gtcw(G) = k$. Then every vertex of $G$ has degree of at most $k^2$ and $\tw(G)\leq 2k^2+3k$.
\end{proposition}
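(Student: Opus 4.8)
The plan is to prove the two claims separately, extracting both the degree bound and the treewidth bound from an optimal tree-cut decomposition witnessing $\gtcw(G)=k$. Let $(T,\XXX)$ be a tree-cut decomposition of $G$ with $0$-width at most $k$, which we may assume has non-empty leaf bags.

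First I would establish the degree bound. Fix a vertex $v\in V(G)$ and let $t$ be the node with $v\in X_t$. The edges incident to $v$ in $G$ either stay inside the torso $H_t$ or are captured by adhesions of the children of $t$ and by the edge $e(t)$ to the parent. The key observation is that in the $1$-center $\bar H_t^1$, the vertex $v$ survives (it lies in $X_t$), and every neighbour of $v$ in $H_t$ that is \emph{not} isolated also survives; thus each such neighbour is counted by $\tor_1(t)\le k$. More carefully, I would argue that an edge from $v$ either goes to another vertex of $X_t$, to the parent-vertex, or to a child-vertex $z_i$ of the torso, and each of these target vertices is present in $\bar H_t^1$ (it has degree at least one, namely the edge to $v$). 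Since $|\bar H_t^1|=\tor_1(t)\le k$, the vertex $v$ has at most $k-1$ distinct neighbours in the torso. The degree of $v$ in $G$ may exceed this because of parallel edges created by consolidation, but each child-vertex $z_i$ absorbs at most $\adh(t')\le k$ edges to $v$, and similarly the parent direction absorbs at most $\adh(t)\le k$. Multiplying the at most $k-1$ torso-neighbours by the at most $k$ parallel edges each can hide yields $\deg_G(v)\le k^2$; I expect the cleanest bookkeeping to give exactly this product.

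For the treewidth bound, the strategy is to \emph{not} reprove anything from scratch but to chain the already-established comparisons. By Corollary~\ref{cor:tcw012} we have $\tcw(G)\le\stcw(G)\le\gtcw(G)=k$, and by Lemma~\ref{lem:comparison} we have $\tw(G)\le 2\tcw(G)^2+3\tcw(G)$. Substituting $\tcw(G)\le k$ and noting that $x\mapsto 2x^2+3x$ is increasing on the nonnegative reals gives $\tw(G)\le 2k^2+3k$ immediately.

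The main obstacle is the degree bound, specifically handling parallel edges correctly when counting the degree of $v$ through the torso. The torso construction consolidates each subtree-component into a single vertex and may create many parallel edges, so one must separate the count of \emph{distinct} torso-neighbours (bounded by $\tor_1(t)-1\le k-1$ via the $1$-center) from the \emph{multiplicity} of edges toward each neighbour (bounded by the relevant adhesion $\le k$). I would want to state precisely that each edge incident to $v$ in $G$ corresponds to an edge incident to $v$ in $H_t$, that every endpoint of such an edge is non-isolated and hence retained in $\bar H_t^1$, and that the number of edges mapping to a single consolidated neighbour is exactly the contribution of $v$ to that child's (or the parent's) adhesion. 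Getting this correspondence airtight is the crux; the rest is routine arithmetic.
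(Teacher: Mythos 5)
Your proposal is correct, and it splits into two parts of different character relative to the paper. The treewidth bound is handled exactly as the paper does it: chain Corollary~\ref{cor:tcw012} ($\tcw(G)\le\gtcw(G)=k$) with Lemma~\ref{lem:comparison}, nothing more to say there. The degree bound, however, is where you genuinely diverge. The paper argues via immersions: by Lemma~\ref{lem: stars} and the monotonicity of $\gtcw$ under weak immersions (Theorem~\ref{thm: monotone}), a graph with $\gtcw(G)=k$ cannot contain $S_{(k+1)^2}$ as a weak immersion, and the degree bound is read off from that. You instead run a direct count inside an optimal decomposition: every edge at $v\in X_t$ maps to an edge of the torso $H_t$ at $v$; every torso-neighbour of $v$ is non-isolated and hence survives in the $1$-center, so there are at most $\tor_1(t)-1\le k-1$ distinct torso-neighbours; and the multiplicity toward each consolidated neighbour is at most the relevant adhesion $\le k$ (and $1$ for neighbours inside $X_t$), giving $\deg_G(v)\le k(k-1)\le k^2$. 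Your bookkeeping is airtight as sketched. It is worth noting what each route buys: the paper's argument is shorter and reuses machinery that is thematically central to the paper (forbidden immersions), but read literally it only excludes $S_{(k+1)^2}$, which caps the degree at $(k+1)^2-1=k^2+2k$ rather than the stated $k^2$; this is harmless for the ensuing asymptotic-equivalence corollary, but it means the constant in the proposition is really delivered by an argument like yours. Your direct count---which is essentially the counting inside the paper's proof of Lemma~\ref{lem: stars}, applied to $G$ itself rather than to an immersed star---is self-contained and actually yields the slightly stronger bound $k^2-k$.
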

\begin{proof}
By Lemma \ref{lem:comparison} and Corollary \ref{cor:tcw012} we have that $\tw(G)\leq 2 \tcw(G)^2+3 \tcw(G) \leq 2k^2+3k$.
Since $\gtcw(G) = k$, Lemma \ref{lem: stars} implies that $G$ does not contain $S_{(k+1)^2}$ as a weak immersion, in particular, degree of any vertex of $G$ is at most $k^2$.  
\end{proof}

\begin{corollary}
$0$-tree-cut width is asymptotically equivalent to maximum degree plus treewidth. 
\end{corollary}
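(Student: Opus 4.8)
The plan is to establish the asymptotic equivalence of $\gtcw$ and $\degtw$ by combining the two propositions that immediately precede this corollary, since together they witness that each parameter dominates the other. Recall that for asymptotic equivalence it suffices to exhibit, for each of the two directions, a function bounding one parameter in terms of the other. The first proposition handles the direction $\gtcw(G) \le (2w+2)d$, where $w=\tw(G)$ and $d$ is the maximum degree of $G$; this shows that $\gtcw$ is dominated by $\degtw$, since $(2w+2)d$ is a function of $w+d$. The second proposition handles the reverse direction: from $\gtcw(G)=k$ we extract both $\tw(G)\le 2k^2+3k$ and a degree bound of $k^2$, so $\degtw(G)$ is bounded by a function of $k=\gtcw(G)$, giving that $\degtw$ is dominated by $\gtcw$.

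Concretely, I would first set $\beta(G)=\degtw(G)=\tw(G)+\Delta(G)$ (using the paper's convention that $\degtw$ denotes treewidth plus maximum degree) and $\alpha(G)=\gtcw(G)$. For the claim that $\beta$ dominates $\alpha$, I would invoke the first proposition with $w=\tw(G)$ and $d=\Delta(G)$, together with the observation already recorded in the text that $\tor_1(t)\le |H_t|\le (d+1)(w+1)\le (2w+2)d$, to conclude $\alpha(G)\le (2w+2)d \le p_1(\beta(G))$ for a suitable polynomial $p_1$. For the claim that $\alpha$ dominates $\beta$, I would invoke the second proposition to get $\tw(G)\le 2k^2+3k$ and $\Delta(G)\le k^2$ where $k=\gtcw(G)$, hence $\beta(G)=\tw(G)+\Delta(G)\le 3k^2+3k \le p_2(\alpha(G))$.

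Since each parameter dominates the other, by the definition of asymptotic equivalence recorded in the Preliminaries (two parameters that dominate each other are asymptotically equivalent), the corollary follows at once. I do not anticipate any genuine obstacle here: the corollary is essentially a repackaging of the two preceding propositions into the language of domination, and the only thing to be careful about is matching the polynomial bounds to the formal definition of domination (the existence of a function $p$ with $\alpha(G)\le p(\beta(G))$ for all $G$). The one mild subtlety worth stating explicitly is that a degree bound of $k^2$ is really a statement about the maximum degree, which is exactly the quantity entering $\degtw$, so no conversion is needed there.
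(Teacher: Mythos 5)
Your proof is correct and follows exactly the paper's (implicit) argument: the corollary is stated in the paper without a separate proof precisely because it is the immediate combination of the two preceding propositions, namely $\gtcw(G)\le (2w+2)d$ in one direction and $\Delta(G)\le k^2$, $\tw(G)\le 2k^2+3k$ in the other, which is what you do. Your bookkeeping (e.g.\ $\degtw(G)\le 3k^2+3k$) is accurate, so nothing is missing.
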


\subsection{Characterization of Slim Tree-Cut Width}

Recall that edge-cut width is a parameter that is defined over spanning trees in the input graph $G$, which serve as the corresponding decompositions. Let us now consider a slight generalization of this where we consider not only spanning trees over $G$, but of any supergraph of $G$. Such a generalization would---unlike edge-cut width itself---trivially be closed under both vertex and edge deletion.
For our considerations, let us denote this parameter \emph{\supwidth} ($\supwidthshort(G)$): $$\supwidthshort(G)=\min\{\widthshort(H, T)| H\supseteq G \text{ and } T\text{ is a spanning forest of }H\}.$$ If $H\supseteq G$ is a supergraph of $G$ and $T$ is a spanning forest of $H$ such that $\widthshort(H, T) \le k$, we say that $T$ witnesses $\supwidthshort(G) \le k$. Observe that there always exists a connected witness, i.e., a tree. Indeed, if $H$ consists of $m>1$ connected components, we can arbitrarily extend it to a connected graph $H^*$ by adding $m-1$ edges. The addition of these edges to $T$ then results in the tree $T^*$ witnessing $\supwidthshort(G)\le k$. Moreover, notice that any witness of $\widthshort(G)\le k$ is also a witness of $\supwidthshort(G)\le k$.

\begin{corollary}
\label{cor:sec_ecw}
For every graph $G$, $\supwidthshort(G)\leq \widthshort(G)$.
\end{corollary}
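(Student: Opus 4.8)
The plan is to prove this corollary directly from the definition of $\supwidthshort$ by exhibiting a single witness. The statement $\supwidthshort(G)\leq\widthshort(G)$ asserts that the best edge-cut width achievable over all supergraphs of $G$ (together with their spanning forests) is no worse than the edge-cut width of $G$ itself. Since $\supwidthshort$ is defined as a minimum over a collection of pairs $(H,T)$ with $H\supseteq G$, it suffices to produce \emph{one} such pair whose edge-cut width is at most $\widthshort(G)$.

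The natural candidate is the trivial supergraph $H=G$ together with an optimal spanning forest $T$ witnessing $\widthshort(G)$. First I would let $T$ be a maximal spanning forest of $G$ achieving $\widthshort(G,T)=\widthshort(G)$, which exists by the definition of edge-cut width. Since $G\supseteq G$ trivially and $T$ is a spanning forest of $G$, the pair $(G,T)$ is one of the pairs in the set over which $\supwidthshort(G)$ takes its minimum. Hence $\supwidthshort(G)\leq\widthshort(G,T)=\widthshort(G)$. This is in fact already remarked in the text immediately preceding the corollary, where the authors note that ``any witness of $\widthshort(G)\le k$ is also a witness of $\supwidthshort(G)\le k$.''

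There is essentially no obstacle here: the entire content is the observation that the feasible set defining $\widthshort(G)$ (spanning forests of $G$) is a subset of the feasible set defining $\supwidthshort(G)$ (spanning forests of arbitrary supergraphs of $G$), and taking a minimum over a larger set can only decrease the value. The only point requiring a sentence of care is confirming that a spanning forest of $G$ qualifies as an admissible witness in the definition of $\supwidthshort$, i.e.\ that the choice $H=G$ is permitted; this is immediate since $G\supseteq G$. Thus the proof reduces to a one-line monotonicity-of-minimum argument, and I expect the formal write-up to consist of little more than instantiating the minimum at $(G,T)$.
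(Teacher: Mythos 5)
Your proposal is correct and matches the paper's reasoning exactly: the corollary is stated there as an immediate consequence of the preceding remark that any witness of $\widthshort(G)\le k$ is also a witness of $\supwidthshort(G)\le k$, i.e.\ the pair $(G,T)$ with $T$ an optimal maximal spanning forest of $G$ lies in the feasible set defining $\supwidthshort(G)$. Your write-up simply makes this one-line monotonicity-of-minimum argument explicit, so there is nothing to correct.
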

A slight modification of the proof of Proposition \ref{pro:lfescompare} yields the following lower bound:
\begin{proposition}
\label{pro:tcw_sec}
For every graph $G$, $\tcw(G)\leq \supwidthshort(G)$.
\end{proposition}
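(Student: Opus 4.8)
The plan is to lift the construction behind Proposition~\ref{pro:lfescompare} from the input graph to the supergraph witness, and then to transfer the resulting decomposition back down to $G$ by deletion. Concretely, let $H\supseteq G$ together with a spanning forest $T$ of $H$ be a witness of $\supwidthshort(G)=k$, i.e.\ $\widthshort(H,T)=k$; as observed just before the statement, we may assume $T$ is a tree. The proof of Proposition~\ref{pro:lfescompare}, applied to $H$ with the particular spanning tree $T$ (rather than to $G$ with an optimal spanning tree), produces a tree-cut decomposition $(T',\mathcal{X})$ of $H$ of width at most $\widthshort(H,T)=k$.

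First I would recall how that construction works, so that it can be reused verbatim on $H$: root $T$, take the underlying tree of the decomposition to be $T$ itself with singleton bags $X_v=\{v\}$, and verify that for every node $v$ the adhesion equals the size of the edge cut of $H$ separating the subtree below $v$, which is controlled by $1+|E_{\loc}^{H,T}(v)|\le k$, while the torso-size is bounded by the same quantity. This is precisely the content of Proposition~\ref{pro:lfescompare}, so no new work is needed here beyond observing that nothing in its argument uses the assumption that the spanning tree lives in the original graph rather than in a supergraph.

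The only genuinely new step is to pass from $H$ back to $G$. Since $G$ is obtained from $H$ by deleting the vertices in $V(H)\setminus V(G)$ together with the edges in $E(H)\setminus E(G)$, I would restrict $(T',\mathcal{X})$ to $G$ by replacing each bag $X_t$ with $X_t\cap V(G)$, keeping the same tree $T'$; the bags still form a near-partition of $V(G)$. One then checks that neither deletion increases the width: removing an edge can only shrink an edge cut, so every adhesion can only decrease, and removing vertices or edges can only shrink each torso and hence its $3$-center, so every torso-size $\tor(t)$ can only decrease. Thus $(T',\mathcal{X})$ restricted to $G$ is a tree-cut decomposition of $G$ of width at most $k$, giving $\tcw(G)\le k=\supwidthshort(G)$.

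I expect the main (if minor) obstacle to be the verification that restricting the decomposition does not increase the torso-size, since the $3$-center is obtained by exhaustive suppression and one must argue that deleting vertices and edges of $H$ cannot create new non-suppressible branch vertices in $X_t$. This can be handled cleanly either by the direct monotonicity check sketched above or, more abstractly, by invoking that $G\le_I H$ (vertex and edge deletions are weak-immersion operations) together with Wollan's result~\cite{Wollan15} that $\tcw$ is closed under weak immersions, which immediately yields $\tcw(G)\le\tcw(H)\le\widthshort(H,T)=\supwidthshort(G)$.
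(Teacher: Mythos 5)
Your proposal is correct, but it packages the argument differently from the paper. The paper's proof is a single direct construction: it takes the spanning tree $T$ of the witness supergraph $Q$ and uses $T$ itself as the decomposition tree \emph{of $G$}, with bags $X_t=\{t\}$ for $t\in V(G)$ and $X_t=\emptyset$ for $t\in V(Q)\setminus V(G)$; since adhesions and torsos of this decomposition only ever count edges of $G$ (a subset of the edges of $Q$), the bounds $\adh(t)\le 1+|E_{\loc}^{Q,T}(t)|$ and $\tor(t)\le 1+|E_{\loc}^{Q,T}(t)|$ come out immediately, and no ``descent'' from the supergraph to $G$ is ever needed. You instead argue in two steps: first $\tcw(H)\le \widthshort(H,T)=\supwidthshort(G)$ by applying Proposition~\ref{pro:lfescompare} to $H$ as a black box (which is legitimate, since $\widthshort(H)\le\widthshort(H,T)$ by definition), and then $\tcw(G)\le\tcw(H)$ because $G\le_I H$ and tree-cut width is immersion-monotone by Wollan's theorem. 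Your modular route buys brevity and reusability---it never re-examines the singleton-bag construction---at the cost of importing Wollan's monotonicity result; and that import is genuinely needed, because your ``direct'' fallback (restrict bags to $V(G)$ and claim the $3$-centers only shrink) is exactly the subtle point you flagged: suppression of degree-$2$ vertices adds edges, so subgraph-monotonicity of torso-size is not a one-line check but essentially a special case of the immersion argument. The paper's formulation with empty bags sidesteps this entirely, which is why it can afford to be self-contained.
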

\begin{proof}
Let $Q$ be the supergraph of $G$ and let $T$ be the spanning tree of $Q$ such that $\widthshort(Q,T)=\supwidthshort(G)$. We construct a tree-cut decomposition $(T, \XXX)$ of $G$ where each bag contains at most one vertex, notably by setting $X_t=\{t\}$ for each $t\in V(G)$ and $X_t=\emptyset$ for each $t\in V(Q)\setminus V(G)$.  Fix any node $t$ in $T$ other than the root, let $u$ be the parent of $t$ in $T$.  All the edges of $G\setminus ut$ with one endpoint in the rooted subtree $T_t$ and another outside of $T_t$ belong to $E^{Q,T}_{loc}(t)$, so $\adh_T(t)\leq |E^{Q,T}_{loc}(t)|+1 \leq \supwidthshort(G)$.

Let $H_t$ be the torso of $(T, \XXX)$ in $t$, then $V(H_t)=X_t \cup \{z_1...z_l\}$ where $z_i$ correspond to connected components of $T \setminus t$, $i\in [l]$. In $\tilde H_t$, only $z_i$ with degree at least $3$ are preserved. 
But all such $z_i$ are the endpoints of at least two edges in $|E^{Q,T}_{loc}(t)|$, so $\tor(t)=|V(\tilde H_t)|\le 1+ |E^{Q,T}_{loc}(t)| \le \supwidthshort(G)$. Thus  $\tcw(G)\leq \supwidthshort(G)$. 
\end{proof}
To represent a deeper connection between tree-cut decompositions and super edge-cut width, it will be convenient to work with very nice decompositions introduced in subsection \ref{sub: forbid_immers}. 
\begin{proposition}
\label{prop: eq_tree-cutwidth1}Let $(T, \XXX)$ be a very nice tree-cut decomposition of $G$ of width at most $k$. Then for each node $t$ of $T$, $|B_t^{(2)}| \le k \cdot \supwidthshort(G)$. In particular, $\stcw(G) \le \supwidthshort(G)^2+4 \cdot \supwidthshort(G)$.
\end{proposition}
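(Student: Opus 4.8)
The plan is to fix a witness for $\supwidthshort(G)$ and reduce the whole statement to a two-sided cyclomatic estimate on one carefully chosen subgraph of the witnessing supergraph. Concretely, write $s=\supwidthshort(G)$ and let $Q\supseteq G$ be a supergraph with a spanning tree $S$ such that $\widthshort(Q,S)=s$; thus $|E_{\loc}^{Q,S}(v)|\le s-1$ for every vertex $v$. Fix a node $t$ and set $m=|B_t^{(2)}|$. For each child $t'\in B_t^{(2)}$ I would first extract a path $P_{t'}$ in $G\subseteq Q$: since $\adh(t')=2$ and $t'\in B_t$ gives $N(Y_{t'})\subseteq X_t$, and since very-niceness (non-decomposability) forces the two endpoints inside $Y_{t'}$ of the two cut edges to lie in the \emph{same} component of $G[Y_{t'}]$, there is a path $P_{t'}$ from some $a\in X_t$ to some $b\in X_t$ (possibly $a=b$, i.e.\ a cycle through $a$) whose interior lies in $Y_{t'}$. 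As the sets $Y_{t'}$ are pairwise disjoint, these paths are internally vertex-disjoint and pairwise edge-disjoint, so their union $G^\sharp:=\bigcup_{t'\in B_t^{(2)}}P_{t'}$ is a subgraph of $Q$ whose entire cycle structure is ``pinned'' to $X_t$.

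Next I would bound the cyclomatic number $\nu(G^\sharp)=|E(G^\sharp)|-|V(G^\sharp)|+c(G^\sharp)$ from both sides. For the lower bound, suppressing the degree-two interior vertices turns $G^\sharp$ into a multigraph on the at most $|X_t|\le k$ hub vertices carrying exactly $m$ edges (loops allowed), whence $\nu(G^\sharp)\ge m-|X_t|\ge m-k$. For the upper bound I would compare against $S$: set $\mathcal F=\bigcup_{a\in X_t}E_{\loc}^{Q,S}(a)$, so that $|\mathcal F|\le|X_t|\cdot(s-1)\le k(s-1)$. Using the standard isomorphism between the cycle space of $Q$ and $\mathbb F_2^{N}$ ($N$ the set of non-tree edges of $S$), sending a cycle to the set of non-tree edges it contains, the goal is to show that the further projection onto the coordinates in $\mathcal F$ stays injective on the cycle space of $G^\sharp$. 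This at once yields $\nu(G^\sharp)\le|\mathcal F|\le k(s-1)$, and combining the two bounds gives $m-k\le k(s-1)$, i.e.\ $m\le ks$, which is the claimed $|B_t^{(2)}|\le k\cdot\supwidthshort(G)$.

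The hard part will be establishing that injectivity, i.e.\ that no nonzero even subgraph of $G^\sharp$ has all of its non-tree edges outside $\mathcal F$. I would argue the contrapositive one simple cycle at a time: every element of the cycle space splits into edge-disjoint simple cycles, each a subgraph of $G^\sharp$, and since all interior vertices of the paths have degree two, $G^\sharp-X_t$ is a forest, so each such cycle passes through some hub $a\in X_t$. It then suffices to show that a simple cycle $C$ through $a$ contains a non-tree edge whose fundamental cycle passes through $a$ (hence lies in $\mathcal F$). Let $e,e'$ be the two edges of $C$ at $a$. If one of them is a non-tree edge, its fundamental cycle contains its endpoint $a$ and we are done; otherwise $e,e'$ are tree edges to two \emph{distinct} neighbours of $a$ in $S$, which lie in two different components of $S-a$, and the remainder of $C$ is a path avoiding $a$ joining these two components, so it must traverse a non-tree edge $g$ crossing between components of $S-a$; the unique $S$-path between the endpoints of such a $g$ necessarily runs through $a$, so $g\in E_{\loc}^{Q,S}(a)\subseteq\mathcal F$. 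As $g$ lies on $C$ and hence in the chosen even subgraph, the $\mathcal F$-projection is nonzero, giving injectivity.

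Finally, for the ``in particular'' statement I would instantiate the first part on a very nice tree-cut decomposition of optimal width $k=\tcw(G)$, which exists by Corollary~\ref{cor: verynice} and satisfies $k\le\supwidthshort(G)=s$ by Proposition~\ref{pro:tcw_sec}. Using the already recorded inequality $\tor_2(t)\le 3k+2+|B_t^{(2)}|$ together with $|B_t^{(2)}|\le ks$ and $\adh(t)\le k\le s$, every node obeys $\tor_2(t)\le s^2+3s+2$, so the slim width of this decomposition---and hence $\stcw(G)$---is at most $s^2+3s+2\le s^2+4s$ (the degenerate cases $s\le 1$ being trivial). I expect the only genuinely delicate step to be the cycle-space injectivity argument above; everything else is bookkeeping resting on the path extraction from non-decomposable nodes and the two-sided cyclomatic estimate.
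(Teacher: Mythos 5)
Your proposal is correct, and it shares the paper's two structural pillars: non-decomposability of the children in $B_t^{(2)}$ (each yielding a path between one or two vertices of $X_t$ whose interior lies in $Y_{t'}$) and the bound $|E_{\loc}^{Q,S}(x)|\le \supwidthshort(G)-1$ at the hubs $x\in X_t$. However, the way you convert these into the bound on $|B_t^{(2)}|$ is genuinely different. The paper performs an explicit case analysis on each child (three types, according to how the two cut edges meet $X_t$ and $Y_b$), charging each child either to a local feedback edge at some hub or to a tree-path of the witness joining two distinct hubs, of which there are at most $|X_t|-1\le k-1$ because the witness is acyclic; this yields $|B_t^{(2)}|\le (k-1)+\sum_{x\in X_t}|E_{\loc}^{H,T^*}(x)|\le k\cdot\supwidthshort(G)-1$ directly. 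You instead aggregate all the children's paths into a single subgraph $G^\sharp$ and compare two bounds on its cyclomatic number: the lower bound $m-k$ obtained by contracting to the hub multigraph, and the upper bound $k(\supwidthshort(G)-1)$ obtained from injectivity of the $\mathbb{F}_2$ cycle-space projection onto $\mathcal{F}=\bigcup_{x\in X_t}E_{\loc}^{Q,S}(x)$, which you prove by the crossing argument at a hub (every simple cycle of $G^\sharp$ meets $X_t$ since $G^\sharp-X_t$ is a forest, and must contain a non-tree edge whose fundamental cycle passes through that hub). Both arguments capture the same phenomenon---the witness tree can absorb at most $k-1$ hub connections for free, and every further child forces a distinct local feedback edge at $X_t$---but your dimension argument handles the distinctness of the charged edges automatically through linear independence and avoids the case distinction entirely, at the price of invoking cycle-space machinery; the paper's charging is more elementary and identifies the concrete edge paid for by each child. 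Your assembly of the ``in particular'' part (an optimal very nice decomposition via Corollary~\ref{cor: verynice}, $k\le\supwidthshort(G)$ via Proposition~\ref{pro:tcw_sec}, and the recorded inequality $\tor_2(t)\le 3k+2+|B_t^{(2)}|$) coincides with the paper's, and your separate treatment of the degenerate cases $\supwidthshort(G)\le 1$ is a harmless addition.
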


\begin{proof}
Assume that $T^*$ is a spanning tree of $H\supseteq G$ such that $\supwidthshort(G)=\widthshort(H, T^*)$.
 For any node $t$ of $T$ and $b\in B_t^{(2)}$, $b$ has one of three types (see Figure \ref{fig: tcw_cases}):
\begin{enumerate}
\item $N(Y_b)=\{x\}$ for some $x \in X_t$, $x$ is connected to distinct $x_b^1$ and $x_b^2$ from $Y_b$;
\item $N(Y_b)=\{x_1,x_2\}$ for $x_1\ne x_2$, $x_1$ and $x_2$ are connected to the same $x_b\in Y_b$;
\item $N(Y_b)=\{x_1,x_2\}$ for $x_1\ne x_2$, $x_1$ and $x_2$ are connected to distinct $x_b^1$ and $x_b^2$ from $Y_b$ correspondingly;
\end{enumerate}
\begin{figure}[htb]
\begin{center}
\includegraphics[width=0.7 \textwidth]{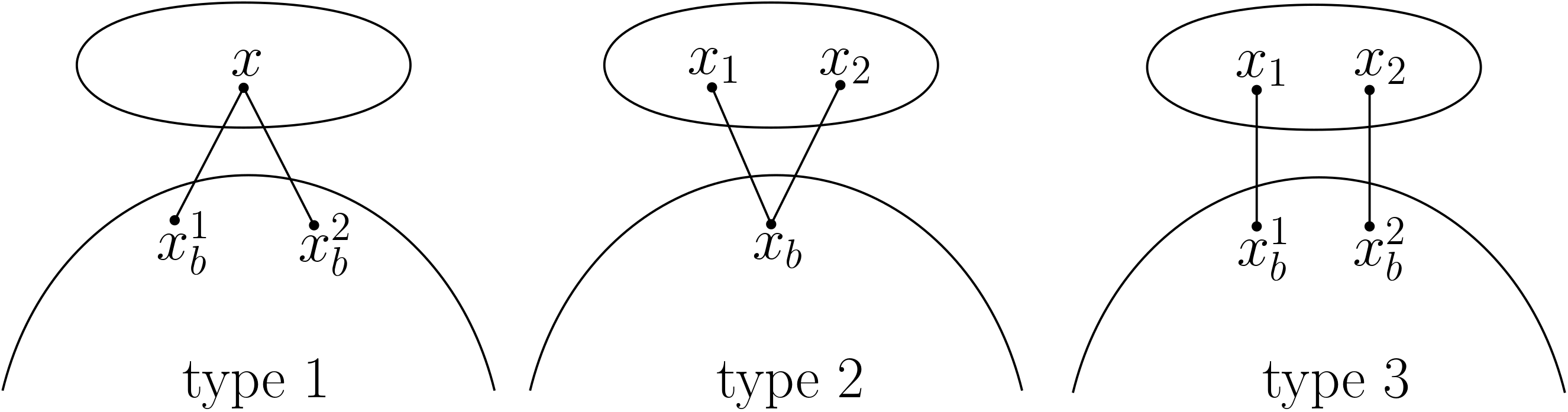}
\end{center}
\caption{Possible configurations of edges between thin child $b\in B_t^{(2)}$ and its parent $t$.}
\label{fig: tcw_cases}
\end{figure}
Let us start with the first type. If $x_b^ix$ doesn't belong to $T^*$ for $i=1$ or $i=2$, then $x_b^ix \in E^{H,T^*}_{loc}(x)$. Otherwise, $x_b^1$ and $x_b^2$ are connected via $x$ in $T^*$. Then $T^*[Y_b]$ has precisely two connected components. As $b$ is not decomposable, there exists a path $p$ between $x_b^1$ and $x_b^2$ in $G_b$ containing precisely one edge outside of $T^*$. This edge contributes  to $E^{H,T^*}_{loc}(x)$.  

As $T^*$ is a tree, there can be at most $|X_t|-1\le k-1$ thin children $b$ of the second type such that $x_b$ is adjacent to two elements of $X_t$ in $T^*$. For the remaining vertices $b$ of the second type, there exists $x\in X_t$ such that $xx_b\in G\setminus T^* \subseteq H\setminus T^*$ and therefore $xx_b \in E^{H,T^*}_{loc}(x)$. 

Let $b$ be a thin node of the third type. If $x^b_1$ and $x^b_2$ are connected via a path in $T^*[Y_b]$, we can apply the same argument as for the second type. Otherwise, $T^*[Y_b]$ has precisely two connected components and, analogously to the first type, there exists an edge in $G_b\cup\{x_1 x^b_1, x_2 x^b_2\}$ that belongs to $E^{H,T^*}_{loc}(x_1)$.  

To conclude, any node of $B_t^{(2)}$ either increases $E^{H,T^*}_{loc}(x)$ for some $x\in X_t$ or creates a path in $T^*$ between two vertices of $X_t$. Since $T^*$ is a tree, $|X_t|\le k$ and $|E^{H,T^*}_{loc}(x)|\le \supwidthshort(G)-1$ for every $x\in X_t$, the size of $B_t^{(2)}$ is at most $(k-1)+\sum_{x\in X_t}|E^{H,T^*}_{loc}(x)|\le k\cdot \supwidthshort(G)-1$. Then $\tor_2(t) \le |A_t|+|X_t|+1+|B_t^{(2)}| \le 3k+1+k\cdot \supwidthshort(G) \le k\cdot (\supwidthshort(G)+4)$. Since the bound holds for every node $t$ of $T$, we may conclude that the slim width of $(T, \XXX)$ is at most $k\cdot (\supwidthshort(G)+4)$. By Proposition \ref{pro:tcw_sec} and Corollary \ref{cor: verynice}, there exists a very nice tree-cut decomposition of $G$ of width $k\le \supwidthshort(G)$, therefore $\stcw(G) \le \supwidthshort(G)^2+4\cdot \supwidthshort(G)$.  
\end{proof}
Hence, slim tree-cut with of any graph is upper-bounded by a quadratic function of its super edge-cut width. Next, we show that the converse statement holds as well:
\begin{proposition}
\label{prop: eq_tree-cutwidth2}
For every graph $G$, $\supwidthshort(G)\le 3\cdot (\stcw(G)+1)^2$. Moreover, given a tree-cut decomposition of $G$ of slim width $k$, it is possible to compute a supergraph $Q\supseteq G$ and its spanning tree $T$ witnessing $\supwidthshort(G)\leq 3(k+1)^2$ in cubic time. 
\end{proposition}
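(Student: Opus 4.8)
The plan is to establish the constructive second sentence; the bound $\supwidthshort(G)\le 3(\stcw(G)+1)^2$ then follows by feeding in an optimal tree-cut decomposition, so that $k=\stcw(G)$ and $\supwidthshort(G)\le\widthshort(Q,T^*)\le 3(k+1)^2$. Starting from a tree-cut decomposition of slim width $k$, I would first make it \emph{nice} in cubic time, so that no thin child's part $Y_b$ has a neighbour in a sibling's part; since this transformation does not enlarge adhesions or torsos, it keeps the slim width at most $k$. Throughout I use that $\adh(t)\le k$ and $|X_t|\le k$, and---crucially---that any node $t$ has at most $\tor_2(t)\le k$ children of adhesion at least $2$, because such children survive in the $2$-center.

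The supergraph $Q$ and its spanning tree $T^*$ would be built in one bottom-up sweep of $T$. For each node $t$ I add a fresh \emph{router} vertex $r_t$ and join it by star edges to every vertex of $X_t$ (the base case at leaves). At an internal node $t$ I glue each already-built, connected child gadget to the current structure by exactly one edge: a fresh router edge $r_cr_t$ when $\adh(c)\ge 2$, or when $\adh(c)=1$ but the unique neighbour of $Y_c$ lies above $t$; and the unique crossing edge $e_c\in E(G)$ itself when $\adh(c)=1$ and $N(Y_c)=\{v\}$ with $v\in X_t$, in which case the child's gadget is hung as a \emph{pendant} at $v$. Because each child is attached to a connected piece by a single edge, $T^*$ is manifestly a spanning tree of $Q$, and the sweep is linear. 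Note niceness guarantees that an adhesion-$1$ child whose neighbour is not in $X_t$ must point strictly above $t$, so the three cases are exhaustive.

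The analysis then bounds $|E^{Q,T^*}_{\loc}(v)|$ at every vertex. First I would show that no original vertex $v\in X_t$ carries any load: the subtrees hanging off $v$ are the star edge to $r_t$ and the pendant blocks $Y_c$ attached at $v$, and since each such $Y_c$ meets the outside only at $v$ and siblings are non-adjacent, no non-tree edge joins two distinct subtrees at $v$. Hence the maximum load is attained at some router $r_t$, where a non-tree edge contributes exactly when $r_t$ is interior on its tree path; these are precisely the edges inside $X_t$, the edges leaving a \emph{non-pendant} child's block, and the edges from $X_t$ to $V(G)\setminus Y_t$, while pendant children contribute nothing.

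The heart of the argument---and the step I expect to be the main obstacle---is bounding the contribution of the non-pendant children, since a node may have unboundedly many children and the construction would collapse without controlling them; this is exactly where the slim threshold does the work. The non-pendant children split into at most $k$ children of adhesion $\ge 2$ (bounded via $\tor_2(t)\le k$), each contributing at most $\adh(c)\le k$, together with at most $\adh(t)\le k$ adhesion-$1$ children pointing upward (each using a distinct edge of the cut at $t$), contributing one apiece; so $\sum_{c\text{ non-pendant}}\adh(c)\le k^2+k$. Adding at most $\binom{k}{2}$ within-bag edges and at most $\adh(t)\le k$ upward edges from $X_t$ gives $|E^{Q,T^*}_{\loc}(r_t)|\le \tfrac{3}{2}(k^2+k)$ for every router, whence $\widthshort(Q,T^*)\le \tfrac{3}{2}(k^2+k)+1\le 3(k+1)^2$. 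As the only superlinear step was making the decomposition nice, the whole procedure runs in cubic time.
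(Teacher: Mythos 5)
Your construction (one connected per-bag gadget, non-pendant children glued by a single fresh tree edge, pendant children hung by their unique crossing edge) is essentially the one in the paper, and the overall plan is sound; the gaps are in the load analysis. First, the claim that original vertices carry no load misreads the definition of $E^{Q,T^*}_{\loc}$: the unique tree path between the endpoints of a non-tree edge contains those endpoints, so every non-tree edge \emph{incident} to $v$ belongs to $E^{Q,T^*}_{\loc}(v)$. In your construction every $G$-edge inside a bag $X_t$ is a non-tree edge (the tree inside the bag is the star through $r_t$), and it loads both of its endpoints; the same holds for $G$-edges from $v\in X_t$ to the parent side or into non-pendant children's blocks. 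This is precisely why the paper, when bounding $E_{\loc}(x)$ for $x\in X'_t$, counts up to $k(k-1)$ edges with both endpoints in the bag. The error is repairable---the incident-edge count at an original vertex is dominated by the same quantities you tally at the router---but as written, the step ``the maximum load is attained at some router'' is unjustified.

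The serious gap is the claim you yourself flag as the heart of the argument: that every node has at most $\tor_2(t)\le k$ children of adhesion at least $2$ ``because such children survive in the $2$-center''. Survival under exhaustive deletion of degree-at-most-one vertices is guaranteed only when \emph{all} cut edges of the child end in $X_t$, since vertices of $X_t$ are never deleted. A child with an edge toward the parent part can be cascaded away: if $\adh(t)=1$, the parent's torso vertex has degree one and is deleted, after which an adhesion-$2$ child with one edge to the parent vertex and one edge into $X_t$ drops to degree one and is deleted as well; bold children whose edges go to sibling blocks can likewise fail to survive. This is exactly why the paper splits the non-pendant children into $B_t^{(2)}$ (adhesion exactly $2$ with neighborhood inside $X_t$; these provably survive, hence number at most $k$, and are cheap, contributing about $3$ edges each) and $A_t$, whose size is bounded by $2k+1$ using \emph{niceness}, not the $2$-center. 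Your single bound conflates these classes and is false, and the arithmetic is sensitive to the fix: repairing it naively (at most $(2k+1)+k$ children of adhesion at least $2$, each of adhesion up to $k$) inflates $\sum_{c}\adh(c)$ from your $k^2+k$ to roughly $3k^2$, giving a total of about $3.5k^2$, which exceeds $3(k+1)^2$ for large $k$. Only with the paper's finer split---at most $2k+1$ children contributing up to $k$ edges each, plus at most $k$ surviving children contributing $O(1)$ each---does your construction land at roughly $2.5k^2+O(k)\le 3(k+1)^2$. So the statement is not established by the proposal as written.
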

\begin{proof}
Let $(T_0,\mathcal{X}_0)$ be a tree-cut decomposition of $G$ of slim width $k$. We start by transforming it into a nice tree-cut decomposition $(T,\mathcal{X})$ in cubic time as in \cite{Ganian0S15}. The transformation procedure acts on the 2-centers of torsos only by contracting some edges. Recall that $v\in V(H_t)\setminus X_t$ is not in the 2-center of $(H_t, X_t)$ if and only if $v$ belongs to some induced subtree of $H_t$ connected to the rest of $H_t$ by at most one edge. Since contracting an edge either preserves the property or merges $v$ with some other vertex, it doesn't increase $\tor_2(t)$ for any node $t$ of $T$. In particular, the slim width of $(T,\mathcal{X})$ is at most $k$. 

Let $\Omega\subseteq \mathcal{X}$ be the set of empty bags of $(T,\mathcal{X})$, we construct $Q\supseteq G$ along with its tree-cut decomposition $(T,\mathcal{X'})$ as follows. Firstly, we add to $G$ vertices $v_t$ for every $t\in \Omega$. We define $X'_t=\{v_t\}$ if $X_t=\emptyset$ and $X'_t=X_t$ otherwise. For every node $t\in T$, construct an arbitrary tree $T^*_t$ over $X'_t$ and add its edges to $Q$. Further, we process every edge $e=pt \in E(T)$ such that $p$ is the parent of $t$ in $T$ and either $N(Y_t)\not \subseteq X_t$ or $\adh(t)>1$ as follows. If $G$ doesn't contain an edge between $X_t'$ and $X'_p$, we add to $E(Q)$ arbitrary edge with endpoints in $X_t'$ and $X'_p$. This increases the adhesion of $e$ by at most one. 

Now we proceed to the choice of the spanning tree $T^*$ in $Q$. 
For every $t\in T$ other then the root, let $p$ be the parent of $t$ in $T$. If $\adh(t)=1$ and  $N(Y_t) \subseteq X_t$, we denote by $e_t$ the unique edge between $Y'_t$ and $X'_p$ in $Q$. Otherwise, let $e_t$ be arbitrary edge of $Q$ with endpoints  in $X_t'$ and $X'_p$. We then construct $T^*$ by gluing together all $T^*_t$ via edges $e_t$: $T^*=(\cup_{t\in V(T)}T^*_t)\bigcup (\cup_{t\in V(T)\setminus r} \{e_t\})$. Obviously the construction can be performed in cubic time; we will show that $\supwidthshort(Q, T^*) \le 3(k+1)^2$. 

To this end, fix any node $t$ of $T$ and $x\in X'_t$ and denote $E_{loc}(x)=E_{loc}^{Q,T^*}(x)$. If $T^*$ contains more than one edge between $Y'_t$ and rest of  $T^*$, then all but one of them are the unique edges connecting $Q'_q$ to the rest of $Q$ for some descendants $q$ of $t$ in $T$. Hence, they don't belong to any path in $T^*$ between the endpoints of some feedback edge $e\in E(Q)\setminus E(T^*)$. Therefore, every edge of $ E_{loc}(x)$ has at least one endpoint in $Y'_t$. The number of edges in $E_{loc}(x)$ with both endpoints in $X_t'$ is at most $|X_t'|\cdot(|X_t'|-1)\le k\cdot(k-1)$. Every edge with one endpoint in $X_t'$ and another outside of $Y_t'$ contributes to the adhesion of $t$ in $(T,\mathcal{X'})$, so their number is bounded by $k+1$.

Finally, if $e=yz \in E_{loc}(x)$ contains an endpoint $y$ in $Y'_t \setminus X'_t$, then $y\in Y'_q$ for some child $q$ of $t$. Then $Q$ contains a cycle intersecting $Y'_q$ and $x\in X_t$. In particular, by construction of $Q$ we may conclude $q\in A_t \cup B_t^{(2)}$ w.r.t. the decomposition $(T,\mathcal{X})$. By the same arguments as for the node $t$, we conclude that at most one edge between $Y'_q$ and the rest of $T^*$ belongs to any path in $T^*$ between the endpoints of some feedback edge $e\in E(Q)\setminus E(T^*)$, so $z\not \in Y'_q$ and $e$ contributes to the adhesion of $q$ in $(T,\mathcal{X'})$. In particular, $E_{loc}(x)$ contains at most $\adh(q)+1$ edges with an endpoint in $Y'_q$. In total, at most $\max_{q\in A_t}(\adh(q)+1)\cdot|A_t| +\max_{q\in B_t^{(2)}}(\adh(q)+1)\cdot |B_t^{(2)}| \le (k+1)(2k+1)+3k=2k^2+6k+1$ edges in $E_{loc}(x)$ have an endpoint in $Y'_{t}\setminus X'_t$, so $|E_{loc}(x)|\le k\cdot(k-1)+(k+1)+2k^2+6k+1 =3k^2+6k+2$ and hence $\sec(Q, T^*)\le 3k^2+6k+3 =3(k+1)^2$.
\end{proof}

\begin{corollary}
\label{cor:equiv}
$\supwidthshort$ and $\stcw$ are asymptotically equivalent.
\end{corollary}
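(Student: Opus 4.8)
The plan is straightforward: the corollary is an immediate consequence of Propositions \ref{prop: eq_tree-cutwidth1} and \ref{prop: eq_tree-cutwidth2} together with the definition of asymptotic equivalence. Recall from the preliminaries that two parameters are asymptotically equivalent precisely when each dominates the other, that is, when each is bounded above by some function of the other over all graphs. Hence it suffices to exhibit the two required bounding functions, both of which have already been established.

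First I would invoke Proposition \ref{prop: eq_tree-cutwidth1}, which gives $\stcw(G) \le \supwidthshort(G)^2 + 4\cdot \supwidthshort(G)$ for every graph $G$; taking $p(x) = x^2 + 4x$, this witnesses that $\stcw$ dominates $\supwidthshort$. In the converse direction I would invoke Proposition \ref{prop: eq_tree-cutwidth2}, which gives $\supwidthshort(G) \le 3\cdot(\stcw(G)+1)^2$; taking $q(x) = 3(x+1)^2$, this witnesses that $\supwidthshort$ dominates $\stcw$.

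Since each of the two parameters is bounded above by a quadratic function of the other, they dominate each other and are therefore asymptotically equivalent by definition, which completes the argument. There is no genuine obstacle here: all the substantive work lies in the two propositions, and this corollary merely records their combination. The only points worth double-checking are that each bound holds uniformly for every graph $G$ (as both propositions assert) and that the notion of domination is applied in the correct direction on each side.
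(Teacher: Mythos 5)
Your proof is correct and matches the paper's own (implicit) argument: the corollary is stated there as an immediate consequence of Propositions~\ref{prop: eq_tree-cutwidth1} and~\ref{prop: eq_tree-cutwidth2}, exactly as you combine them. You also apply the paper's definition of domination in the right direction on both sides, so nothing further is needed.
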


The results of this section are summarized in Figure~\ref{fig: hierarchy_ext}.

\begin{figure}[htb]
 \begin{minipage}[c]{0.23\textwidth}
\includegraphics[width=\textwidth]{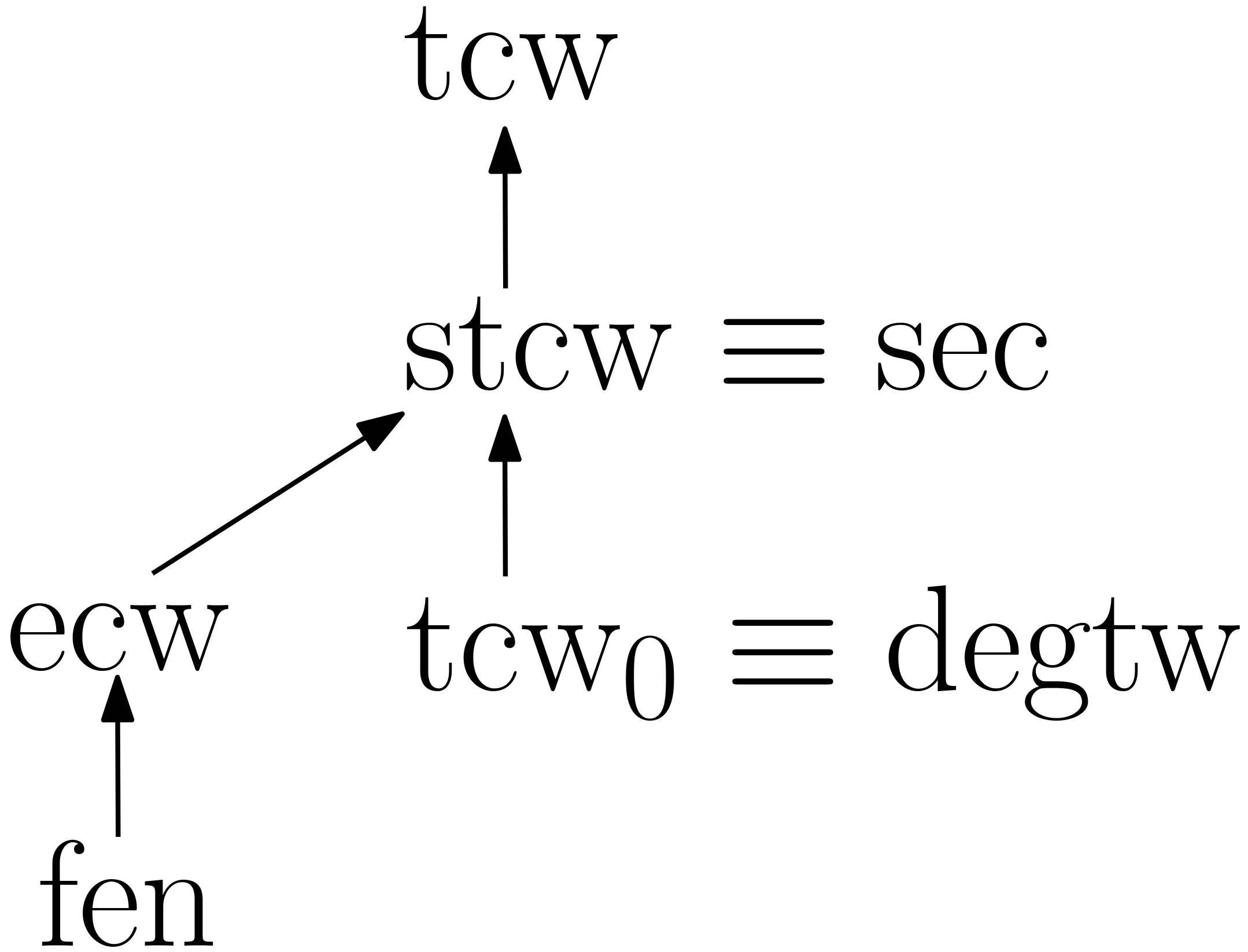}
  \end{minipage}\hfill
 \begin{minipage}[c]{0.7\textwidth}
\caption{Position of slim and 0-tree-cut width in the
hierarchy of edge-cut based parameters. 
An arrow from $p$ to $q$ represents the fact that $p$ is more restrictive than $q$, while asymptotic equivalence is depicted by~$\equiv$.} 
\label{fig: hierarchy_ext}
  \end{minipage}
\end{figure}

\section{Approximating Slim Tree-Cut Width}

In this section we show how to efficiently construct a tree-cut decomposition of a graph $G$ with slim width bounded by a cubic function of its optimal value $\stcw(G)$. As a starting point for our approximation, we use the following result of Kim, Oum, Paul, Sau and Thilikos:
\begin{theorem}[\hspace{-0.0001cm}\cite{KimOPST18}]
\label{thm: tcw_approx}
There exists an algorithm that, given a graph $G$ and $\omega \in \Nat$,
either outputs a tree-cut decomposition of $G$ with width at most $2\omega$ or correctly reports that no tree-cut decomposition of $G$ with width at most $\omega$ exists in $2^{\bigoh(\omega^2 \cdot log \omega)}\cdot n^2$ steps.
\end{theorem}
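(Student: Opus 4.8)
The plan is to adapt the recursive balanced-separator framework that underlies classical treewidth approximation (in the spirit of Robertson and Seymour) to the edge-cut regime governing tree-cut decompositions. The algorithm constructs a decomposition top-down: at each recursive call it is given a piece of $G$ together with a boundary set whose size is kept at most $2\omega$, and it either finds a balanced separation of bounded order along which to branch, or detects a highly connected obstruction certifying that $\tcw(G) > \omega$. The two quantities that the target width $2\omega$ must control --- the adhesion $\adh(t)$ and the torso-size $\tor(t)$ --- are handled by two conceptually different arguments, and keeping them separate is the organizing principle of the proof.

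First I would set up the separation subroutine. Since every adhesion in a tree-cut decomposition is literally the size of an edge cut, the right tool is Menger's theorem: a minimum edge cut separating two prescribed vertex sets, together with a certifying family of edge-disjoint paths, is computable in polynomial time by max-flow. Using this, for a current piece with boundary $\partial$ of size at most $2\omega$ I would search for an edge-separation $(C_1,C_2)$ of order at most $2\omega$ that is \emph{balanced}, i.e.\ that splits $\partial$ (or a suitable weight function on the piece) roughly evenly. The key existence lemma, proved via submodularity of the edge-cut function together with an uncrossing argument, states that if $G$ admits a tree-cut decomposition of width at most $\omega$ then any sufficiently large piece has such a balanced separation of order at most $2\omega$; uncrossing also lets me assume the chosen separation is nested with $\partial$, so that the boundaries handed to the recursive calls again have size at most $2\omega$ and the width budget is preserved.

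Second, and this is the genuinely tree-cut-specific part, I would control the torso-size. The width bounds not only the adhesions but also the size of the $3$-center $\tilde H_t$, obtained after suppressing vertices of degree at most two; bounding $\tor(t)$ amounts to bounding the number of children of each node whose adhesion is at least three, since every thin child is suppressed away while each bold child survives as a vertex of degree $\ge 3$ in the consolidated torso. I would show, again exploiting submodularity and the nestedness of the separations produced above, that only $\bigoh(\omega)$ children of any node can carry adhesion three or more, so that after suppression the torso-size remains $\bigoh(\omega)$; combined with the adhesion bound this certifies total width at most $2\omega$. When the search for a balanced separation of order at most $2\omega$ fails for some piece, that piece is well-linked of order exceeding $\omega$, which is an obstruction to a width-$\omega$ decomposition, so the algorithm correctly reports that none exists.

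The main obstacle, and the origin of the $2^{\bigoh(\omega^2\log\omega)}$ factor, is making the separation search both efficient and compatible with the two-part width measure. Finding a single minimum edge cut is cheap, but a cut that is simultaneously of order $\le 2\omega$, balanced, and leaves the torso-size controlled must be sought over the configurations of how the $\bigoh(\omega)$ boundary vertices and the $\bigoh(\omega)$ bold children interact; the number of such configurations is about $\omega^{\bigoh(\omega)} = 2^{\bigoh(\omega\log\omega)}$, and the quadratic exponent reflects tracking these interactions across the two $\bigoh(\omega)$-sized families, while the $n^2$ factor comes from the per-level flow computations over the recursion. Finally, the approximation factor $2$ is precisely the slack required so that a hypothetical width-$\omega$ decomposition still guarantees separations whose order survives the uncrossing steps without exceeding $2\omega$.
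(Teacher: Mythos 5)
First, note what you were asked to prove: this statement is not proved in the paper at all --- it is an imported result, quoted with citation from Kim, Oum, Paul, Sau and Thilikos~\cite{KimOPST18}, and the present paper uses it strictly as a black box for its own approximation algorithm for $\stcw$. So the only meaningful comparison is against the proof in~\cite{KimOPST18}, and against that benchmark your sketch has a fatal structural flaw rather than a fixable gap.

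The flaw is your ``key existence lemma'': it is simply false that every sufficiently large piece of a graph of tree-cut width at most $\omega$ admits a balanced edge separation of order at most $2\omega$, and this is exactly the point where the treewidth analogy breaks down for edge cuts. A node of a tree-cut decomposition may have \emph{unboundedly many} thin children, so constant tree-cut width does not force small balanced cuts. Concretely, the star $K_{1,n}$ has tree-cut width $1$ (each leaf in its own bag of adhesion $1$; all the corresponding torso vertices have degree $1$ and are suppressed in the $3$-center), yet every edge cut of order at most $2\omega$ leaves one side with $n-\bigoh(\omega)$ vertices; the windmill $W_r$ used in this very paper has tree-cut width $2$ and any vertex-balanced cut of it has order $\Omega(r)$. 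This breaks your algorithm in both directions. If balance is measured over vertices, the rejection branch is unsound: on the star the search for a balanced order-$2\omega$ cut fails, but the star is certainly not ``well-linked of order exceeding $\omega$'' --- in the vertex world a high-degree hub is itself a separator, whereas in the edge world a single vertex cannot be cut, so ``no small balanced cut $\Rightarrow$ well-linked'' fails precisely at high-degree vertices, and your algorithm would report $\tcw>\omega$ on a width-$1$ instance. If instead balance is measured only against the boundary $\partial$, the cuts exist but peel off $\bigoh(\omega)$ vertices at a time, the recursion makes no progress, and --- more fundamentally --- a single cut per step can never perform the grouping of many thin children under one parent that bounded torso-size requires. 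Your submodularity/uncrossing claims and the $2^{\bigoh(\omega^2\log\omega)}$ accounting are asserted rather than argued, but those are secondary: the top-down balanced-separator framework itself does not transfer to tree-cut width. The actual algorithm of~\cite{KimOPST18} is organized differently --- it refines a decomposition by locally splitting oversized parts using minimum edge-cut computations and bounded enumeration over small cores (the source of the $2^{\bigoh(\omega^2\log\omega)}$ factor), with correctness resting on a splitting lemma that, unlike yours, is compatible with nodes having many thin children.
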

As an observant reader might have already noticed, if $G$ has bounded slim tree-cut width, it imposes some restrictions on the structure of possible decompositions of $G$ of small (standard) tree-cut width. This fact enables us to construct an efficient approximation  for $\stcw(G)$.

\begin{theorem}
There exists an algorithm that, given a graph $G$ and $\omega \in \Nat$,
either outputs a tree-cut decomposition of $G$ with slim width at most $6(\omega+1)^3$ or correctly reports that no tree-cut decomposition of $G$ with slim width at most $\omega$ exists in $2^{\bigoh(\omega^2 \cdot log \omega)}\cdot n^4$ steps.
\end{theorem}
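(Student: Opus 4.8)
The plan is to combine the tree-cut width approximation of Kim, Oum, Paul, Sau and Thilikos (Theorem~\ref{thm: tcw_approx}) with the structural bridge between standard and slim tree-cut width furnished by Propositions~\ref{prop: eq_tree-cutwidth1} and~\ref{prop: eq_tree-cutwidth2}. The starting observation is that $\tcw(G)\le\stcw(G)$ by Corollary~\ref{cor:tcw012}, so whenever $\stcw(G)\le\omega$ the graph admits a tree-cut decomposition of standard width at most $\omega$. I would therefore first invoke Theorem~\ref{thm: tcw_approx} on $(G,\omega)$: if it reports that no tree-cut decomposition of width at most $\omega$ exists, then no decomposition of slim width at most $\omega$ can exist either and we report this; otherwise we obtain a tree-cut decomposition of width at most $2\omega$.

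The second step converts this decomposition into a \emph{very nice} one of width at most $2\omega$ via Corollary~\ref{cor: verynice}, in quartic time and without increasing the width. The heart of the argument is to bound its slim width under the assumption $\stcw(G)\le\omega$. Applying Proposition~\ref{prop: eq_tree-cutwidth1} with $k=2\omega$, every node $t$ satisfies $|B_t^{(2)}|\le k\cdot\supwidthshort(G)$, while Proposition~\ref{prop: eq_tree-cutwidth2} turns the assumption $\stcw(G)\le\omega$ into $\supwidthshort(G)\le 3(\omega+1)^2$, so $|B_t^{(2)}|\le 6\omega(\omega+1)^2$. Feeding this into the elementary estimate $\tor_2(t)\le 1+|X_t|+|A_t|+|B_t^{(2)}|$ preceding Corollary~\ref{cor: thin2lowerbound}, and recalling $|X_t|\le 2\omega$ and $|A_t|\le 4\omega+1$ for a nice decomposition of width $2\omega$, a direct calculation gives $\tor_2(t)\le 6\omega^3+12\omega^2+12\omega+2\le 6(\omega+1)^3$. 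As the adhesion is at most $2\omega\le 6(\omega+1)^3$, the slim width of the constructed decomposition is at most $6(\omega+1)^3$.

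This assembles into the final algorithm: after building the very nice decomposition I compute its slim width in polynomial time. If it is at most $6(\omega+1)^3$ I output it---always a valid answer---and if it exceeds $6(\omega+1)^3$, then by the contrapositive of the previous paragraph we must have $\stcw(G)>\omega$, so I correctly report that no decomposition of slim width at most $\omega$ exists. The running time is dominated by Theorem~\ref{thm: tcw_approx}, contributing $2^{\bigoh(\omega^2\log\omega)}\cdot n^2$, plus the quartic-time conversion of Corollary~\ref{cor: verynice}; together these give $2^{\bigoh(\omega^2\log\omega)}\cdot n^4$.

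I expect the main difficulty to be conceptual rather than computational. A tree-cut decomposition of small standard width need not have small slim width at all---windmills are the canonical witness, having $\tcw=2$ but unbounded $\stcw$---so the entire approximation hinges on routing the bound through super edge-cut width via the two propositions of the previous section. The delicate point is that the implication ``$\stcw(G)\le\omega$ forces the \emph{constructed} decomposition to have slim width at most $6(\omega+1)^3$'' is exactly what validates both branches of the final decision, letting the algorithm decide correctly without ever computing $\stcw(G)$ itself.
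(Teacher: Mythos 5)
Your proposal is correct and follows essentially the same route as the paper's proof: run the approximation of Theorem~\ref{thm: tcw_approx}, convert to a very nice decomposition via Corollary~\ref{cor: verynice}, and use Propositions~\ref{prop: eq_tree-cutwidth1} and~\ref{prop: eq_tree-cutwidth2} to argue that $\stcw(G)\le\omega$ forces the constructed decomposition to have slim width at most $6(\omega+1)^3$, reporting failure otherwise. The only cosmetic difference is that you test the slim width of the whole decomposition against $6(\omega+1)^3$, whereas the paper tests $|B_t^{(2)}|$ against $6\omega(\omega+1)^2$ at each node; the two checks are logically equivalent given your bound on $\tor_2(t)$.
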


\begin{proof}
Given a graph $G$ and $\omega \in \Nat$, let us run the algorithm from Theorem \ref{thm: tcw_approx}. If it reports that $\tcw(G)>\omega$, we may conclude that $\stcw(G)>\omega$ by Corollary \ref{cor:tcw012}. In case the algorithm returns a tree-cut decomposition $(T', \XXX')$ of width at most $2 \omega$, we invoke Corollary \ref{cor: verynice} to transform this decomposition into a very nice decomposition $(T, \XXX)$ of the same width in at most quartic time. By Proposition \ref{prop: eq_tree-cutwidth1}, we have that $|B_t^{(2)}|\le 2\omega \cdot \supwidthshort(G)$ for each node $t$ of $T$. If for some node $t$ the size of $B_t^{(2)}$ exceeds $6\omega \cdot (\omega+1)^2$, then $\supwidthshort(G)>3(\omega+1)^2$ and by Proposition \ref{prop: eq_tree-cutwidth2} we may correctly report that $\stcw(G)>\omega$. Otherwise, $\tor_2(t)\le 1+|X_t|+|A_t|+|B_t^{(2)}|\le 1+2\omega +(4 \omega + 1)+6\omega \cdot (\omega+1)^2 \le 6(\omega+1)^3$ for any node $t$ of $T$. Hence, the slim width of $(T, \XXX)$ is at most $6(\omega+1)^3$.
\end{proof}

\section{Discussion of Algorithmic Applications}
Having established its structural properties, we now turn to the algorithmic aspects of slim tree-cut width. Here, Corollary~\ref{cor:equiv} shows that instead of using a tree-cut decomposition of the input graph $G$ to design fixed-parameter algorithms---as was done in past dynamic programming algorithms that utilized tree-cut width---we can perform dynamic programming along a spanning tree $T$ of a supergraph $Q$ of $G$. Both $Q$ and $T$ can be computed from $G$ in a pre-processing stage by using Proposition~\ref{prop: eq_tree-cutwidth2}, and using a spanning tree instead of a tree-cut decomposition typically leads to significantly more concise (and conceptually cleaner) algorithms. 

The cost for this simplification is the quadratic gap between the widths of these decompositions. We note that this situation is somewhat analogous to how one still typically uses clique-width~\cite{CourcelleMakowskyRotics00} as a general and easy-to-use parameterization for various problems (especially when aiming for instances with higher edge-densities), even though rank-width~\cite{Oum05} and Boolean-width~\cite{BuiXuanTV11} are asymptotically equivalent parameterizations which have been shown to yield more efficient algorithms~\cite{GanianHlineny10}---there, the gap is even exponential.

Recall that 
a number of problems which remain \W{1}-hard w.r.t.\ tree-cut width have recently been shown to be fixed-parameter tractable when parameterized by edge-cut width~\cite{ECW2022,GanianKorchemna21}, via explicit dynamic programming algorithms which proceed along the spanning tree of the input graph. While the functional gap between edge-cut width and \supwidth\ (and, analogously, slim tree-cut width) may be arbitrarily large, it is not difficult to see that each of the algorithms provided in those papers can be straightforwardly lifted to fixed-parameter algorithms w.r.t.\ \supwidth. Indeed, the only amendment one needs to make is to deal with the presence of ``ghost'' edges and vertices which occur in the spanning tree but not in the graph, and the computation of the records in these algorithms can easily deal with such vertices and edges. 

To provide a concrete illustration of how this can be done, let us revisit the dynamic programming algorithm for the \textsc{Edge Disjoint Paths} problem parameterized by edge-cut width~\cite[Theorem 2]{ECW2022}. No change is needed to the records. When the algorithm attempts to compute the set of ``valid records'' for a vertex $v$ from the sets of valid records for some of its children $v_1,\dots,v_\psi$ in the spanning tree, the algorithm performs a branching step in which it considers all possible ways the paths can be routed between the subtrees rooted at these children (See the ``If v is an internal node'' paragraph in the proof). At this branching step, we simply discard all routings which use edges that are not present in $G$. The situation is no more complicated for the other considered problems---in essentially all cases, the change simply boils down to ignoring the vertices and edges which do not exist in $G$.

Hence, we obtain:

\begin{corollary}[Theorems 2-6 in~\cite{ECW2022}, Theorems 6 and 14 in~\cite{GanianKorchemna21}]
\textsc{List Coloring}, \textsc{Precoloring Extension}, \textsc{Boolean Constraint Satisfaction}, \textsc{Edge Disjoint Paths}, \textsc{Bayesian Network Structure Learning}, \textsc{Polytree Learning}, \textsc{Minimum Changeover Cost Arborescence}, and \textsc{Maximum Stable Roommates with Ties and Incomplete Lists} are fixed-parameter tractable w.r.t.\ slim tree-cut width.
\end{corollary}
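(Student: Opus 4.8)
The plan is to leverage the asymptotic equivalence between $\stcw$ and $\supwidthshort$ established in Corollary~\ref{cor:equiv}, together with the constructive nature of Proposition~\ref{prop: eq_tree-cutwidth2}. The key observation is that each of the cited algorithms for \textsc{List Coloring}, \textsc{Edge Disjoint Paths}, and the remaining problems is a dynamic program that traverses a spanning tree $T$ of the \emph{input} graph $G$, where the edge-cut width $\widthshort(G,T)$ bounds the amount of information that must be tracked across any single tree edge. First I would note that, by Proposition~\ref{prop: eq_tree-cutwidth2}, given any graph $G$ of bounded $\stcw(G)$ one can compute in cubic time a supergraph $Q \supseteq G$ and a spanning tree $T^*$ of $Q$ with $\widthshort(Q, T^*) \le 3(\stcw(G)+1)^2$, so the spanning-tree decomposition needed by these algorithms exists and is constructible—merely over $Q$ rather than over $G$ itself.

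The heart of the argument is then to verify that each dynamic program can be run on the pair $(Q, T^*)$ while still solving the problem on $G$. I would treat the edges and vertices of $Q \setminus G$ as \emph{ghost} objects: they participate in defining the spanning-tree structure (and hence the local feedback edge sets that bound the record size), but they must be forbidden from carrying any genuine solution content. Concretely, I would walk through the \textsc{Edge Disjoint Paths} algorithm of \cite[Theorem 2]{ECW2022} as the representative case. Its records are indexed by how the routed paths cross the cut at each vertex, and these records are unchanged; the only modification occurs at the branching step for an internal node $v$, where the algorithm enumerates all ways of connecting partial paths across the subtrees of $v$'s children. There I would simply discard every candidate routing that uses a ghost edge. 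Since ghost edges are never available to a solution, this restriction loses no valid solution and introduces no spurious one, so correctness is preserved; the running time is unaffected up to the (bounded) increase in record size coming from the larger but still-bounded edge-cut width of $(Q,T^*)$.

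For the remaining seven problems I would argue uniformly that the same recipe applies: in each case the record content is determined by the true graph $G$, while the traversal order and the cut-size bound come from $T^*$. The only amendment needed is to ignore ghost vertices and edges wherever the algorithm would otherwise consult the graph structure—e.g.\ treating ghost vertices as having no color lists or constraints in the coloring and CSP problems, and as absent when reading off adjacencies or admissible arcs in the learning and arborescence problems. Because none of these problems assign semantic meaning to edges or vertices outside $G$, this substitution is safe. The main obstacle I anticipate is not any single algorithm but rather making the ``ignore the ghosts'' claim airtight across all of them without reproducing each proof in full; the honest resolution is to observe that in every case the dependence on the graph enters only through local adjacency queries and cut crossings at a node, both of which are well-defined relative to $G$ even when the decomposition lives over $Q$, so a single generic remark suffices and the per-problem changes are purely mechanical.
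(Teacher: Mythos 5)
Your proposal is correct and follows essentially the same route as the paper: it invokes Proposition~\ref{prop: eq_tree-cutwidth2} to construct the supergraph $Q$ and spanning tree $T^*$, lifts the edge-cut-width dynamic programs to $(Q,T^*)$, and handles the ``ghost'' edges and vertices exactly as the paper does, including the same representative treatment of \textsc{Edge Disjoint Paths} where routings using non-edges of $G$ are discarded at the branching step. No substantive difference from the paper's argument.
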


Last but not least, given the ease with transferring dynamic programming algorithms from edge-cut width to slim tree-cut width, an inquisitive reader might be wondering whether it is not possible to formally prove that \emph{every} problem which is \FPT\ w.r.t.\ former is also \FPT\ w.r.t.\ the latter. That is, however, not true in general: one can construct entirely artificial problems which do not behave in this way. 

To illustrate this on a high level, let us consider an arbitrary graph problem $\PP$ which remains \NP-hard even on trees (as an example, the \textsc{Firefighter} problem~\cite{FinbowKMR07}) and can be solved on general $n$-vertex graphs in time $\tau(n)$. Moreover, let $\iota(n)$ denote the time required to compute the slim tree-cut width of a graph $G$ via an exhaustive brute force search, and let $\psi$ be a function which dominates both $\tau$ and $\iota$.
We now define an artificial new problem $\PP'$ as follows: 
\begin{itemize}
\item every $n$-vertex graph $G$ such that $\psi(\ecw(G))\leq n$
is a YES-instance, and otherwise
\item $G$ is a YES-instance if and only if $G$ is a YES-instance of \textsc{Firefighter}.
\end{itemize}

Then $\PP'$ is \FPT\ parameterized by edge-cut width. 
Indeed, given an instance $(G,k)$ of $\PP'$, one can attempt to run a brute-force search to determine the edge-cut width (which is promised to be at most $k$) with a time-out of $\psi(\psi(k))$. If the algorithm times out, this implies that $\psi(\ecw(G))\leq n$ and we correctly output ``Yes''. If not, we proceed by calling a brute-force algorithm to solve \textsc{Firefighter} on $G$, and this must once again complete in time at most $\psi(\psi(k))$. On the other hand, $\PP'$ remains \NP-hard even on graph classes with constant $\stcw(G)$---consider, for instance, the class of all graphs with two connected components, one of which ($C_1$) is a tree and the other ($C_2$) a graph from the class with constant slim tree-cut width but unbounded edge-cut width (one such class is depicted in Figure~2 of~\cite{ECW2022}). On some inputs from this class, $\PP'$ will ask for a solution to the \textsc{Firefighter} problem (which is \NP-hard on trees) but the parameter $\stcw(G)$ will remain constant.

\section{Conclusion}

The contribution of this work is mainly conceptual: it provides a possible resolution to the search for an alternative to treewidth for edge cuts which is both structurally sound and exhibits the expected (and desired) algorithmic properties. Slim tree-cut width can be viewed as the ``missing link'' which explains why the problems depicted in Table~\ref{tab:problems} admit fixed-parameter algorithms that exploit dynamic programming along small edge cuts w.r.t.\ both edge-cut width (as a generalization of the feedback edge number) and treewidth plus maximum degree. We firmly believe that there are many more problems of interest where edge-cut based parameters may help push the frontiers of tractability. On this front, the alternative characterization via the edge-cut width of a supergraph provides decompositions which are better suited for dynamic programming than tree-cut decompositions.

The problem of computing optimal decompositions for slim tree-cut width remains, similarly as in the case of tree-cut width~\cite{KimOPST18}, as a prominent open question. Moreover, we believe that the ideas used to obtain a 2-approximation algorithm for tree-cut width could also be used to obtain an improved constant-factor approximation for slim tree-cut width.

\bibliography{stcw.bib}

\begin{thebibliography}{10}

\bibitem{AdilGRSZ18}
Deeksha Adil, Sushmita Gupta, Sanjukta Roy, Saket Saurabh, and Meirav Zehavi.
\newblock Parameterized algorithms for stable matching with ties and incomplete
  lists.
\newblock {\em Theor. Comput. Sci.}, 723:1--10, 2018.
\newblock \href {https://doi.org/10.1016/j.tcs.2018.03.015}
  {\path{doi:10.1016/j.tcs.2018.03.015}}.

\bibitem{BergougnouxEGOR21}
Benjamin Bergougnoux, Eduard Eiben, Robert Ganian, Sebastian Ordyniak, and
  M.~S. Ramanujan.
\newblock Towards a polynomial kernel for directed feedback vertex set.
\newblock {\em Algorithmica}, 83(5):1201--1221, 2021.
\newblock \href {https://doi.org/10.1007/s00453-020-00777-5}
  {\path{doi:10.1007/s00453-020-00777-5}}.

\bibitem{BodlaenderJK13}
Hans~L. Bodlaender, Bart M.~P. Jansen, and Stefan Kratsch.
\newblock Preprocessing for treewidth: {A} combinatorial analysis through
  kernelization.
\newblock {\em {SIAM} J. Discret. Math.}, 27(4):2108--2142, 2013.

\bibitem{ECW2022}
Cornelius Brand, Esra Ceylan, Christian Hatschka, Robert Ganian, and Viktoriia
  Korchemna.
\newblock Edge-cut width: An algorithmically driven analogue of treewidth based
  on edge cuts.
\newblock In {\em Graph-Theoretic Concepts in Computer Science - 48th
  International Workshop, {WG} 2022}, Lecture Notes in Computer Science.
  Springer, 2022.
\newblock to appear.
\newblock URL: \url{https://arxiv.org/abs/2202.13661}.

\bibitem{BredereckHKN19}
Robert Bredereck, Klaus Heeger, Dusan Knop, and Rolf Niedermeier.
\newblock Parameterized complexity of stable roommates with ties and incomplete
  lists through the lens of graph parameters.
\newblock In Pinyan Lu and Guochuan Zhang, editors, {\em 30th International
  Symposium on Algorithms and Computation, {ISAAC} 2019, December 8-11, 2019,
  Shanghai University of Finance and Economics, Shanghai, China}, volume 149 of
  {\em LIPIcs}, pages 44:1--44:14. Schloss Dagstuhl - Leibniz-Zentrum f{\"{u}}r
  Informatik, 2019.

\bibitem{BuiXuanTV11}
Binh{-}Minh Bui{-}Xuan, Jan~Arne Telle, and Martin Vatshelle.
\newblock Boolean-width of graphs.
\newblock {\em Theor. Comput. Sci.}, 412(39):5187--5204, 2011.
\newblock \href {https://doi.org/10.1016/j.tcs.2011.05.022}
  {\path{doi:10.1016/j.tcs.2011.05.022}}.

\bibitem{CourcelleMakowskyRotics00}
B.~Courcelle, J.~A. Makowsky, and U.~Rotics.
\newblock Linear time solvable optimization problems on graphs of bounded
  clique-width.
\newblock {\em Theory Comput. Syst.}, 33(2):125--150, 2000.

\bibitem{CyganFKLMPPS15}
Marek Cygan, Fedor~V. Fomin, Lukasz Kowalik, Daniel Lokshtanov, D{\'{a}}niel
  Marx, Marcin Pilipczuk, Michal Pilipczuk, and Saket Saurabh.
\newblock {\em Parameterized Algorithms}.
\newblock Springer, 2015.

\bibitem{Diestel12}
Reinhard Diestel.
\newblock {\em Graph Theory, 4th Edition}, volume 173 of {\em Graduate texts in
  mathematics}.
\newblock Springer, 2012.

\bibitem{DowneyFellows13}
Rodney~G. Downey and Michael~R. Fellows.
\newblock {\em Fundamentals of Parameterized Complexity}.
\newblock Texts in Computer Science. Springer Verlag, 2013.

\bibitem{FellowsLokshtanovMisraRS08}
Michael~R. Fellows, Daniel Lokshtanov, Neeldhara Misra, Frances~A. Rosamond,
  and Saket Saurabh.
\newblock Graph layout problems parameterized by vertex cover.
\newblock In {\em ISAAC}, Lecture Notes in Computer Science, pages 294--305.
  Springer, 2008.

\bibitem{FinbowKMR07}
Stephen Finbow, Andrew~D. King, Gary MacGillivray, and Romeo Rizzi.
\newblock The firefighter problem for graphs of maximum degree three.
\newblock {\em Discret. Math.}, 307(16):2094--2105, 2007.
\newblock \href {https://doi.org/10.1016/j.disc.2005.12.053}
  {\path{doi:10.1016/j.disc.2005.12.053}}.

\bibitem{FleszarMS18}
Krzysztof Fleszar, Matthias Mnich, and Joachim Spoerhase.
\newblock New algorithms for maximum disjoint paths based on tree-likeness.
\newblock {\em Math. Program.}, 171(1-2):433--461, 2018.

\bibitem{Ganian15}
Robert Ganian.
\newblock Improving vertex cover as a graph parameter.
\newblock {\em Discret. Math. Theor. Comput. Sci.}, 17(2):77--100, 2015.
\newblock URL: \url{http://dmtcs.episciences.org/2136}.

\bibitem{GanianHlineny10}
Robert Ganian and Petr~Hlin\v en{\'y}.
\newblock On parse trees and {M}yhill-{N}erode-type tools for handling graphs
  of bounded rank-width.
\newblock {\em Discr. Appl. Math.}, 158(7):851--867, 2010.

\bibitem{Ganian0S15}
Robert Ganian, Eun~Jung Kim, and Stefan Szeider.
\newblock Algorithmic applications of tree-cut width.
\newblock In Giuseppe~F. Italiano, Giovanni Pighizzini, and Donald Sannella,
  editors, {\em Mathematical Foundations of Computer Science 2015 - 40th
  International Symposium, {MFCS} 2015, Milan, Italy, August 24-28, 2015,
  Proceedings, Part {II}}, volume 9235 of {\em Lecture Notes in Computer
  Science}, pages 348--360. Springer, 2015.
\newblock to appear in the Siam Journal on Discrete Mathematics.
\newblock URL: \url{https://arxiv.org/abs/2206.00752}.

\bibitem{GanianKO21}
Robert Ganian, Fabian Klute, and Sebastian Ordyniak.
\newblock On structural parameterizations of the bounded-degree vertex deletion
  problem.
\newblock {\em Algorithmica}, 83(1):297--336, 2021.

\bibitem{GanianKorchemna21}
Robert Ganian and Viktoriia Korchemna.
\newblock The complexity of bayesian network learning: Revisiting the
  superstructure.
\newblock In {\em Proceedings of NeurIPS 2021, the Thirty-fifth Conference on
  Neural Information Processing Systems}, 2021.
\newblock to appear.

\bibitem{GanianO18}
Robert Ganian and Sebastian Ordyniak.
\newblock The complexity landscape of decompositional parameters for {ILP}.
\newblock {\em Artif. Intell.}, 257:61--71, 2018.

\bibitem{GanianO21}
Robert Ganian and Sebastian Ordyniak.
\newblock The power of cut-based parameters for computing edge-disjoint paths.
\newblock {\em Algorithmica}, 83(2):726--752, 2021.

\bibitem{GanianOR21}
Robert Ganian, Sebastian Ordyniak, and M.~S. Ramanujan.
\newblock On structural parameterizations of the edge disjoint paths problem.
\newblock {\em Algorithmica}, 83(6):1605--1637, 2021.

\bibitem{GiannopoulouKRT19}
Archontia~C. Giannopoulou, O{-}joung Kwon, Jean{-}Florent Raymond, and
  Dimitrios~M. Thilikos.
\newblock Lean tree-cut decompositions: Obstructions and algorithms.
\newblock In Rolf Niedermeier and Christophe Paul, editors, {\em 36th
  International Symposium on Theoretical Aspects of Computer Science, {STACS}
  2019, March 13-16, 2019, Berlin, Germany}, volume 126 of {\em LIPIcs}, pages
  32:1--32:14. Schloss Dagstuhl - Leibniz-Zentrum f{\"{u}}r Informatik, 2019.

\bibitem{GiannopoulouKRT21}
Archontia~C. Giannopoulou, O{-}joung Kwon, Jean{-}Florent Raymond, and
  Dimitrios~M. Thilikos.
\newblock A menger-like property of tree-cut width.
\newblock {\em J. Comb. Theory, Ser. {B}}, 148:1--22, 2021.
\newblock \href {https://doi.org/10.1016/j.jctb.2020.12.005}
  {\path{doi:10.1016/j.jctb.2020.12.005}}.

\bibitem{GozupekOPSS17}
Didem G{\"{o}}z{\"{u}}pek, Sibel {\"{O}}zkan, Christophe Paul, Ignasi Sau, and
  Mordechai Shalom.
\newblock Parameterized complexity of the {MINCCA} problem on graphs of bounded
  decomposability.
\newblock {\em Theor. Comput. Sci.}, 690:91--103, 2017.

\bibitem{GozupekSSZ16}
Didem G{\"{o}}z{\"{u}}pek, Hadas Shachnai, Mordechai Shalom, and Shmuel Zaks.
\newblock Constructing minimum changeover cost arborescenses in bounded
  treewidth graphs.
\newblock {\em Theor. Comput. Sci.}, 621:22--36, 2016.
\newblock \href {https://doi.org/10.1016/j.tcs.2016.01.022}
  {\path{doi:10.1016/j.tcs.2016.01.022}}.

\bibitem{GutinJW16}
Gregory~Z. Gutin, Mark Jones, and Magnus Wahlstr{\"{o}}m.
\newblock The mixed chinese postman problem parameterized by pathwidth and
  treedepth.
\newblock {\em {SIAM} J. Discret. Math.}, 30(4):2177--2205, 2016.

\bibitem{JansenB13}
Bart M.~P. Jansen and Hans~L. Bodlaender.
\newblock Vertex cover kernelization revisited - upper and lower bounds for a
  refined parameter.
\newblock {\em Theory Comput. Syst.}, 53(2):263--299, 2013.
\newblock \href {https://doi.org/10.1007/s00224-012-9393-4}
  {\path{doi:10.1007/s00224-012-9393-4}}.

\bibitem{KimOPST18}
Eun~Jung Kim, Sang{-}il Oum, Christophe Paul, Ignasi Sau, and Dimitrios~M.
  Thilikos.
\newblock An {FPT} 2-approximation for tree-cut decomposition.
\newblock {\em Algorithmica}, 80(1):116--135, 2018.

\bibitem{edgetreewidth}
Lo{\"{\i}}c Magne, Christophe Paul, Abhijat Sharma, and Dimitrios~M. Thilikos.
\newblock Edge-treewidth: Algorithmic and combinatorial properties.
\newblock {\em CoRR}, abs/2112.07524, 2021.

\bibitem{MarxWollan14}
D{\'a}niel Marx and Paul Wollan.
\newblock Immersions in highly edge connected graphs.
\newblock {\em SIAM J. Discrete Math.}, 28(1):503--520, 2014.

\bibitem{NederlofPSW20}
Jesper Nederlof, Michal Pilipczuk, C{\'{e}}line M.~F. Swennenhuis, and Karol
  Wegrzycki.
\newblock Hamiltonian cycle parameterized by treedepth in single exponential
  time and polynomial space.
\newblock In Isolde Adler and Haiko M{\"{u}}ller, editors, {\em Graph-Theoretic
  Concepts in Computer Science - 46th International Workshop, {WG} 2020, Leeds,
  UK, June 24-26, 2020, Revised Selected Papers}, volume 12301 of {\em Lecture
  Notes in Computer Science}, pages 27--39. Springer, 2020.

\bibitem{NesetrilOssonademendez12}
Jaroslav Nesetril and Patrice~Ossona de~Mendez.
\newblock {\em Sparsity - Graphs, Structures, and Algorithms}, volume~28 of
  {\em Algorithms and Combinatorics}.
\newblock Springer, 2012.

\bibitem{OrdyniakS13}
Sebastian Ordyniak and Stefan Szeider.
\newblock Parameterized complexity results for exact bayesian network structure
  learning.
\newblock {\em J. Artif. Intell. Res.}, 46:263--302, 2013.
\newblock \href {https://doi.org/10.1613/jair.3744}
  {\path{doi:10.1613/jair.3744}}.

\bibitem{Oum05}
{Sang-il} Oum.
\newblock Approximating rank-width and clique-width quickly.
\newblock In {\em Graph-Theoretic Concepts in Computer Science, 31st
  International Workshop, WG 2005, Metz, France, June 23-25, 2005, Revised
  Selected Papers}, volume 3787 of {\em Lecture Notes in Computer Science},
  pages 49--58. Springer Verlag, 2005.

\bibitem{RobertsonSeymour86}
Neil Robertson and P.~D. Seymour.
\newblock Graph minors. {II}. {A}lgorithmic aspects of tree-width.
\newblock {\em J. Algorithms}, 7(3):309--322, 1986.

\bibitem{SamerSzeider10a}
Marko Samer and Stefan Szeider.
\newblock Constraint satisfaction with bounded treewidth revisited.
\newblock {\em J. of Computer and System Sciences}, 76(2):103--114, 2010.

\bibitem{Wollan15}
Paul Wollan.
\newblock The structure of graphs not admitting a fixed immersion.
\newblock {\em J. Comb. Theory, Ser. {B}}, 110:47--66, 2015.

\end{thebibliography}
\end{document}



